\newcommand{\intgr}{\mathfrak{I}}
\newcommand\civtex{\protect\mathpalette{\protect\independenT}{\perp}}
\def\independenT#1#2{\mathrel{\rlap{$#1#2$}\mkern3mu{#1#2}}}
\newtheorem{theorem}{Theorem}
\newtheorem{corollary}[theorem]{Corollary}
\newtheorem{proposition}[theorem]{Proposition}
\newtheorem{lemma}[theorem]{Lemma}
\newtheorem{definition}[theorem]{Definition}
\newtheorem{example}[theorem]{Example}
\newcommand{\indm}[2]{\ensuremath{{\mathfrak I}_{\kern-1pt\scriptstyle#1}({\mathcal
#2})}} 
\newcommand{\ind}{\mbox{$\perp \kern-5.5pt \perp$}}
\newcommand{\uned}{\hbox{\kern3pt\raise2.5pt\vbox{\hrule
width9pt height 0.3pt}\kern3pt}}
\newcommand{\dashed}{\hbox{\kern3.05pt\raise2.5pt\vbox{\hrule
width1.7pt height 0.3pt}\kern1.8pt\raise2.5pt\vbox{\hrule
width1.7pt height 0.3pt}\kern1.8pt\raise2.5pt\vbox{\hrule
width1.7pt height 0.3pt}\kern1.8pt\raise2.5pt\vbox{\hrule
width1.7pt height 0.3pt}\kern3.05pt}}
\newcommand{\lhead}{\ensuremath{\prec}}
\newcommand{\head}{\ensuremath{\succ}}
\newcommand{\pedg}[2]{\ensuremath{{\kern0.5pt
\scriptstyle{\ifthenelse{\equal{\head}{#1}}{\lhead\kern0.5pt}{#1\kern0.5pt}}\joinrel\relbar
\negthinspace\relbar\joinrel{\kern0.5pt #2}\kern0.5pt}}}
\newcommand{\pdots}{\hbox{\kern2.5pt\raise1.5pt\hbox{\ensuremath{\ldots}}\ke
rn2.5pt}}  
\def\independenT#1#2{\mathrel{\rlap{$#1#2$}\mkern3mu{#1#2}}}
\DeclareMathOperator{\dis}{dis}
\DeclareMathOperator{\pavtex}{pa}
\DeclareMathOperator{\ndvtex}{nd}
\DeclareMathOperator{\devtex}{de}
\DeclareMathOperator{\chvtex}{ch}
\DeclareMathOperator{\anvtex}{an}
\DeclareMathOperator{\mbvtex}{mb}
\DeclareMathOperator{\pre}{pre}
\DeclareMathOperator{\Dovtex}{do}
\DeclareMathOperator{\fmvtex}{fam}
\def\textbff{\emph}
\newcommand{\dotcup}{\mathbin{\dot\cup}}
\newcommand{\cmid}{\,|\,}
\newcommand{\G}{{\mathscr G}}
\begin{document}

\begin{frontmatter}

\title{Nested Markov properties for\\
acyclic directed mixed graphs}
\runtitle{Nested Markov properties}
%


\begin{aug}
\author{\fnms{Thomas S.} \snm{Richardson}\thanksref{m1}
\ead[label=e1]{tsr@stat.washington.edu}
}
\author{\fnms{Robin J.} \snm{Evans}\thanksref{m2}
\ead[label=e2]{evans@stats.ox.ac.uk}
}
\author{\fnms{James M.} \snm{Robins}\thanksref{m3}
\ead[label=e3]{robins@hsph.harvard.edu}
}
\and
\author{\fnms{Ilya} \snm{Shpitser}\thanksref{m4}
\ead[label=e4]{ilyas@cs.jhu.edu}
}
\runauthor{Richardson, Evans, Robins and Shpitser}

\affiliation{
University of Washington\thanksmark{m1},
University of Oxford\thanksmark{m2},
Harvard University\thanksmark{m3},
Johns Hopkins University\thanksmark{m4}
}

\address{
Department of Statistics,\\
University of Washington\\
B313 Padelford Hall\\
Northeast Stevens Way\\
Seattle, WA 98195\\
\printead{e1}
}

\address{
Department of Statistics,\\
University of Oxford\\
24-29 St Giles'\\
Oxford OX1 3LB, UK\\
\printead{e2}
}

\address{
Department of Biostatistics,\\
Harvard T.H. Chan School of Public Health\\
Harvard University\\
677 Huntington Avenue\\
Kresge Building\\
Boston, Massachusetts 02115\\
\printead{e3}
}

\address{
Department of Computer Science,\\
Johns Hopkins University\\
Baltimore, MD 21218\\
\printead{e4}
}
\end{aug}




\maketitle

\begin{abstract}
Conditional independence models associated with directed acyclic graphs
(DAGs) may be characterized in at least three different ways: via a factorization,
the global Markov property (given by the d-separation criterion), and the
local Markov property. Marginals of DAG models also imply equality constraints
that are not conditional independences; the well-known ``Verma constraint''
is an example. Constraints of this type are used for testing edges, and
in a computationally efficient marginalization scheme via variable elimination.

We show that equality constraints like the ``Verma constraint'' can be
viewed as conditional independences in kernel objects obtained from joint
distributions via a fixing operation that generalizes conditioning and
marginalization. We use these constraints to define, via ordered local
and global Markov properties, and a factorization, a graphical model associated
with acyclic directed mixed graphs (ADMGs). We prove that marginal distributions
of DAG models lie in this model, and that a set of these constraints given
by Tian provides an alternative definition of the model. Finally, we show
that the fixing operation used to define the model leads to a particularly
simple characterization of identifiable causal effects in hidden variable
causal DAG models.
\end{abstract}

\begin{keyword}[class=MSC]
\kwd[Primary ]{62H99}
\kwd[; secondary ]{60E05}
\end{keyword}
\begin{keyword} 
\kwd{graphical models}
\kwd{hidden variable models}
\kwd{conditional independence}
\kwd{causal inference}
\end{keyword}

\end{frontmatter}


\section{Introduction}

Graphical models provide a principled way to take advantage of independence
constraints for probabilistic modeling, learning and inference, while giving
an intuitive graphical description of qualitative features useful for these
tasks. A popular graphical model represents a joint distribution by means
of a directed acyclic graph (DAG), where each vertex in the graph corresponds
to a random variable. The popularity of DAG models, also known as Bayesian
network models, stems from their well-understood theory and from the fact
that they admit an intuitive causal interpretation (under the assumption
that there are no unmeasured common causes; see
\citet{spirtes93causation}). An arrow from a variable $A$ to a variable
$B$ in a DAG model can be interpreted, in a way that can be made precise,
to mean that $A$ is a ``direct cause'' of $B$.

Starting from a causally interpreted DAG, the consequences of intervention
in the system under study can be understood by modifying the graph via
removing certain edges, and modifying the corresponding joint probability
distribution via reweighting
(\citet{strotz:60}, \citet{spirtes93causation}, \citet{pearl00causality}). For example, the
DAG in Figure~\ref{fig:mut}(i) represents distributions that factorize
as
\begin{align*}
p(x_{0}, x_{1}, x_{2}, x_{3},
x_{4}) = p(x_{0}) p(x_{1}) p(x_{2}
\cmid x_{0}, x_{1}) p(x_{3} \cmid
x_{1}, x_{2}) p(x_{4} \cmid x_{0},
x_{3}).
\end{align*}
If the model is interpreted causally, an experiment to externally set the
value of $X_{3}$ will break the dependence of $X_{3}$ on $X_{1}$ and
$X_{2}$; however, the dependence of $X_{4}$ upon $X_{3}$ will be preserved;
see Figure~\ref{fig:mut}(ii). This is represented graphically by severing
incoming edges to $3$, an operation some authors call ``mutilation,'' and
probabilistically by removing the factor
$p(x_{3} \cmid x_{1}, x_{2})$ from the factorization of $p$ to yield a
new distribution:
%
\begin{align}
p^{*}(x_{0}, x_{1}, x_{2},
x_{4} \cmid x_{3}) = p(x_{0})
p(x_{1}) p(x_{2} \cmid x_{0}, x_{1})
p(x_{4} \cmid x_{0}, x_{3}). \label{eqn:mut}
\end{align}
In certain contexts, we will later call these operations ``fixing.''

The functional in (\ref{eqn:mut}) is sometimes called the g-formula
\citep{robins86new}, the manipulated distribution
\citep{spirtes93causation}, or the truncated factorization
\citep{pearl00causality}. The distribution
$p^{*}(x_{0}, x_{1}, x_{2}, x_{4} \cmid x_{3})$ is commonly denoted by
$p(x_{0}, x_{1}, x_{2}, x_{4} \cmid \Dovtex(x_{3}))$. In a causal model given
by a DAG where all variables are observed, any interventional probability
distribution can be identified by this method.

Often not all common causes are measured, or there is no way to know
\emph{a priori} whether this is the case. This motivates the study of DAG
models containing latent variables \citep{allman15parameter}, and the constraints
they imply. Existing theoretical machinery based on DAGs can be applied
to such settings, simply by treating the unobserved variables as missing
data. However, this creates a number of problems that are particularly
severe when the structure of the underlying DAG model with latents is unknown.
First, there are, in general, an infinite number of DAG models with latent
variables that imply the (independence) constraints holding in a given
observed distribution. Second, assumptions concerning the state-space or
distribution of latent variables may have a profound effect on the model.
This is problematic if prior knowledge about latent variables is scarce.

\begin{figure}
\begin{center}
\begin{tikzpicture}[>=stealth, node distance=1.0cm,
pre/.style={->,>=stealth,ultra thick,line width = 1.4pt},
    format/.style={circle, draw, very thick, circle, minimum size=5mm, inner sep=.3mm}]
  \begin{scope}
    \node[format] (1) {$1$}; 
    \node[format, right of=1] (2) {$2$}; 
    \node[format, right of=2] (3) {$3$}; 
    \node[format, right of=3] (4) {$4$}; 
    \node[format, color=red, above of=3, yshift=-3mm] (5) {$0$}; 
    
                 \draw (1) edge[pre, blue] (2)
                  (2) edge[pre, blue] (3)
                  (1.315) edge[pre, blue, bend right] (3.225)
                  (3) edge[pre, blue] (4)
                  (5) edge[pre, red] (2)
                  (5) edge[pre, red] (4);
  \node[below of=3, xshift=-6mm, yshift=1mm] {(i)};
  \end{scope} 
  
   \begin{scope}[xshift=4.5cm]
    \node[format] (1) {$1$}; 
    \node[format, right of=1] (2) {$2$}; 
    \node[format, rectangle, right of=2] (3) {$3$}; 
    \node[format, right of=3] (4) {$4$}; 
    \node[format, color=red, above of=3, yshift=-3mm] (5) {$0$}; 
    
                 \draw (1) edge[pre, blue] (2)
                  (3) edge[pre, blue] (4)
                  (5) edge[pre, red] (2)
                  (5) edge[pre, red] (4);
  \node[below of=3, xshift=-6mm, yshift=1mm] {(ii)};
  \end{scope}
  
    \begin{scope}[xshift=9cm]
    \node[format] (1) {$1$}; 
    \node[format, right of=1] (2) {$2$}; 
    \node[format, right of=2] (3) {$3$}; 
    \node[format, right of=3] (4) {$4$}; 
    \node[format, color=red, above of=3, yshift=-3mm] (5) {$0$}; 
    
                 \draw (1) edge[pre, blue] (2)
                  (2) edge[pre, blue] (3)
                  (1.315) edge[pre, blue, bend right] (3.225)
                  (3) edge[pre, blue] (4)
                  (5) edge[pre, red] (2)
                  (5) edge[pre, red] (4)
                  (1.315) edge[pre, blue, bend right=30] (4)
                  ;
  \node[below of=3, xshift=-6mm, yshift=1mm] {(iii)};
  \end{scope} 
  
  \end{tikzpicture}
\end{center}
\caption{(i) A DAG on five variables and (ii) a DAG representing the model
after an experiment to externally fix $X_{3}$. (iii) A DAG on five variables
representing a statistical model distinguishable from the model represented
by the DAG in (i) by a generalized independence constraint.}
\label{fig:mut}
\end{figure}

An alternative approach considers a supermodel defined by taking a subset
of the constraints implied by a DAG model with latent variables on the
observed marginal distribution. More specifically, we consider models defined
by equality constraints that are implied by the factorization of a DAG
with latents, but that do not depend on assumptions regarding the state-space
or distribution of the latent variables. Models defined by these constraints
are naturally represented by mixed graphs, that is, graphs containing directed
($\rightarrow $) and bidirected ($\leftrightarrow $) edges, obtained from
DAGs via a \emph{latent projection} operation \citep{verma90equiv}; see
the graph in Figure~\ref{fig:verma1}(i) for the latent projection of the
DAG in Figure~\ref{fig:mut}(i).

Much previous work
\citep{wermuth:94, richardson:2002, ali09equiv, evans10maximum, wermuth:11, sadeghi14markov}
has defined Markov models for mixed graphs via independence constraints
implied by hidden variable DAGs on the observed margin. It is well known,
however, that DAG models with latent variables imply nonparametric constraints
that are not conditional independence constraints. For example, consider
the DAG shown in Figure~\ref{fig:mut}(i), and take the vertex $0$ as hidden.
This DAG implies no conditional independence restrictions on the observed
margin $p(x_{1},x_{2},x_{3},x_{4})$. This is because all vertex sets that
d-separate pairs of observed variables---that is, the pairs $(x_{1},x_{4})$ and $(x_{2},x_{4})$---include the unobserved variable
$x_{0}$. However, it may be shown that the
$p(x_{1},x_{2},x_{3},x_{4})$ margin of any distribution
$p(x_{0},x_{1},x_{2},x_{3},x_{4})$, which factorizes according to the DAG
in Figure~\ref{fig:mut}(i), obeys the constraint that
%
\begin{equation}
\label{eq:verma} \sum_{x_{2}} p(x_{4} \cmid
x_{1},x_{2},x_{3}) p(x_{2} \cmid
x_{1}) \quad\text{is a function of only }x_{3}\text{ and
}x_{4};
\end{equation}
see \citet{verma90equiv,robins86new,wermuth08distortion}. In Robins (\citeyear{robins:1999}),
it is shown that this constraint is equivalent to the requirement that
$X_{4}$ is independent of $X_{1}$ given $X_{3}$ in the distribution obtained
from $p(x_{1},x_{2},x_{3},x_{4})$ after dividing by the conditional
$p(x_{3} \cmid x_{2},x_{1})$. Note that this is the same manipulation performed
in (\ref{eqn:mut}), but the operation, which we later call ``fixing,''
is purely probabilistic and can be performed \emph{without} requiring that
the model has any causal interpretation. If we interpret the original DAG
as causal, then (\ref{eq:verma}) is an (identifiable) dormant independence
constraint \citep{shpitser08dormant, shpitser:behaviormetrika:2014}.

Since, as we have seen, the DAG in Figure~\ref{fig:mut}(i) implies no
conditional independence restrictions on the joint
$p(x_{1},x_{2},x_{3},x_{4})$, the set of distributions obeying these independence
relations is (trivially) saturated. Consequently, a structure learning
algorithm such as FCI \citep{spirtes93causation} that learns a Markov equivalence
class of DAGs with latent variables, under the assumption of faithfulness,
will return a (maximally uninformative) unoriented complete graph. The
assumption of ``faithfulness'' implies that if
$X_{A} \civtex X_{B} \mid X_{C}$ in the observed distribution then $A$ is d-separated
from $B$ given $C$ in the underlying DAG with latent variables.

Indeed, as originally pointed out by \citet{robins:1999}, if we assume
a certain generalization of faithfulness, if (\ref{eq:verma}) holds, we
may rule out the model in Figure~\ref{fig:mut}(iii). More generally,
\cite{shpitser09edge} used pairwise constraints of this form to test for
the presence of certain directed edges (in the context of a specific graph).
Further, \cite{tian02on} presented a general algorithm for finding nonparametric
constraints from DAGs with latent variables.

In this paper, we introduce a (statistical) model called the
\emph{nested Markov model}, defined by these nonparametric constraints,
and associated with a mixed graph called an acyclic directed mixed graph
(ADMG). We give equivalent characterizations of the model in terms of global
and ordered local Markov properties, and a factorization. We next prove
that the set of marginal distributions given by a latent DAG model is a
submodel of the nested Markov model associated with the corresponding ADMG
(obtained by latent projection). We show that our results lead to a particularly
simple characterization of identifiable causal effects in hidden variable
causal DAG models. Finally, we relate the nested Markov model to a set
of constraints obtained by \cite{tian02on}.

\subsection{Applications of the nested Markov model}

Nested models are defined by constraints that correspond to the absence
of direct effects \citep{robins97estimation, robins:1999}. For example,
consider the causal graph in Figure~\ref{fig:mut}(i), where vertices
$1$ and $3$ correspond to treatments $X_{1}$ and $X_{3}$, vertices
$2$ and $4$ are responses $X_{2},X_{4}$ and vertex $0$ is an unobserved
confounding variable $X_{0}$. In this graph, $X_{1}$ has no direct effect
on $X_{4}$, and as noted above, this implies (\ref{eq:verma}). The nested
model shown in Figure~\ref{fig:verma1}(i) is defined by this constraint
(see Section~\ref{sec:nested}).

Such restrictions have been increasingly exploited in the analysis of longitudinal
observational data in medicine and public health. For example, consider
the nested Markov constraint that a diagnostic or screening test (e.g.,~a
mammogram) has no direct effect on a patient's clinical outcome of interest
(death from breast cancer), except through the effect of the test results
on the choice of treatment (no treatment if the mammogram was negative
and, if positive, biopsy possibly followed by breast surgery and/or chemotherapy).
Several recent papers
\citep{neugebauer17identification, caniglia19emulating, kreif21exploiting}
have leveraged this no direct effect of screening constraint to construct
novel highly efficient estimators of an optimal joint testing and treatment
regime. What was surprising and, indeed, unprecedented is that, in actual
medical studies, the new estimators have provided a 50-fold increase in
efficiency (and thus a 50-fold reduction in the required sample size) compared
to estimators that fail to leverage the no direct effect constraint
\citep{caniglia19emulating}.

Nested Markov constraints can also be used for structure learning. The
review paper \citep{shpitser:behaviormetrika:2014} includes all nested
Markov equivalence classes over four variables; classes with nested constraints
can be very small, so they are potentially extremely informative about
causal structure. This reflects the fact that whereas conditional independences
involve three sets, nested constraints are relations among more groups
of variables (see Section~\ref{subsec:kernel-properties}). A general theory
of nested Markov equivalence and search for ADMGs remains a subject for
future work. However, distributions consistent with these constraints have
already been observed in applications; see
\citet{bhattacharya21differentiable}.

The nested constraints also facilitate the construction of more computationally
efficient marginalization methods in some causal graphs
\citep{shpitser11eid}, for categorical variables. This is achieved via
Theorem~\ref{thm:1-line-id} and the parameterization of
\citet{evans:richardson:19}.

This work is of separate interest to people constructing causal models
for quantum mechanics, since it \emph{does not} impose the inequalities
implied by latent variable models \citep{navascues20inflation}; see Example~\ref{ex:chsh}.

\citet{evans:complete} shows that for categorical variables the nested
Markov model is the closest approximation to the margin of a DAG model
(ignoring inequalities), in that it incorporates all equality constraints,
and in this sense is ``complete.''

\subsection{Overview of nested Markov models}

We now outline our strategy for defining the nested Markov model in terms
of ordinary conditional independence (in derived distributions) by analogy
to a way of defining DAG models in terms of undirected graphical models.
We give a specific example of a nested Markov model, outlining the key
concepts, while providing references to the formal definitions within the
paper.

A nested Markov model is represented by an ADMG, a mixed graph naturally
derived from DAGs with latent variables via an operation called
\emph{latent projection}. Intuitively, the ADMG does not contain latent
variables, but indicates the presence of such variables by the inclusion
of bidirected ($\leftrightarrow $) edges. Earlier work
(\citet{richardson:2002,richardson03sjs}) established Markov properties for
ordinary independence models defined by ADMGs. Such an independence model
is defined by fewer constraints than the nested Markov model represented
by the same ADMG, and hence the former is a supermodel of the latter. The
global Markov property for these independence models simply corresponds
to the natural extension of d-separation \citep{pearl88probabilistic} to
ADMGs. This extension, which is sometimes called m-separation, consists
of allowing colliders to involve bidirected edges. Latent projection is
defined in Section~A.3; ADMGs and m-separation in
Sections~\ref{ssec:mixedg} and A.2, respectively.

We also consider conditional ADMGs (CADMGs) where certain ``fixed'' vertices
represent nonrandom variables that index distributions over the random
variables. Such vertices are treated similarly to the so-called ``strategy
nodes'' in influence diagrams \citep{dawid02influence}. The Markov property
for CADMGs is a simple extension of m-separation that takes into account
fixed nodes. CADMGs are defined formally in Section~\ref{subsec:kernels}; the corresponding global Markov property for a CADMG
is given in Section~\ref{subsec:cadmg-global}. Note that an ADMG is
the special case of a CADMG with no fixed vertices.

CADMGs and their associated Markov models characterize the nested Markov
model in much the same way that undirected graphs and their associated
Markov models can be used to describe a DAG model. We first briefly review
the characterization of DAGs via undirected models.

The global Markov property for DAGs may be obtained from the (union of
the) Markov properties associated with undirected graphs derived from the
DAG by the moralization operation
(\citep{lauritzen96graphical}, Theorem~3.27); the resulting property is
equivalent to d-separation \citep{pearl88probabilistic}. More precisely,
the DAG Markov property corresponds to (the union of) the Markov properties
associated with undirected graphs representing ``ancestral'' margins. Likewise,
the set of distributions corresponding to the DAG is the intersection of
the sets of distributions obeying the factorization properties associated
with these undirected graphs; this is equivalent to the characterization
that the joint distribution factors into the product of each variable given
its parents in the graph. As an example, consider the DAG in Figure~\ref{fig:moral}(i). Undirected graphs associated with some ancestral margins,
and their factorizations, are shown in Figure~\ref{fig:moral}(ii), (iii),
and (iv).

\begin{figure}
\begin{center}
\begin{tikzpicture}[>=stealth, node distance=1.2cm,
pre/.style={->,>=stealth,ultra thick,line width = 1.4pt}]
\tikzstyle{vecArrow} = [>=latex, thick, decoration={markings,mark=at position
   1 with {\arrow[semithick]{open triangle 60}}},
   double distance=1.4pt, shorten >= 5.5pt,
   preaction = {decorate},
   postaction = {draw,line width=1.4pt, white,shorten >= 4.5pt}]
\tikzstyle{innerWhite} = [semithick, white,line width=1.4pt, shorten >= 4.5pt]
    \tikzstyle{format} = [circle, draw, very thick, minimum size=5mm, inner sep=.3mm]

  \begin{scope}
    \path[->]	 
    		node[format] (x1) {$1$} 
    		node[format, right of=x1] (x2) {$2$} 
                  (x1) edge[pre, blue] (x2)
		node[format, right of=x2] (x3) {$3$} 
		node[format, right of=x3] (x4) {$4$} 
                  (x3) edge[pre, blue] (x4)
                  (x2.45) edge[pre, ->, blue, bend left] (x4.135)   
	         ;
	         \node[left of=x1, yshift = 0cm, xshift= 0cm] (label) {(i)};
	         \node[right=1cm of x4, align=left] (p21) {$p(x_1,x_2,x_3,x_4)$};
	\end{scope}
	
		\draw[ultra thick] (-0.3,-1.15) -- (10.5,-1.15);

  \begin{scope}[yshift=-2.5cm]
    \path[->]	node[format] (x1) {$1$} 
    		node[format, right of=x1] (x2) {$2$} 
                  (x1) edge[pre, -] (x2)
		node[format, right of=x2] (x3) {$3$} 
                  (x2) edge[pre, -] (x3)
		node[format, right of=x3] (x4) {$4$} 
                  (x3) edge[pre, -] (x4)
                  (x2.45) edge[pre, -, bend left] (x4.135)
   
	        ;
   \node[left of=x1, yshift = 0cm, xshift=0cm] (label) {(ii)};
    \node[right=1cm of x4, align=left] (p22) {$p(x_1,x_2,x_3,x_4)=\varphi_{12}(x_1,x_2)\varphi_{234}(x_2,x_3,x_4)$};
  \end{scope}
  
    \begin{scope}[yshift=-4.5cm]
      \path[->]
      		node[format] (x1) {$1$} 
    		node[format, right of=x1] (x2) {$2$} 
                  (x1) edge[pre,-] (x2)
		node[format, right of=x2] (x3) {$3$} 
                  	node[format, rectangle, right of=x3] (x4) {$4$} 
                  ;            
                     \node[left of=x1, yshift = 0cm, xshift=0cm] (label) {(iii)};
                     \node[right=1cm of x4, align=left] (p23) 	{$\displaystyle\sum_{x_4}p(x_1,x_2,x_3,x_4)\!=\!\varphi_{12}(x_1,x_2)\varphi_{3}(x_3 )$};
    \end{scope}
    
            \begin{scope}[yshift=-6.5cm, xshift=0.0cm]

    \path[->]	
    		node[format] (x1) {$1$} 
    		node[format, right of=x1, rectangle] (x2) {$2$} 
		node[format, right of=x2] (x3) {$3$} 
		node[format, right of=x3, rectangle] (x4) {$4$} 
                  ;
           \node[left of=x1, yshift = 0cm, xshift=0cm] (label) {(iv)};
             \node[right=1cm of x4, align=left] (p23) 	{$\displaystyle\sum_{x_2,x_4} p(x_1,x_2,x_3,x_4)=\varphi_{1}(x_1)\varphi_{3}(x_3)$};
                        \end{scope}
  \end{tikzpicture}

\end{center}
\caption{Reduction of a DAG model to a set of undirected models via marginalization
and moralization: (i)~The original DAG ${\mathscr G} (\{1,2,3,4\})$. Undirected graphs
representing the factorization of different ancestral margins: (ii)~$p(x_{1},x_{2},x_{3},x_{4})$; (iii) $p(x_{1},x_{2},x_{3})$; (iv)
$p(x_{1},x_{3})$. Note that we have also included the marginalized variables
on the graph in square nodes (since, as we later show, marginalization
is a special case of fixing). The DAG model may be characterized by (the
union of) the conditional independence properties implied by the undirected
graphs for all ancestral margins.}
\label{fig:moral}
\end{figure}

Likewise, the set of distributions in the nested Markov model associated
with an ADMG corresponds to the intersection of the sets of distributions
obeying factorization properties encoded by specific CADMGs obtained from
the original ADMG. However, whereas the undirected graphs corresponding
to a DAG may be seen as representing specific (ancestral) \emph{margins},
the CADMGs obtained from an ADMG represent ``kernel'' distributions obtained
by sequentially applying a new ``fixing'' operation on distributions, one
that generalizes conditioning and marginalizing. This fixing operation
has a natural causal interpretation, as do the kernels that form CADMG
factorizations of the nested Markov model. Specifically, in the context
of a latent variable causal model whose projection is a given ADMG, kernels
can be viewed as (identified) interventional distributions. Not all variables
are fixable. From a causal perspective this is natural since in the presence
of latent variables, not all interventional distributions are identifiable.
The fixing operation and the set of fixable vertices are defined in Section~\ref{subsec:fixing}.

As a specific example, consider the graph shown in Figure~\ref{fig:verma1}(i).
In this ADMG, the vertex $3$ may be fixed to give the CADMG shown in
Figure~\ref{fig:verma1}(ii), where the corresponding distribution and factorization
are also shown, with
\begin{align*}
q_{1}(x_{1})= p(x_{1}),\qquad q_{24}(x_{2},
x_{4} \cmid x_{1}, x_{3}) = \frac{p(x_{1}, x_{2}, x_{3}, x_{4})}{p(x_{1})p(x_{3} \cmid x_{2}, x_{1})}.
\end{align*}
Note that, although the original graph implied no conditional independences,
the graph in Figure~\ref{fig:verma1}(ii) implies the independence
$X_{1} \civtex X_{4} \mid X_{3}$ via m-separation.

Whereas the undirected graphs associated with a DAG correspond to distributions
obtained by \emph{specific} marginalizations (namely those that remove vertices
that have no children), CADMGs correspond to certain \emph{specific} ordered
sequences of fixing operations. Not all such sequences are allowed: in
some cases a vertex may be fixable only after another vertex has already
been fixed. The global nested Markov property corresponds to the (union
of the) m-separation relations encoded in the CADMGs derived via allowed
sequences of these fixing operations, which we call ``valid.'' Valid fixing
sequences are defined in Section~\ref{subsec:reachable}. These fixing sequences
are closely related to a particular identification strategy for interventional
distributions consisting of recursively applying the g-formula to an
already identified intervention distribution to obtain the result of further
interventions; this connection is explored further in Section~\ref{subsec:causal}.

\begin{figure}
\begin{center}
\begin{tikzpicture}[>=stealth, node distance=1.2cm, 
pre/.style={->,>=stealth,ultra thick,line width = 1.4pt}]
\tikzstyle{vecArrow} = [>=latex, thick, decoration={markings,mark=at position
   1 with {\arrow[semithick]{open triangle 60}}},
   double distance=1.4pt, shorten >= 5.5pt,
   preaction = {decorate},
   postaction = {draw,line width=1.4pt, white,shorten >= 4.5pt}]
\tikzstyle{innerWhite} = [semithick, white,line width=1.4pt, shorten >= 4.5pt]
    \tikzstyle{format} = [circle, draw, very thick, minimum size=5mm, inner sep=.3mm,node distance=6mm]

  \begin{scope}
    \path[->]	 
    		node[format] (x1) {$1$} 
    		node[format, right= of x1] (x2) {$2$} 
                  (x1) edge[pre, blue] (x2)
		node[format, right= of x2] (x3) {$3$} 
                  (x2) edge[pre, blue] (x3)
                  (x1.315) edge[pre,->, bend right,blue] (x3.225)
		node[format, right= of x3] (x4) {$4$} 
                  (x3) edge[pre, blue] (x4)
                  (x2.45) edge[pre, <->, red, bend left] (x4.135)
                 ; 
                 \node[left=of x1] (label) {(i)};
	         \node[right=0.8cm of x4, align=left] (p21) {$p(x_1,x_2,x_3,x_4)$};
	\end{scope}
	
		\draw[ultra thick] (-0.3,-1.15) -- (9.5,-1.15);
	
  \begin{scope}[yshift=-2.5cm]
    \path[->]	
    		node[format] (x1) {$1$} 
    		node[format, right=of x1] (x2) {$2$} 
                  (x1) edge[pre, blue] (x2)
		node[format, right=of x2, rectangle] (x3) {$3$} 
		node[format, right= of x3] (x4) {$4$} 
                  (x3) edge[pre, blue] (x4)
   (x2.45) edge[pre, <->, red, bend left] (x4.135)
   
	        ;
   \node[left=of x1] (label) {(ii)};
    \node[right=0.8cm of x4, align=left] (p22) {$\frac{p(x_1,x_2,x_3,x_4)}{p(x_3\cmid x_1,x_2)}= q_{1}(x_1)q_{24}(x_2,x_4\cmid x_1,x_3)$};
  \end{scope}
     \begin{scope}[yshift=-4.5cm]
    \path[->]	
    		node[format,rectangle] (x1) {$1$} 
    		node[format, right=of x1] (x2) {$2$} 
                 (x1) edge[pre, blue] (x2)
		node[format, right=of x2, rectangle] (x3) {$3$} 
		node[format, right= of x3] (x4) {$4$} 
                  (x3) edge[pre, blue] (x4)
   (x2.45) edge[pre, <->, red, bend left] (x4.135)
   
	        ;
   \node[left=of x1] (label) {(iii)};
    \node[right=0.8cm of x4, align=left] (p22) {$\frac{p(x_1,x_2,x_3,x_4)}{p(x_1)p(x_3\cmid x_1,x_2)}= q_{24}(x_2,x_4\cmid x_1, x_3)$};
  \end{scope}
   \begin{scope}[yshift=-6.5cm]
    \path[->]	
    		node[format,rectangle] (x1) {$1$} 
    		node[format, right=of x1,rectangle] (x2) {$2$} 
		node[format, right=of x2, rectangle] (x3) {$3$} 
		node[format, right= of x3] (x4) {$4$} 
                  (x3) edge[pre, blue] (x4)
   
	        ;
   \node[left=of x1] (label) {(iv)};
    \node[right=0.8cm of x4, align=left] (p22) {
${\displaystyle \sum_{x_2}}\frac{p(x_1,x_2,x_3,x_4)}{p(x_1)p(x_3\cmid x_1,x_2)}= 
{\displaystyle \sum_{x_2}}p(x_2\cmid x_1)p(x_4\cmid x_1,x_2,x_3) $\\[5pt]
\kern85pt$= q_{4}(x_4\cmid x_3)$};
  \end{scope}

   \end{tikzpicture}

\end{center}
\caption{Reduction of the nested Markov model for an ADMG to a set of ordinary
Markov models associated with CADMGs: (i) The ADMG
$\mathscr{G} (\{1,2,3,4\})$, which is the latent projection of the graph
${\mathscr G} $ from Figure~\protect\ref{fig:mut}(i). CADMGs, representing the Markov structure
of derived distributions, resulting from sequences of fixing operations
in ${\mathscr G} (\{1,2,3,4\})$: (ii) $\langle 3\rangle $; (iii)
$\langle 3, 1\rangle $ or alternatively $\langle 1, 3\rangle $; (iv) any
of the sequences $\langle 1, 3, 2\rangle $,
$\langle 3, 1, 2\rangle $, $\langle 3, 2, 1\rangle $; it is not valid to
fix $2$ before $3$. The nested Markov model may be defined via the conditional
independence properties for all CADMGs and associated kernels obtained
(via valid fixing sequences) from the original ADMG and distribution. See
also Figure~\protect\ref{fig:moral} and text.}
\label{fig:verma1}
\end{figure}

Returning to the example, given the CADMG in Figure~\ref{fig:verma1}(ii),
the vertex $1$ may be fixed to give the CADMG in Figure~\ref{fig:verma1}(iii).
Further, given the CADMG in Figure~\ref{fig:verma1}(iii), we may fix $2$.
The kernel in this graph is
%
\begin{align}
\label{eq:verma-constraint-intro} q_{4}(x_{4} \cmid x_{3}) &= \sum
_{x_{2}} p(x_{4} \cmid x_{3},
x_{2}, x_{1}) p(x_{2} \cmid x_{1}).
\end{align}
The quantity on the RHS of (\ref{eq:verma-constraint-intro}) is a function
only of $x_{3}$ and $x_{4}$, and not $x_{1}$. This is precisely the constraint
(\ref{eq:verma}) implied by the original latent variable DAG model in Figure~\ref{fig:mut}(i).
This constraint characterizes the nested Markov model corresponding to
the ADMG in Figure~\ref{fig:verma1}(i).

In the original ADMG in Figure~\ref{fig:verma1}(i), we could also have
fixed $1$ and $4$. Had we chosen to fix $1$, we could then subsequently
have fixed $3$, and would have arrived at the same CADMG and distribution
as shown in Figure~\ref{fig:verma1}(iii). Thus, the operations of fixing
$3$ and $1$ commute. However, this is not always the case: for example,
$2$ may only be fixed after $3$.

Like the DAG model, the nested Markov model may be characterized via a
factorization property, as well as by local and global Markov properties
described in Section~\ref{sec:nested}. In each case these properties are
defined via the set of CADMGs that are ``reached'' via valid fixing sequences.
Just as the factorization property of the DAG model leads naturally to
a parameterization in terms of Markov factors, likewise the factorization
property of the nested Markov model leads to a parameterization in terms
of its factors in multivariate Gaussian and finite categorical cases
\citep{uai:18, evans:richardson:19}. The usual DAG parameters are such
that the choice of value for one parameter
$p(x_{v}\cmid x_{\pavtex (v)})$ does not restrict the set of possible values
for another, $p(x_{w} \cmid x_{\pavtex (w)})$; in other words, they are
\emph{variation independent}, but this is not true for the nested Markov
model.

The rest of the paper is organized as follows. In Section~\ref{ssec:mixedg}, we introduce graph theoretic preliminaries. In Section~\ref{subsec:cadmgs}, we define conditional mixed graphs, which are used
to construct our model. In Section~\ref{subsec:kernels}, we introduce a
generalization of conditional distributions which we call
\emph{kernels}, and in Section~\ref{subsec:kernel-indep} we generalize the
notion of conditional independence from standard distributions to these
kernels. In Sections~\ref{subsubsec:kernel-semi-graphoid}--\ref{subsec:kernel-preserve},
we provide further properties of kernels, including a proof that conditional
independence in kernels satisfies the semigraphoid axioms (Proposition~\ref{prop:semigraphoid}). In Section~\ref{subsec:markov-cadmg} and Section~\ref{subsec:fact-cadmg}, we give the Markov properties and factorization
of kernels associated with CADMGs; we prove they are equivalent in Theorem~\ref{thm:one-step-markov} in Section~\ref{subsec:markov}. We define the
graph-based fixing operation in Section~\ref{subsec:fixing}, give some
of its properties in Section~\ref{subsec:fixing-fact}, and define sets
that can be reached by successive applications of this operation in Section~\ref{subsec:reachable}.

We define the nested model in Section~\ref{sec:nested}. Theorem~\ref{thm:invariant} in Section~\ref{subsec:invariance} shows a key result,
that graphs and kernels are invariant to the precise order of fixing in
a sequence, as long as each fixing operation is valid. In Section~\ref{subsec:global}, Section~\ref{subsec:fact}, and Section~\ref{subsec:local}, respectively, we give definitions of the nested Markov model in terms
of the global Markov property, the factorization, and the ordered local
Markov property. Corollary~\ref{cor:complete-saturated} in Section~\ref{subsec:saturated} shows that the nested Markov model corresponding
to any complete ADMG gives the saturated model. In Section~\ref{subsec:latent-dag}, we show in Theorem~\ref{thm:dags_in_nested}, that
any marginal distribution in a hidden variable DAG model lies in the nested
Markov model associated with the latent projection. A simple characterization
of identifiable causal effects in hidden variable causal DAG models, based
on the fixing operation, is given in Theorem~\ref{thm:1-line-id} of Section~\ref{subsec:id}. Finally, a result stating that the nested Markov model
may be defined by the set of constraints found by the algorithm in
\citet{tian02on} is given as Theorem~\ref{thm:tian_equals_nested} in Section~\ref{subsec:tian}. All proofs of claims are in the Supplementary Material (\citet{supp}).

\section{Latent variable DAG models}
\label{sec:mixed}

We assume familiarity with standard graphical definitions relating to DAGs,
latent projections, and mixed graphs. See Section~A.1 in the Supplementary Material for a review.

\subsection{Basic concepts}
\label{ssec:mixedg}

The motivation for introducing mixed graphs is twofold. First, by removing
latent variables and replacing them with bidirected edges we simplify the
representation. For example, to perform a search, instead of considering
a potentially infinite class of DAGs with arbitrarily many latent variables,
we need only consider a finite set of mixed graphs. Second, although the
statistical models that we associate with mixed graphs capture many of
the constraints implied by latent variable models, the resulting model
will still, in general, be a superset of the set of distributions over
the observables that are implied by the original DAG with latents. The
use of a mixed graph to represent our model serves to emphasize that, in
spite of this connection, the set of distributions we are constructing
is nonetheless \emph{not} a latent variable model (see Example~\ref{ex:chsh}).

We will make use of standard genealogic relations in graphs. The sets of
parents, children, ancestors, descendants, and nondescendants of $a$ in
${\mathscr G} $ are written $\pavtex _{{\mathscr G}}(a)$, $\chvtex _{{\mathscr G}}(a)$, $\anvtex _{{\mathscr G}}(a)$,
$\devtex _{{\mathscr G}}(a)$, and $\ndvtex _{{\mathscr G}}(a)$, respectively. By convention, every vertex
is its own ancestor and descendant, so
$\anvtex _{{\mathscr G}}(a) \cap \devtex _{{\mathscr G}}(a) = \{a\}$. An ordering $\prec $ of nodes
in ${\mathscr G} $ is said to be \emph{topological} if for any pair of vertices
$a,b$, if $a \prec b$, then $a \notin \devtex _{{\mathscr G}}(b)$; note that this implies
that if $a \prec b$ then $a \neq b$. We define the set
$\pre _{{\mathscr G} , \prec}(b) \equiv \{ a   |  a \prec b \}$. We apply these
definitions disjunctively to sets, for example,~$\anvtex _{{\mathscr G}}(A) =
\bigcup_{a\in A} \anvtex _{{\mathscr G}}(a)$. A set of vertices $A$ in ${\mathscr G} $ is called
\emph{ancestral} if $\anvtex _{{\mathscr G}}(a) \subseteq A$ whenever $a \in A$.

Given a DAG ${\mathscr G} $ with vertex set $V\dotcup L$ where $L$ is the set of
latent variables, we associate with it a mixed graph, denoted
${\mathscr G} (V)$, via the standard operation of latent projection
$\sigma _{L}$ \citep{verma91theory}; see the Supplementary Material, Section~A.3 for the definition \citep{supp}. Here and elsewhere, we use
standard $\dotcup $ notation when we wish to emphasize that the sets in
the union are disjoint.

The latent projection ${\mathscr G} (V) = \sigma _{L}({\mathscr G} (V\dot\cup L))$ represents
the set of d-separation relations holding among the variables in $V$ in
${\mathscr G} $.

\begin{proposition}%
\label{prop:msep}
Let ${\mathscr G} $ be a DAG with vertex set $V\dotcup L$. For disjoint subsets
$A, B, C \subseteq V$ (where $C$ may be empty), $A$ is d-separated from
$B$ given $C$ in ${\mathscr G} $ if and only if $A$ is m-separated from $B$ given
$C$ in ${\mathscr G} (V)$.
\end{proposition}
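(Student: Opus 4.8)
The plan is to prove the two directions of the biconditional separately, in each case working with a putative connecting path and showing how to transfer it between $\G$ and $\G(V)$. Throughout I would fix disjoint $A,B,C\subseteq V$ and recall that by Proposition \ref{prop:msep}'s ambient setup $\G$ is a DAG, so in $\G$ ``collider'' means $\to z \leftarrow$ and every edge is directed; in $\G(V)$ colliders may also involve bidirected edges as in the definition of m-separation. The key bookkeeping fact I would establish first is a correspondence between edges of $\G(V)$ and certain ``latent-only'' walks in $\G$: by Definition \ref{def:proj}, an edge $v\to w$ in $\G(V)$ corresponds to a directed path $v\to\cdots\to w$ through $L$, and an edge $v\leftrightarrow w$ corresponds to a path in $\G$ whose interior vertices all lie in $L$ and are non-colliders, with arrowheads into both $v$ and $w$. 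A useful auxiliary observation is that for $a\in V$ and any $c\in V$, $a\in\an_\G(c)$ iff $a\in\an_{\G(V)}(c)$, since ancestry through latents is exactly what clause (i) preserves; this lets me match the ``collider is an ancestor of $C$'' conditions on the two sides.

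For the direction ``m-connected in $\G(V)$ $\Rightarrow$ d-connected in $\G$'', I would take a path $\pi$ in $\G(V)$ that is m-connecting given $C$ between some $a\in A$ and $b\in B$, and expand each of its edges into the corresponding path segment in $\G$ through $L$, concatenating to get a walk $\pi'$ in $\G$. I would then argue that $\pi'$ (after reducing it to a path, removing any repeated vertices — standard and I would cite that a connecting walk yields a connecting path) is d-connecting given $C$ in $\G$. The crucial local check is at the ``splice points'', i.e.\ the vertices of $V$ that were non-endpoints of $\pi$: a non-collider $z$ of $\pi$ (with $z\notin C$) must remain a non-collider, or at worst a collider-ancestor-of-$C$, on $\pi'$ — but since the inserted latent segments have arrowheads pointing \emph{into} $v,w$ exactly when the original $\G(V)$-edge did, the collider/non-collider status at $z$ is literally unchanged; and interior latent vertices are non-colliders by construction in the bidirected case, while in the directed case a directed path has no interior colliders. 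For a collider $z$ of $\pi$, we have $z\in\an_{\G(V)}(C)$, hence $z\in\an_\G(C)$ by the auxiliary observation, so the collider condition holds in $\G$ too. Endpoints $a,b\in V$ stay endpoints.

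For the converse, ``d-connected in $\G$ $\Rightarrow$ m-connected in $\G(V)$'', I would take a d-connecting path $\rho$ in $\G$ given $C$ between $a\in A$ and $b\in B$ and contract it: mark the vertices of $\rho$ lying in $V$ (including $a,b$), and show that consecutive marked vertices $v,w$ along $\rho$, together with the all-latent segment of $\rho$ between them, induce an edge $v\to w$, $v\leftarrow w$, or $v\leftrightarrow w$ in $\G(V)$ according to the arrowhead pattern at the ends of that segment — this is exactly what Definition \ref{def:proj} delivers, after checking that the interior latent vertices of the segment are non-colliders on $\rho$. Here is the one point that needs care: an interior latent vertex of such a segment could a priori be a \emph{collider} on $\rho$; but then, since $\rho$ is d-connecting, that collider is an ancestor of $C$, and $C\subseteq V$, so there is a directed path from it to $C$ entirely — no, that directed path may re-enter $V$; so the honest fix is to choose $\rho$ to be a \emph{shortest} d-connecting path (or minimal in a suitable sense), and argue that a latent collider on a shortest connecting path forces a shortcut, contradicting minimality — this is the standard trick and I would invoke it. Given that, every maximal latent segment between marked vertices has only non-collider interiors, each contracts to a single $\G(V)$-edge, and the resulting path $\bar\rho$ in $\G(V)$ has the same arrowhead pattern at each marked vertex as $\rho$ did, so non-colliders stay out of $C$ and colliders stay ancestors of $C$ (again using $\an_\G(c)\cap V=\an_{\G(V)}(c)$).

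I expect the main obstacle to be precisely this last issue — controlling latent colliders on the $\G$-path when contracting, i.e.\ justifying that one may assume no interior latent vertex of a to-be-contracted segment is a collider. The clean way is to phrase the argument in terms of a \emph{minimal} d-connecting path and show minimality rules out latent colliders whose only ``use'' is to be an ancestor of $C$ via a route through $V$; alternatively one reroutes through the directed path witnessing the collider's ancestry of $C$ and truncates. A secondary, more tedious obstacle is the case analysis of arrowhead patterns at the splice/marked vertices to confirm collider status is preserved under expansion and contraction, but this is finite and routine once the latent-segment correspondence of Definition \ref{def:proj} is set up carefully. I would also remark that acyclicity of $\G(V)$ (Proposition \ref{proj-admg}) is needed only to make m-separation well-defined, not in the core argument.
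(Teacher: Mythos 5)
Your overall strategy---expand each $\G(V)$-edge into its latent-interior $\G$-path for one direction, contract maximal latent segments for the other, checking arrowhead patterns at the splice points and using $\an_\G(c)\cap V=\an_{\G(V)}(c)$ to transfer the collider condition---is the same as the paper's, whose own proof is in fact terser: it simply asserts that contraction preserves collider status and cites Lemma 3.3.1 of \citet{spirtes93causation} to handle the walk-to-path reduction in the expansion direction. Your expansion direction is fine as described.

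The genuine gap is in your primary fix for the contraction direction: minimality of the d-connecting path does \emph{not} rule out latent colliders. Take $V=\{a,b,c\}$, $L=\{l\}$, edges $a\to l$, $b\to l$, $l\to c$, and $C=\{c\}$. The unique path between $a$ and $b$ in $\G$ is $a\to l\leftarrow b$; it d-connects given $C$ because $l\in\an_\G(C)$, and it is trivially minimal, yet its only $V$-vertices are the endpoints and $\G(V)$ contains no edge between $a$ and $b$ (clause (ii) of Definition \ref{def:proj} requires the latent interior to consist of non-colliders), so contracting it yields nothing. The proposition still holds---$a\to c\leftarrow b$ m-connects in $\G(V)$---but only via the rerouting you relegate to an ``alternative'': for each latent collider $l$ on the d-connecting path, detour down a directed path from $l$ toward $C$ until the first vertex $v_l\in V$ is reached and then back up, so that $v_l$ becomes the new collider (observed, and an ancestor of $C$), the two copies of $l$ and the other inserted latent vertices are non-colliders, and nothing in $L$ can lie in $C$. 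This produces a d-connecting \emph{walk} all of whose colliders are in $V$; contract its latent segments to get an m-connecting walk in $\G(V)$ and reduce that to a path. Promote the rerouting step from an aside to the actual argument and your proof goes through; the ``shortcut from minimality'' claim should be dropped.
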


However, as we will see later, the latent projection ${\mathscr G} (V)$ captures
much more than simply the d-separation relations holding in $V$. As suggested
by Figures~\ref{fig:mut}(i) and \ref{fig:verma1}(i), ${\mathscr G} (V)$ also represents
constraints such as (\ref{eq:verma}), and further all those found by the
algorithm in \citet{tian02on}, as well as others.

\subsection{Conditional ADMGs}
\label{subsec:cadmgs}

We define a \emph{conditional} acyclic directed mixed graph
(CADMG)
${\mathscr G} (V,W)$ to be an ADMG with two disjoint sets of vertices $V$ and
$W$, which we call \textit{random} and \textit{fixed}, respectively. We require
that for all $w \in W$, $\pavtex _{{\mathscr G}}(w) = \emptyset $ and there are no bidirected
edges involving $w$.

Thus, in a CADMG ${\mathscr G} (V,W)$ there are no edges connecting vertices in
$W$, and all edges connecting $w\in W$ and $v \in V$ take the form
$w \to v$. Vertices in $V$ will represent random variables $X_{V}$, as
in a standard graphical model. The rationale for excluding edges between
vertices in $W$ or with arrowheads in $W$ is that the CADMG will represent
objects where the vertices in $W$ correspond to (nonrandom) values that index distributions
over $V$.

We note that CADMGs represent kernels that are not, in general, formed
by standard conditioning from the original observed distribution. For example,
a set of interventional distributions over $X_{V}$ indexed by possible
interventions on $X_{W}$; note, however, that we do not require a causal
interpretation of the graph. We also introduce operators
$\mathbb{V}({{\mathscr G}})$ and $\mathbb{W}({{\mathscr G}})$ that return, respectively, the
sets of random and fixed nodes associated with a CADMG ${\mathscr G} $. We will use
circular nodes to indicate the random vertices $\mathbb{V}({{\mathscr G}})$, and
square nodes to indicate the fixed vertices $\mathbb{W}({{\mathscr G}})$; see, for
instance, the CADMGs in Figures~\ref{fig:mut}(ii), \ref{fig:verma1}(ii)--(iv) and
\ref{fig:cadmg}(i). When the vertex sets are clear from context, we will
abbreviate ${{\mathscr G}}(V,W)$ as ${\mathscr G} $.

\begin{figure}
\begin{center}
\begin{tikzpicture}[>=stealth, node distance=1.2cm,
pre/.style={->,>=stealth,ultra thick,line width = 1.4pt}]
    \tikzstyle{format} = [circle, draw, very thick, minimum size=5mm, inner sep=.3mm]
     \tikzstyle{square} = [draw, very thick, rectangle, minimum size=5mm]
   \begin{scope}
    \path[->]	node[square] (x1) {$1$} 
    		node[format, right of=x1] (x2) {$2$} 
                  (x1) edge[pre, blue] (x2)
		node[square, right of=x2] (x3) {$3$} 
		node[format, right of=x3] (x4) {$4$} 
                  (x3) edge[pre, blue] (x4)
                  (x2.45) edge[pre, <->, bend left, red] (x4.135);
   \node[below of=x2, xshift=.6cm, yshift = .2cm] {(i)};
  \end{scope}
  
    \begin{scope}[xshift=6cm]
    \path[->]			node[format] (x1) {$1$} 
    		node[format, right of=x1] (x2) {$2$} 
                  (x1) edge[pre, blue] (x2)
		node[format, right of=x2] (x3) {$3$} 
		node[format, right of=x3] (x4) {$4$} 
                  (x3) edge[pre, blue] (x4)
                  (x2.45) edge[pre, <->, bend left, red] (x4.135)
                  (x1.315) edge[pre, <->, bend right, red] (x3.225);
   \node[below of=x2, xshift=.6cm, yshift = .2cm] {(ii)};
  \end{scope}

  \end{tikzpicture}

\end{center}
\caption{(i) A conditional mixed graph
${\mathscr G} (V = \{2,4\},W = \{1,3\})$ describing the structure of a kernel
$q_{24}(x_{2},x_{4} \cmid x_{1},x_{3})$. (ii) The corresponding graph
${\mathscr G} ^{|W}$ from which the conditional Markov property given by ${\mathscr G} $ may
be obtained by applying m-separation. The edge $1 \leftrightarrow 3$ is
added.}
\label{fig:cadmg}
\end{figure}

We will apply the standard genealogical definitions to CADMGs. Conversely,
since an ADMG ${\mathscr G} (V)$ may be seen as a CADMG in which
$W=\emptyset $, all subsequent definitions for CADMGs will also apply to
ADMGs.

The \emph{induced subgraph of a CADMG ${\mathscr G} (V,W)$ on a set $A$}, denoted
${\mathscr G} _{A}$, is a CADMG with $\mathbb{V}({{\mathscr G} _{A}})= V\cap A$ and
$\mathbb{W}({{\mathscr G} _{A}})=W\cap A$, and precisely those edges from
${\mathscr G} $ that have both endpoints in $A$. Note that in forming
${\mathscr G} _{A}$, the status of the vertices in $A$ with regard to whether they
are in $V$ or $W$ is preserved.

Given a CADMG ${\mathscr G} (V \dotcup L,W)$ and a set of latent variables $L$ that
are all random, we define the \emph{latent projection} ${\mathscr G} (V,W)$ to be
the CADMG obtained by applying the standard definition of latent projection
for ADMGs (see the Supplementary Material, Section~A.3).

\begin{proposition}
In a CADMG ${\mathscr G} (V\cup L,W)$ if $V\cup W$ is ancestral, then
${{\mathscr G}}_{V\cup W} \equiv ({\mathscr G} (V \cup L, W))_{V \cup W} = {{\mathscr G}}(V, W)$.
\end{proposition}
Thus, the induced subgraph on an ancestral set $V\cup W$ is the same as
the latent projection onto $V\cup W$.

\begin{definition}%
\label{def:district}
A set of vertices $C$ in a CADMG ${\mathscr G} $ is called
\emph{bidirected-connected} if for every pair of vertices $c,d \in C$ there
is a bidirected path in ${\mathscr G} $ between $c$ and $d$ with every vertex on
the path in $C$. A maximal bidirected-connected set of vertices in
$\mathbb{V}({\mathscr G} )$ is referred to as a \emph{district} in ${\mathscr G} $. Let
\begin{equation*}
{\mathcal D}({\mathscr G} ) \equiv \{ D \mid D \text{ is a district in } {\mathscr G} \}
\end{equation*}
be the set of districts in ${\mathscr G} $. For $v \in \mathbb{V}({\mathscr G} )$, let
$\dis _{{\mathscr G}}(v)$ be the district containing $v$ in ${\mathscr G} $. We write
$\dis _{A}(v)$ as a shorthand for $\dis _{{\mathscr G} _{A}}(v)$, the district of
$v$ in the induced subgraph ${\mathscr G} _{A}$.
\end{definition}

\citet{tian02on} refer to districts in ADMGs as ``c-components.'' In models
associated with ADMGs, districts are used to specify variable partitions
that define terms in the factorization of observed random variables. For
this reason, districts in CADMGs include only random vertices. In a DAG
${\mathscr G} (V)$, the districts ${\mathcal D}({\mathscr G} ) = \{ \{v\} \mid v\in V\}$ are
the singleton subsets of $V$, these are naturally associated with the usual Markov
factors $p(x_{v} \cmid x_{\pavtex _{{\mathscr G}}(v)})$.

\subsection{Kernels}
\label{subsec:kernels}

We consider collections of real-valued random variables
$(X_{v})_{ v\in V}$ taking values in $({\mathfrak X}_{v})_{v\in V}$. For
$A\subseteq V$, we let
${\mathfrak X}_{A}\equiv \bigtimes _{v\in A} ({\mathfrak X}_{v})$, and
$X_{A}\equiv (X_{v})_{v\in A}$. Here, $v$ denotes a vertex and
$X_{v}$ the corresponding random variable, likewise $A$ denotes a vertex
set and $X_{A}$ is the vector $(X_{v} : v \in A)$.

Following \citeauthor{lauritzen96graphical} (\citeyear{lauritzen96graphical}), p.~46, we define a \emph{kernel} over
$\mathfrak{X}_{V}$ and indexed by $\mathfrak{X}_{W}$ to be a nonnegative
function $q_{V}(x_{V}\cmid x_{W})$ satisfying
%
\begin{equation}
\label{eq:cond} \sum_{x_{V} \in {\mathfrak{X}_{V}}} q_{V}(x_{V}
\cmid x_{W}) = 1,\quad\text{for all }x_{W} \in
\mathfrak{X}_{W}.
\end{equation}
We use the term ``kernel'' and write $q_{V}(\cdot | \cdot )$ (rather than
$p(\cdot | \cdot )$) to emphasize that these functions, though they satisfy
(\ref{eq:cond}), and thus most properties of conditional densities, will
not, in general, be formed via the usual operation of conditioning on the
event $X_{W}=x_{W}$. Following standard conventions for densities, we will
use $q_{V}(x_{V} \cmid x_{W})$ to refer either to the function itself,
or to its realization under specific values of $x_{V},x_{W}$, with the
precise meaning being clear from context. To conform with standard notation
for densities, we define for every $A \subseteq V$
%
\begin{align}
\label{eqn:kernel-marg-cond} q_{V}(x_{A} \cmid x_{W}) \equiv
\sum_{x_{V\setminus A} \in {
\mathfrak{X}_{V\setminus A}}} q_{V}(x_{V} \cmid
x_{W});\qquad %
 q_{V}(x_{V \setminus A} \cmid
x_{W \cup A}) \equiv \frac{q_{V}(x_{V} \cmid x_{W})}{q_{V}(x_{A} \cmid x_{W})}.
\end{align}
In general, for $R,S \subseteq V$ and $T \subset W$, the density
$q_{V}(x_{R} \cmid x_{S\cup T})$ may not be defined, since in the absence
of a distribution over $X_{W}$, we cannot integrate out the variables
$X_{W\setminus T}$ (though see Definition~\ref{def:ind} and subsequent
comments below). For disjoint $V_{1} \dot{\cup} V_{2} = V$ and
$W_{1} \dot{\cup} W_{2} = W$, we will sometimes write
$q_{V}(x_{V_{1}}, x_{V_{2}} \cmid x_{W_{1}}, x_{W_{2}})$ to mean
$q_{V}(x_{V_{1} \cup V_{2}} \cmid x_{W_{1} \cup W_{2}})$.

\subsection{Independence in kernels}
\label{subsec:kernel-indep}

We extend the notion of conditional independence to kernels over
$\mathfrak{X}_{V}$ indexed by $\mathfrak{X}_{W}$. A rigorous treatment
of conditional independence in settings where not all variables are random
was given in \citet{naiya13ci}.

\begin{definition}%
\label{def:ind}
For disjoint subsets $A, B, C \subseteq V\cup W$, we define $X_{A}$ to
be conditionally independent of $X_{B}$ given $X_{C}$ in a kernel
$q_{V}$, written
\begin{equation*}
X_{A} \civtex X_{B} \mid X_{C}
\quad[q_{V}]
\end{equation*}
if \textbff{either}:
\begin{itemize}
\item[\textup{(a)}] $A \cap W = \emptyset $ and
$q_{V}(x_{A} \cmid x_{B}, x_{C}, x_{W\setminus (B\cup C)})$ is a function
only of $x_{A}$ and $x_{C}$ (whenever this kernel is defined). In this
case, we define
$q_{V}(x_{A}\cmid x_{C}) \equiv q_{V}(x_{A} \cmid x_{B}, x_{C}, x_ {W
\setminus (B\cup C)})$,

{\textbff{or}}

\item[\textup{(b)}] $B \cap W = \emptyset $ and
$q_{V}(x_{B} \cmid x_{A}, x_{C}, x_ {W\setminus (A \cup C)})$ is a function
only of $x_{B}$ and $x_{C}$ (whenever this kernel is defined). In this
case, we define
$q_{V}(x_{B}\cmid x_{C}) \equiv q_{V}(x_{B} \cmid x_{A}, x_{C}, x_ {W
\setminus (B\cup C)})$.
\end{itemize}
%
\end{definition}
 See Figure~\ref{fig:venn} for an illustration of cases; it is necessary
for either $A$ or $B$ not to intersect $W$ because there is no joint distribution
over elements in $W$. This definition reduces to ordinary conditional independence
in the case $W = \emptyset $. The condition that the kernel should be defined
simply addresses the situation where the conditioning event has zero probability
density. We remark that if $(A \cup B) \cap W=\emptyset $, then both definitions
hold (or fail to hold) at the same time; the conditions become equivalent
to saying that the kernel
$q_{V}(x_{A}, x_{B} \cmid x_{C}, x_{W \setminus C})$ should factorize into
a piece that depends only upon $x_{A},x_{C}$ and a piece that depends only
upon $x_{B},x_{C}$.

\begin{figure}
\begin{center}
\begin{tikzpicture}[>=stealth, node distance=1.0cm,
pre/.style={->,>=stealth,ultra thick,line width = 1.4pt}]
  \begin{scope}
    \tikzstyle{format} = [inner sep=2pt,draw, thick, ellipse, line width=1.4pt, minimum size=6mm, rounded
        corners=3mm]
	  \draw[thick, rounded corners] (3.5,0) rectangle (0,2);
	  \draw[thick] (2.2,0) to (1.8,2);
	  \node[ellipse, draw, minimum width=1.0cm, xshift=0.8cm,yshift=0.9cm] (a) {$A$};
	  \node[ellipse, draw, minimum width=1.0cm, xshift=1.7cm,yshift=1.5cm] (b) {$B$};
	  \node[ellipse, draw, minimum width=1.0cm, xshift=1.9cm,yshift=0.5cm] (c) {$C$};
	  \node[xshift=0.2cm,yshift=1.8cm] (v) {$V$};
	  \node[xshift=3.3cm,yshift=1.8cm] (v) {$W$};

  \path[]	node[yshift = -0.5cm, xshift=1.75cm] (label) {(a)};
  \end{scope}
  \begin{scope}[xshift=4.0cm]
	  \draw[thick, rounded corners] (3.5,0) rectangle (0,2);
	  \draw[thick] (2.2,0) to (1.8,2);
	  \node[ellipse, draw, minimum width=1.0cm, xshift=0.8cm,yshift=0.9cm] (b) {$B$};
	  \node[ellipse, draw, minimum width=1.0cm, xshift=1.7cm,yshift=1.5cm] (a) {$A$};
	  \node[ellipse, draw, minimum width=1.0cm, xshift=1.9cm,yshift=0.5cm] (c) {$C$};
	  \node[xshift=0.2cm,yshift=1.8cm] (v) {$V$};
	  \node[xshift=3.3cm,yshift=1.8cm] (v) {$W$};

  \path[]	node[yshift = -0.5cm, xshift=1.75cm] (label) {(b)};
  \end{scope}
\end{tikzpicture}
\end{center}
\caption{Illustration of cases in Definition~\protect\ref{def:ind}:%
 (a) $A \cap W = \emptyset $; (b) $B \cap W = \emptyset $.}
\label{fig:venn}
\end{figure}

Note that the kernels appearing in (a) and (b) specify values for all of
the variables $X_{W}$, and are defined via conditioning in (a margin of)
the original kernel $q_{V}$. For example, in (a),
\begin{equation*}
q_{V}(x_{A} \cmid x_{B}, x_{C},
x_ {W\setminus (B\cup C)}) = q_{V}(x_{V
\cap (A\cup B\cup C)} \cmid x_{W} ) /
q_{V}(x_{V\cap (B\cup C)} \cmid x_{W}).
\end{equation*}

\begin{proposition}
\label{prop:add-w}
In a kernel $q_{V}(x_{V} \cmid x_{W})$, $X_{A} \civtex X_{B} \mid X_{C}$ if
and only if either $X_{A} \civtex X_{B\cup (W\setminus C)} \mid X_{C}$ or
$X_{B} \civtex X_{A\cup (W\setminus C)} \mid X_{C}$.
\end{proposition}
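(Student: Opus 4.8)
The plan is to prove the two directions separately, exploiting the disjunctive structure of Definition \ref{def:ind}. Recall that $X_A \ci X_B \mid X_C\ [q_V]$ holds iff clause (a) holds \emph{or} clause (b) holds, where clause (a) requires $A\cap W=\emptyset$ together with a functional-independence statement, and (b) is the symmetric condition with $B\cap W=\emptyset$. Note also that in both the hypothesis and the conclusion, the sets playing the roles of the two ``sides'' are disjoint from each other and from $C$: in $X_A \ci X_{B\cup(W\setminus C)}\mid X_C$ the set $B\cup(W\setminus C)$ is disjoint from $A$ (since $A,B,C$ are pairwise disjoint and $A\subseteq V\cup W$, while $B\cup(W\setminus C)$ absorbs exactly the $W$-part not already in $A\cup C$ \dots\ one must check $A\cap(W\setminus C)=\emptyset$, which need not hold in general, so a small reduction is needed first). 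The first step I would take is therefore to observe that it suffices to prove the equivalence when $W\subseteq A\cup B\cup C$; indeed, the variables $X_{W\setminus(A\cup B\cup C)}$ are ``spectator'' fixed indices and can be absorbed into, say, $B$ on both sides without changing the truth of any of the four statements (this uses that clause (a)/(b) already quantifies the density as a function that must not depend on $x_{W\setminus(B\cup C)}$).

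For the ``only if'' direction, suppose $X_A\ci X_B\mid X_C\ [q_V]$. By definition one of (a), (b) holds. If (a) holds, then $A\cap W=\emptyset$, and the displayed density $q_V(x_A\mid x_B,x_C,x_{W\setminus(B\cup C)})$ is a function of $x_A,x_C$ only; since (after the reduction above) $W\setminus(B\cup C)\subseteq A$ is empty, this density is literally $q_V(x_A\mid x_B,x_C)$, i.e.\ $q_V(x_A\mid x_{B\cup(W\setminus C)},x_C)$ up to relabelling, and it depends only on $x_A,x_C$. This is exactly clause (a) of the statement $X_A\ci X_{B\cup(W\setminus C)}\mid X_C$ (whose ``random side'' $A$ is disjoint from $W$, as required). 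Symmetrically, if (b) holds we obtain $X_B\ci X_{A\cup(W\setminus C)}\mid X_C$. Hence the disjunction in the conclusion holds.

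For the ``if'' direction, suppose WLOG $X_A\ci X_{B\cup(W\setminus C)}\mid X_C\ [q_V]$. Again one of the two clauses applies to this relation. The natural move is to show it forces clause (a) of $X_A\ci X_B\mid X_C$: from clause (a) of the hypothesis we get $A\cap W=\emptyset$ and that $q_V(x_A\mid x_{B},x_{W\setminus C},x_C)$ — note $W\setminus(B\cup(W\setminus C)\cup C)=\emptyset$ — is a function of $x_A,x_C$ only; restricting attention to how it varies in $x_B$ (and in the remaining fixed coordinates $x_{W\setminus(B\cup C)}\subseteq x_{W\setminus C}$) we read off precisely the requirement of clause (a) for $X_A\ci X_B\mid X_C$. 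The subtle case is when instead clause (b) of the hypothesis holds, i.e.\ $B\cup(W\setminus C)$ is disjoint from $W$ — but $B\cup(W\setminus C)\supseteq W\setminus C$, so this forces $W\subseteq C$, hence $W\setminus C=\emptyset$, and the hypothesis collapses to $X_A\ci X_B\mid X_C$ outright (with its clause (b), which is then exactly what we want).

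I expect the main obstacle to be purely bookkeeping: carefully tracking which coordinates are ``fixed'' ($W$) versus ``random'' ($V$) through the relabellings, verifying the disjointness side-conditions of Definition \ref{def:ind} at each invocation, and handling the ``spectator'' variables $X_{W\setminus(A\cup B\cup C)}$ cleanly via the initial reduction — there is no analytic content, only the need to check that every density written down is one that Definition \ref{def:ind} actually licenses (in particular that we never need to integrate out an un-distributed fixed variable). Once the reduction to $W\subseteq A\cup B\cup C$ is in place, each of the cases above is a one-line unfolding of the definition.
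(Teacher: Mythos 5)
Your overall strategy---unfolding Definition~\ref{def:ind} clause by clause and matching conditioning sets---is the same as the paper's (which simply asserts that the result ``follows directly from Definition~\ref{def:ind}''), and the substance of both directions is right: clause (a) of $X_A\ci X_B\mid X_C$ and clause (a) of $X_A\ci X_{B\cup(W\setminus C)}\mid X_C$ concern the \emph{same} density, because $B\cup C\cup(W\setminus(B\cup C))=B\cup C\cup W=(B\cup(W\setminus C))\cup C$ while the residual set $W\setminus\bigl(B\cup(W\setminus C)\cup C\bigr)$ is empty; likewise clause (b) of the left-hand side is verbatim clause (a) of the second disjunct; and clause (b) of either disjunct forces $W\subseteq C$, collapsing that disjunct to the original statement. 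This identity of conditioning sets already holds without any normalization, which makes your preliminary reduction unnecessary.

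The reduction is also the one step that would fail. Absorbing the spectator set $S=W\setminus(A\cup B\cup C)$ into $B$ is \emph{not} truth-preserving for kernel independence: take $V=\{v\}$, $W=\{w_1,w_2\}$, $A=\{w_1\}$, $B=\{v\}$, $C=\emptyset$, with $q_V(x_v\mid x_{w_1},x_{w_2})$ constant in $(x_{w_1},x_{w_2})$. Then $X_A\ci X_B\mid X_C$ holds via clause (b), but $X_A\ci X_{B\cup\{w_2\}}\mid X_C$ satisfies neither clause ($A$ meets $W$, so (a) fails; $B\cup\{w_2\}$ meets $W$, so (b) fails), so the ``reduced'' left-hand side is false while the original is true (and the second disjunct of the proposition is true). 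The same move also renders the second disjunct ill-formed, since $B\cup S$ is no longer disjoint from $A\cup(W\setminus C)\supseteq S$. The fix is simply to delete the reduction: in the clause-(a) case the hypothesis $A\cap W=\emptyset$ already gives the disjointness of $A$ from $B\cup(W\setminus C)$ that you were worried about, and in the clause-(b) case the relevant disjunct is the second one, whose sets are automatically disjoint because there $B\cap W=\emptyset$.
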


\subsection{Semigraphoid axioms in kernels}
\label{subsubsec:kernel-semi-graphoid}

Classical conditional independence constraints may logically imply other
such constraints. Though no finite axiomatization of these connections
is possible \citep{studeny92conditional}, deductive derivations of conditional
independence constraints in DAGs can be restricted to the semigraphoid
axioms of symmetry and the ``chain rule'' \citep{dawid79cond}, which we
reproduce here:
\begin{align*}
(X_{A} \civtex X_{B} \mid X_{C}) \quad&
\Leftrightarrow \quad(X_{B} \civtex X_{A} \mid X_{C}),
\\
(X_{A} \civtex X_{B} \mid X_{C \cup D}) \land
(X_{A} \civtex X_{D} \mid X_{C}) \quad&\Leftrightarrow\quad
(X_{A} \civtex X_{B \cup D} \mid X_{C}).
\end{align*}
(The chain rule axiom is sometimes written as the three separate axioms
of contraction, decomposition, and weak union.) We now show that, unsurprisingly,
conditional independence constraints defined for kernels also obey these
axioms. An additional set of axioms called \emph{separoids} has been shown
to apply to versions of conditional independence involving nonstochastic
variables like $X_{W}$ \citep{naiya13ci}.

\begin{proposition}
\label{prop:semigraphoid}
The semigraphoid axioms are sound for kernel independence.
\end{proposition}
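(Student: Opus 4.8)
The plan is to reduce kernel conditional independence to ordinary conditional independence in a genuine joint density and then quote the classical semi-graphoid axioms, which hold for every joint density. Fix any strictly positive density $r(x_W)$ on $\mathfrak{X}_W$ (e.g.\ uniform on the discrete coordinates and a nondegenerate Gaussian on the vector-space coordinates) and set $\tilde p(x_V,x_W)\equiv q_V(x_V\mid x_W)\, r(x_W)$, a bona fide joint density. Conditioning $\tilde p$ on \emph{all} of $x_W$ returns $q_V(\cdot\mid x_W)$, so for disjoint $A,B,C\subseteq V\cup W$ one has $q_V(x_A\mid x_B,x_C,x_{W\setminus(B\cup C)})=\tilde p(x_A\mid x_{(B\cup C)\cap V},x_W)$ whenever either side is defined (here using $r>0$). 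Clause (a) of Definition~\ref{def:ind} then says precisely that this function of $(x_{(B\cup C)\cap V},x_W)$ depends only on $(x_A,x_C)$, i.e.\ that $X_A\ci X_{(B\cup W)\setminus C}\mid X_C\ [\tilde p]$; clause (b) is the same with $A$ and $B$ interchanged. This yields the characterization
\[
X_A\ci X_B\mid X_C\ [q_V]\iff\bigl(A\cap W=\emptyset,\ X_A\ci X_{(B\cup W)\setminus C}\mid X_C\ [\tilde p]\bigr)\ \text{ or }\ \bigl(B\cap W=\emptyset,\ X_B\ci X_{(A\cup W)\setminus C}\mid X_C\ [\tilde p]\bigr),
\]
which is the engine of the proof.

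Symmetry of kernel independence is then immediate, since Definition~\ref{def:ind} is already symmetric in $A$ and $B$. For the chain rule I would translate each of the three kernel statements into its $\tilde p$-counterpart via the characterization and recover the desired kernel statement by applying classical contraction, decomposition, and weak union to $\tilde p$. The $\Leftarrow$ direction (decomposition and weak union) is clean: $X_A\ci X_{B\cup D}\mid X_C\ [q_V]$ holds through one fixed clause, and a single decomposition (resp.\ weak union) in $\tilde p$ produces $X_A\ci X_D\mid X_C\ [q_V]$ (resp.\ $X_A\ci X_B\mid X_{C\cup D}\ [q_V]$) through that same clause, using the set identity $(B\cup D\cup W)\setminus C=[(B\cup W)\setminus(C\cup D)]\dotcup[D\setminus C]$ together with $C\cup(D\setminus C)=C\cup D$. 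The $\Rightarrow$ direction (contraction) needs a $2\times2$ split according to which clause realizes each hypothesis; in each case the conclusion follows by combining the two $\tilde p$-statements via contraction after an auxiliary decomposition or weak-union step, and one repeatedly uses that whenever a hypothesis is realized by clause (b) the set it concerns is disjoint from $W$, which collapses $W\setminus(C\cup D)$ to $W\setminus C$ and keeps the $W$-side bookkeeping consistent.

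The main obstacle is entirely bookkeeping rather than conceptual: establishing the characterization while keeping straight which parts of the various conditioning sets lie in $V$ versus $W$, and checking that the ``whenever defined'' provisos match on both sides; and then pushing the four contraction cases through while reconciling the set differences $W\setminus C$ and $W\setminus(C\cup D)$. Once the reduction to $\tilde p$ is in hand there is nothing conceptual left, since the classical semi-graphoid axioms for honest joint densities do all the work, and one could equally absorb part of the case analysis by first applying the preceding Proposition to normalize each hypothesis.
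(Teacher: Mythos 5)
Your reduction is correct, and it takes a genuinely different route from the paper. The paper proves the chain rule by direct computation inside the kernel formalism: it writes $q_V(x_A\mid x_C,x_D,x_{W\setminus(C\cup D)})$ as a ratio of sums over $x_{B\cap V}$, pulls the assumed conditional out of the numerator, and cancels, then handles clause (b) by swapping the roles of the variables. You instead adjoin an arbitrary strictly positive marginal $r(x_W)$ to obtain an honest joint $\tilde p$, observe that the $r$-factor cancels so that kernel conditionals coincide with $\tilde p$-conditionals given all of $x_W$, and translate clauses (a) and (b) of Definition~\ref{def:ind} into the classical statements $X_A\ci X_{(B\cup W)\setminus C}\mid X_C\ [\tilde p]$ and $X_B\ci X_{(A\cup W)\setminus C}\mid X_C\ [\tilde p]$ (this is essentially the paper's own Proposition following Definition~\ref{def:ind}, transported to $\tilde p$). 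From there the classical semi-graphoid axioms do all the work. I checked the characterization (the set identity $(((B\cup C)\cap V)\cup W)\setminus C=(B\cup W)\setminus C$ holds since $A,B,C$ are disjoint, and the definedness provisos match because $r>0$) and all four contraction cases: the two observations you flag --- that a clause-(b) hypothesis forces its set to be disjoint from $W$, collapsing $W\setminus(C\cup D)$ to $W\setminus C$, and that each mixed case needs one auxiliary weak-union or decomposition step before contracting --- are exactly what is needed, and each case lands in a legitimate clause of the conclusion. What your approach buys is modularity: nothing about kernels needs to be re-proved, since the semi-graphoid axioms for arbitrary (not necessarily positive) densities are standard. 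What it costs is that the four-way case analysis is only sketched, and a referee would want the mixed cases (e.g.\ first hypothesis via (a), second via (b)) written out, since which clause realizes the conclusion differs between cases; the paper's computation, by contrast, is self-contained and makes the cancellation mechanism explicit. Both proofs are valid.
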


\subsection{Constructing kernels}
\label{subsec:kernel-properties}

We will typically construct new kernels via the operation of dividing either
a distribution $p(x_{V})$ by $p(x_{H} \cmid x_{T})$ or an existing kernel
$q_{V}(x_{V} \cmid x_{W})$ by $q_{V}(x_{H} \cmid x_{T \cup W})$, where
$H\dotcup T \subseteq V$. For the results in the remainder of this section,
we will consider a kernel $q_{V}(x_{V} \cmid x_{W})$ where
$V = R \dotcup H \dotcup T$, and a new object
%
\begin{align}
q^{*}_{V \setminus H}(x_{V \setminus H} \cmid x_{H},
x_{W}) = q^{*}_{V
\setminus H}(x_{R},
x_{T} \cmid x_{H}, x_{W})&\equiv
\frac{q_{V}(x_{R}, x_{H}, x_{T} \cmid x_{W})}{
q_{V}(x_{H} \cmid x_{T}, x_{W})}. \label{eq:invariance-one}
\end{align}
See Figure~\ref{fig:venn2} for an illustration.

\begin{figure}
\begin{center}
\begin{tikzpicture}[>=stealth, node distance=1.0cm,
pre/.style={->,>=stealth,ultra thick,line width = 1.4pt}]
  \begin{scope}
    \tikzstyle{format} = [inner sep=2pt,draw, thick, ellipse,
line width=1.4pt, minimum size=6mm, rounded corners=3mm ]
          \draw[thick, 
] (4.5,0) rectangle (0,2);
          \draw[thick] (3.2,0) to (2.8,2);
\fill[gray] (3.2,0) to (2.8,2) to (4.5,2) to (4.5,0);
            \draw[thick] (2.2,0) to (1.8,2);
             \draw[thick] (1.2,0) to (0.8,2);
          \node[xshift=0.2cm,yshift=1.8cm] (r) {$R$};
           \node[xshift=1.2cm,yshift=1.8cm] (h) {$H$};
             \node[xshift=2.2cm,yshift=1.8cm] (t) {$T$};
          \node[xshift=4.2cm,yshift=1.8cm, 
          	white] (w) {$W$};
\path[] node[yshift = -0.5cm, xshift=1.75cm] (label) {$q_V(x_V \cmid x_W)$};
  \end{scope}
   \begin{scope}[xshift=6cm]
    \tikzstyle{format} = [inner sep=2pt,draw, thick, ellipse, line width=1.4pt, minimum
size=6mm, rounded
        corners=3mm]
          \draw[thick, 
] (4.5,0) rectangle (0,2);
          \draw[thick] (3.2,0) to (2.8,2);
            \draw[thick] (2.2,0) to (1.8,2);
             \draw[thick] (1.2,0) to (0.8,2);

\fill[gray] (3.2,0) to (2.8,2) to (4.5,2) to (4.5,0);

\fill[gray] (1.8,2) to (0.8,2) to (1.2,0) to (2.2,0);

          \node[xshift=0.2cm,yshift=1.8cm] (r) {$R$};
           \node[xshift=1.2cm,yshift=1.8cm, 
           	white] (h) {$H$};
             \node[xshift=2.2cm,yshift=1.8cm] (t) {$T$};
          \node[xshift=4.2cm,yshift=1.8cm, 
          	white] (w) {$W$};
 \path[]        node[yshift = -0.5cm, xshift=1.75cm] (label) {$q^*_{V\setminus
H}(x_{V\setminus H} \cmid x_{W\cup H})$};
  \end{scope}

%
\end{tikzpicture}
\end{center}
\caption{Structure of sets for invariance properties considered in Section~\protect\ref{subsec:kernel-properties}; $V = R\dotcup H \dotcup T$; shaded sets
are fixed; see equation (\protect\ref{eq:invariance-one}). Note that here vertices
earlier in the ordering are on the right, so
$R\succ H\succ T \succ W$.}
\label{fig:venn2}
\end{figure}

\begin{lemma}
\label{lem:fixing-yields-kernel}
$q^{*}_{V\setminus H}(x_{R}, x_{T}\cmid x_{H}, x_{W})$ is a kernel.
\end{lemma}

\begin{lemma}
\label{lem:invariance-kernel}
For the kernel constructed in \textup{(\ref{eq:invariance-one})},
%
\begin{align}
q^{*}_{V\setminus H}(x_{R}, x_{T}\cmid
x_{H}, x_{W}) &= q_{V}(x_{R} \cmid
x_{H}, x_{T}, x_{W}) q_{V}(x_{T}
\cmid x_{W}), \label{eq:invariance-two}
\end{align}
and hence
%
\begin{align}
q^{*}_{V\setminus H}(x_{R}\cmid x_{H},
x_{T}, x_{W}) &= q_{V}(x_{R} \cmid
x_{H}, x_{T}, x_{W}); \label{eq:invariance-cond}
\\
q^{*}_{V\setminus H}(x_{T} \cmid x_{H},
x_{W}) &= q_{V}(x_{T} \cmid x_{W}) =
q^{*}_{V\setminus H}(x_{T} \cmid x_{W}) .
\label{eq:invariance-marg}
\end{align}
\end{lemma}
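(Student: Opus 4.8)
The plan is to verify the three displayed identities by direct manipulation of the defining equation (\ref{eq:invariance-one}), using only the elementary properties of kernels recorded in (\ref{eqn:kernel-marg}) and (\ref{eqn:kernel-cond}) together with the chain rule of probability (which, as noted in the paper, holds for kernels since they behave like conditional densities with respect to $x_V$). The first step is to establish (\ref{eq:invariance-two}). Starting from the definition, I would write the numerator $q_V(x_R, x_H, x_T \mid x_W)$ via the chain rule as $q_V(x_R \mid x_H, x_T, x_W)\, q_V(x_H, x_T \mid x_W)$, and then factor $q_V(x_H, x_T \mid x_W) = q_V(x_H \mid x_T, x_W)\, q_V(x_T \mid x_W)$. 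Substituting into (\ref{eq:invariance-one}), the factor $q_V(x_H \mid x_T, x_W)$ cancels against the denominator, leaving exactly $q_V(x_R \mid x_H, x_T, x_W)\, q_V(x_T \mid x_W)$, which is (\ref{eq:invariance-two}).

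Next I would derive (\ref{eq:invariance-cond}) from (\ref{eq:invariance-two}). The point is that the right-hand side of (\ref{eq:invariance-two}) already has the product form $f(x_R, x_H, x_T, x_W)\, g(x_T, x_W)$ with the $g$ factor not depending on $x_R$; dividing by its marginal over $x_R$ to form the conditional $q^*_{V\setminus H}(x_R \mid x_H, x_T, x_W)$ via (\ref{eqn:kernel-cond}), the $g$ factor again cancels (since $\sum_{x_R} q_V(x_R \mid x_H, x_T, x_W) = 1$), leaving $q_V(x_R \mid x_H, x_T, x_W)$. For (\ref{eq:invariance-marg}), I would marginalize (\ref{eq:invariance-two}) over $x_R$: summing $q_V(x_R \mid x_H, x_T, x_W)$ over $x_R$ gives $1$, so $q^*_{V\setminus H}(x_T \mid x_H, x_W) = q_V(x_T \mid x_W)$. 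Since this is already free of $x_H$, marginalizing further over $x_H$ (noting $\sum_{x_H} q^*_{V\setminus H}(x_H \mid x_T, x_W)$... or more simply, since the expression does not involve $x_H$ the marginal is unchanged) yields $q^*_{V\setminus H}(x_T \mid x_W) = q_V(x_T \mid x_W)$ as well, giving the chain of equalities in (\ref{eq:invariance-marg}).

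There is no serious obstacle here; the statement is essentially bookkeeping with the definitions, and the only thing requiring mild care is making sure the chain rule and the marginalization/conditioning operations are applied only with respect to the random coordinates $x_V$ (the coordinates $x_W$ are inert indices throughout), and that every expression written is actually well-defined, i.e.\ the relevant denominators are nonzero. The mildly delicate point worth spelling out is that when I claim the chain rule applies to kernels, this is exactly because (\ref{eqn:kernel-marg}) and (\ref{eqn:kernel-cond}) define marginals and conditionals of $q_V$ in the same way as for an ordinary density, so all the standard identities of elementary probability carry over verbatim; I would state this once and then use it freely. I would also remark that, because of Definition \ref{def:ind}, equations (\ref{eq:invariance-cond}) and (\ref{eq:invariance-marg}) can be read as saying $X_R \ci X_W \mid X_H \cup X_T$ and $X_T \ci X_H \cup X_W$ hold... but that interpretive remark is optional and not needed for the proof itself.
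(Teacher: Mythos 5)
Your proposal is correct and follows essentially the same route as the paper: the chain rule applied to $q_V(x_R,x_H,x_T\mid x_W)$ cancels the divisor $q_V(x_H\mid x_T,x_W)$ to give (\ref{eq:invariance-two}), after which (\ref{eq:invariance-cond}) and (\ref{eq:invariance-marg}) follow by normalizing and marginalizing over $x_R$, with the drop of $x_H$ from the conditioning set justified exactly as in the paper by the observation that $q_V(x_T\mid x_W)$ does not depend on $x_H$. The only difference is the (immaterial) order in which the two corollary identities are extracted, and your self-correction on the $x_H$ step lands on the right justification, so no gap remains.
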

 Note that if $R = \emptyset $,
$q^{*}_{V \setminus H}(x_{V \setminus H} \cmid x_{H}, x_{W}) = \sum_{x_{H}}
q_{V}(x_{V} \cmid x_{W}) = q_{V}(x_{T} \cmid x_{W})$.

\begin{proposition}[Separation]
\label{prop:constructed}
For the kernel constructed in \textup{(\ref{eq:invariance-one})}:
\begin{itemize}
\item[{\textup{(i)}}] $(X_{H} \civtex X_{T} \mid X_{W})$ holds in
$q^{*}_{V\setminus H}(x_{V \setminus H} \cmid x_{H}, x_{W})$.
\item[{\textup{(ii)}}] If
$X_{V} \civtex X_{W \setminus W_{1}} \mid X_{W_{1}}$ in $q_{V}$, then
$X_{H \cup (W \setminus W_{1})} \civtex X_{T} \mid X_{W_{1}}$ in
$q^{*}_{V \setminus H}$.
\item[{\textup{(iii)}}] If $R = \emptyset $,
$X_{H} \civtex X_{V \setminus H} \cmid X_{W}$ in $q^{*}_{V \setminus H}$.
\end{itemize}
 %
%
\end{proposition}

\subsection{Preservation of existing independences in a kernel}
\label{subsec:kernel-preserve}

We now state two important properties that transfer conditional independence
statements to a new kernel formed via the operation (\ref{eq:invariance-one}).

\begin{proposition}[Ordering]
\label{prop:ind-preserve-1}
Given disjoint sets $A, B,C$, where $C$ may be empty, if
$A,B,C \subseteq T \cup W$, then
%
\begin{align}
X_{A} \civtex X_{B} \mid X_{C} \quad [q_{V}]\quad \quad\Leftrightarrow\quad\quad X_{A} \civtex X_{B} \mid
X_{C}\quad \bigl[q^{*}_{V\setminus H}\bigr]. \label{eqn:ordering}
\end{align}
\end{proposition}
 This result, which follows directly from (\ref{eq:invariance-marg}), states
that, given any ordering of variables in which $W \cup T$ precedes
$H$, which in turn precedes $R$, performing the operation in (\ref{eq:invariance-one})
on $X_{H}$ preserves conditional independence statements among variables
that precede $H$ in the ordering; see Figure~\ref{fig:venn3}(i). Note that,
by Definition~\ref{def:ind}, both independences in (\ref{eqn:ordering})
imply either $A \subseteq T$ or $B \subseteq T$.

\begin{figure}
\begin{center}
\begin{tikzpicture}[>=stealth, node distance=1.0cm,
pre/.style={->,>=stealth,ultra thick,line width = 1.4pt}]
  \begin{scope}
    \tikzstyle{format} = [inner sep=2pt,draw, thick, ellipse,
line width=1.4pt, minimum size=6mm, rounded corners=3mm ]
          \draw[thick, 
] (4.5,0) rectangle (0,2);
          \draw[thick] (3.4,0) to (3.0,2);
            \draw[thick] (2.2,0) to (1.8,2);
             \draw[thick] (1.2,0) to (0.8,2);
         \node[ellipse, draw, minimum width=1.0cm, xshift=2.6cm,yshift=1.6cm, scale=0.6] (b) {$A$};
          \node[ellipse, draw, blue, minimum width=2.0cm, xshift=3.1cm,yshift=1.0cm, scale=0.6] (a) {$B\quad\quad\;$};
         \node[ellipse, draw, green, minimum width=2.0cm, xshift=3.2cm,yshift=0.4cm, scale=0.6] (c) {$C\quad\quad\;$};
          \node[xshift=0.2cm,yshift=1.8cm] (r) {$R$};
           \node[xshift=1.15cm,yshift=1.8cm,red] (h) {$H$};
             \node[xshift=2.1cm,yshift=1.8cm] (t) {$T$};
          \node[xshift=4.2cm,yshift=1.8cm, red] (w) {$W$};
\path[] node[yshift = -0.5cm, xshift=1.75cm] (label) {(i) Independence in $q_V(x_T \cmid x_W)$.};
  \end{scope}
   \begin{scope}[xshift=7cm]
    \tikzstyle{format} = [inner sep=2pt,draw, thick, ellipse, line width=1.4pt, minimum
size=6mm, rounded
        corners=3mm]
          \draw[thick, 
] (4.5,0) rectangle (0,2);
          \draw[thick] (3.8,0) to (3.4,2);
            \draw[thick] (2.8,0) to (2.4,2);
             \draw[thick] (1.8,0) to (1.4,2);
              \draw[very thick, blue] (0.8,1.97) to (4.2,1.97) to (4.3,1.01) to (0.6,1.01) to (0.8,1.97);
               \draw[very thick, green] (1.2,0.97) to (4.2,0.97) to (4.3,0.03) to (1.0,0.03) to (1.2,0.97);
          \node[xshift=0.2cm,yshift=1.7cm] (r) {$R$};
           \node[xshift=1.7cm,yshift=1.7cm, red] (h) {$H$};
             \node[xshift=2.8cm,yshift=1.7cm] (t) {$T$};
          \node[xshift=4.2cm,yshift=1.7cm, red] (w) {$W$};
           \node[xshift=0.8cm,yshift=1.2cm, scale=0.6, blue] (t) {$B$};
             \node[ellipse, draw, thick, minimum width=1.0cm, xshift=0.6cm,yshift=0.5cm, scale=0.6] (a) {$A$};
              \node[xshift=1.2cm,yshift=0.2cm, scale=0.6, green] (t) {$C$};
 \path[]        node[yshift = -0.5cm, xshift=2.3cm] (label) {(ii) Independence in $q_V(x_R \cmid x_H, x_T, x_W)$.};
  \end{scope}

%
\end{tikzpicture}
\end{center}
\caption{Independences that are preserved after fixing in a kernel described
in (i) Proposition~\protect\ref{prop:ind-preserve-1} (where $A\subseteq T$), and
(ii) Proposition~\protect\ref{prop:ind-preserve-2}.}
\label{fig:venn3}
\end{figure}

\begin{proposition}[Modularity]
\label{prop:ind-preserve-2}
Given disjoint sets $A, B,C$, where $C$ may be empty, if
$A \subseteq R$ and $(B \cup C) \supseteq H \cup T$, then
\begin{equation*}
X_{A} \civtex X_{B} \mid X_{C}\quad [q_{V}] \quad\quad\Leftrightarrow\quad\quad X_{A} \civtex X_{B} \mid
X_{C}\quad \bigl[q^{*}_{V\setminus H}\bigr].
\end{equation*}
\end{proposition}
 This result, which follows directly from (\ref{eq:invariance-cond}), is
illustrated in Figure~\ref{fig:venn3}(ii). In words, it says that if we
express a probability distribution as a set of factors via the chain rule
of probability, and a conditional independence statement can be stated
exclusively in terms of one of the factors, then applying the operation
in (\ref{eq:invariance-one}) such that another factor is dropped does not
affect this conditional independence statement. In other words, ``factors
are modular.''

\subsection{Markov properties for CADMGs}
\label{subsec:markov-cadmg}

As described earlier, a CADMG ${\mathscr G} (V,W)$ represents the independence structure
of a kernel $q_{V}(x_{V}\cmid x_{W})$. We now introduce a number of Markov
properties for a given kernel, the equivalence of which we will state in
Section~\ref{subsec:markov}.%

\subsubsection{The CADMG global Markov property}
\label{subsec:cadmg-global}

The global Markov property for CADMGs may be derived from m-separation
via the following simple construction.

\begin{definition}
Given a CADMG ${\mathscr G} (V,W)$, we define ${\mathscr G} ^{|W}$ to be a mixed graph with
vertex set $V^{*}=V \cup W$, and edge set
\begin{equation*}
E^{*} \equiv E \cup \{ w\leftrightarrow w^{\prime }\mid w,w^{\prime }
\in W\}.
\end{equation*}
\end{definition}
 In words, the graph ${\mathscr G} ^{|W}$ is formed from ${\mathscr G} $ by adding bidirected
edges between all pairs of vertices $w,w^{\prime }\in W$, and then eliminating
the distinction between vertices in $V$ and $W$; see Figure~\ref{fig:cadmg}(ii)
for an example.

\begin{definition}
A kernel $q_{V}$ satisfies the \textit{global Markov property} for a CADMG
${\mathscr G} (V,W)$ if for arbitrary disjoint sets $A,B,C$, ($C$ may be empty),
\begin{align*}
A \text{ is m-separated from }B\text{ given }C\text{ in }{\mathscr G} ^{|W}
\quad\implies \quad X_{A}\civtex X_{B} \mid X_{C}
\quad[q_{V}].
\end{align*}
\end{definition}
 We denote this set of kernels ${\mathcal{P}}^{\mathrm{c}}_{\mathrm{m}}({\mathscr G} )$, where ``\textup{c}''
and ``\textup{m}'' denote \emph{C}ADMG and \textit{m}-separation, respectively.

\subsubsection{The CADMG local Markov property}

The (ordered) local Markov property for a DAG states that each vertex $v$ is independent
of vertices prior to $v$ (under a topological ordering) conditional on
the parents of $v$. In a CADMG, the Markov blanket plays a role analogous
to that of the set of parents.

If $t\in V$, then the \emph{Markov blanket} of $t$ in ${\mathscr G} (V,W)$ is defined
as
%
\begin{equation}
\label{eq:mb2} \mbvtex _{{\mathscr G}}({t}) \equiv \pavtex _{{\mathscr G}}
\bigl( \dis _{{\mathscr G}}(t) \bigr) \cup \bigl(\dis _{{\mathscr G}}(t)\setminus
\{t\} \bigr).
\end{equation}

Given a vertex $t \in V$ such that $\chvtex _{{\mathscr G}}(t) = \emptyset $, a kernel
$q_{V}$ obeys the \emph{local Markov property for ${\mathscr G} $ at $t$} if
%
\begin{equation}
\label{eq:local} X_{t} \civtex X_{(V\cup W)\setminus (\mbvtex _{{\mathscr G}}({t})\cup \{t\})} \mid X_{\mbvtex _{{\mathscr G}}({t})}
\quad[q_{V}].
\end{equation}

 If $\prec $ is a topological total ordering on the vertices in ${\mathscr G} $, then
for a subset $A$ define $\max_{\prec}(A)$ to be the $\prec $-greatest
vertex in $A$.

We define the set of kernels obeying the \emph{ordered local Markov property
for the CADMG ${\mathscr G} (V,W)$ under the ordering $\prec $} as follows:
%
\begin{align}\label{eq:local-one-step-model}
{\mathcal{P}}^{\mathrm{c}}_{\mathrm{l}}({\mathscr G} ,\prec ) \equiv {}&\Bigl\{
q_{V}(x_{V} \cmid x_{W}) 
\;\left|\; \vphantom{f_A^D} \right.
\text{for every
ancestral set }A, \text{with } { \max_{
\prec}}(A) \in V,
\nonumber
\\
& q_{V}(x_{V\cap A}\cmid x_{W}) \text{ obeys the
local Markov property for }
\\
& {\mathscr G} (V\cap A,W)\text{ at } { \max_{
\prec}}(A)\Bigr\}.
\nonumber
\end{align}

We will make use of the following extension of the Markov blanket notation: for an ancestral set
$A$ in a CADMG ${\mathscr G}$ and a vertex $t \in V \cap A$ such that
$\chvtex _{{\mathscr G}}(t) \cap A = \emptyset $, let
%
\begin{equation}
\label{eq:mb-in-A} \mbvtex _{{\mathscr G}}(t,A) \equiv \mbvtex _{{\mathscr G} _{A}}(t).
\end{equation}

\begin{proposition}%
\label{prop:mb}
Given a CADMG ${\mathscr G} $, an ancestral set $A$ and a random vertex
$t \in A$ such that $\chvtex _{{\mathscr G}}(t) \cap A = \emptyset $:
\begin{itemize}
\item[\textup{(i)}]
$\mbvtex _{{\mathscr G}}(t,A) \subseteq \mbvtex _{{\mathscr G}}(t) \subseteq D \cup \pavtex _{{\mathscr G}}(D)$,
where $t \in D \in {\mathcal{D}}({\mathscr G} )$;
\item[\textup{(ii)}] if $A^{*}$ is an ancestral set and
$t \in A^{*} \subseteq A$, then
$\mbvtex _{{\mathscr G}}(t,A^{*}) \subseteq \mbvtex _{{\mathscr G}}(t,A)$.
\end{itemize}
%
\end{proposition}

\subsubsection{The CADMG augmented Markov property}

The following is the analog of moralization in DAGs for CADMGs. For a CADMG
${\mathscr G} (V,W)$, the \textit{augmented graph derived from ${\mathscr G} $}, denoted
$({\mathscr G} )^{a}$, is an undirected graph with vertex set $V \cup W$ such that
$c \text{ --- } d \text{ in } ({\mathscr G} )^{a}$ if and only if $c$ and $d$ are connected
by a path in ${\mathscr G} ^{|W}$ containing only colliders (or a single edge).

A kernel $q_{V}$ obeys the
\textit{augmented Markov property for a CADMG ${\mathscr G} (V, W)$} if
$X_{A}\civtex X_{B} \mid X_{C}$ in $q_{V}$ whenever: for arbitrary disjoint
sets $A,B,C$ ($C$ may be empty), every path in
$({\mathscr G} _{\anvtex (A\cup B\cup C)})^{a}$ from any $a \in A$ to any
$b \in B$ passes through a vertex in $C$. We denote the set of such kernels
by ${\mathcal{P}}^{\mathrm{c}}_{\mathrm{a}}({\mathscr G} )$.

\subsection{Tian factorization for CADMGs}
\label{subsec:fact-cadmg}

The joint distribution under a DAG model may be factorized into univariate
densities. In DAG models, these factors take the form
$p(x_{a} \cmid x_{\pavtex _{{\mathscr G}}(a)})$. This factor is a conditional distribution
for a singleton variable $X_{a}$, given the set of variables corresponding
to parents of $a$ in the graph. The factorization property may be generalized
to CADMGs by requiring factorization of $q_{V}$ into kernels over districts.

We define the set of kernels that \emph{Tian factorize} with respect to
a CADMG:
%
\begin{align}\label{eq:tian-factor}\begin{split}
{\mathcal{P}}^{\mathrm{c}}_{\mathrm{f}}({\mathscr G} ) \equiv{}& \biggl\{
q_{V} \;\left|\;\, \text{for every ancestral set }A, \text{ there exist
kernels }f_{D}^{A}( \cdot | \cdot ) \right.
\\
& \text{s.t. } q_{V}(x_{V\cap A} \cmid x_{W}) =
\prod_{D \in {\mathcal{D}}({\mathscr G} _{A})} f_{D}^{A}(x_{D}
\cmid x_{\pavtex _{
{\mathscr G}}(D)\setminus D}) \biggr\}.
\end{split}
\end{align}
In the next lemma, we show that the terms
$f_{D}^{A}(\cdot \cmid \cdot )$ arising in (\ref{eq:tian-factor}) are equal
to products of univariate conditional densities, that is,~instances of
the g-formula of \citet{robins86new}, and thus $f_D^A(\cdot \cmid \cdot)$ does not depend on $A$ other than through $D$.

\begin{lemma}%
\label{lem:markovfact}
Let ${\mathscr G} $ be a CADMG with topological ordering $\prec $. If
$q_{V} \in {\mathcal{P}}^{\mathrm{c}}_{\mathrm{f}}({\mathscr G} )$, then for every ancestral set
$A$ and every $D \in {\mathcal{D}}({\mathscr G} _{A})$, we have
%
\begin{equation}
\label{eq:district-factor-ident} f_{D}^{A}(x_{D} \cmid
x_{\pavtex (D) \setminus D}) = \prod_{d \in D}
q_{V}(x_{d} \cmid x_{T^{(d,D)}_{\prec}}),
\end{equation}
where
$T^{(d,D)}_{\prec }\equiv \mbvtex _{{\mathscr G}}( d,  \anvtex _{{\mathscr G}}(D) \cap
{\color{blue}(\pre _{{\mathscr G} ,\prec}(d) \cup \{ d \})} )$, so that
%
\begin{equation}
\label{eq:markovb} q_{V}(x_{d} \cmid x_{T^{(d,D)}_{\prec}}) =
q_{V}(x_{d} \cmid x_{A
\cap \pre _{{\mathscr G} ,\prec}(d)}, x_{W}).
\end{equation}
Conversely, if \textup{(\ref{eq:markovb})} holds for all $d \in A$ and ancestral
sets $A$, and the $f_{D}^{A}$ functions are defined by \textup{(\ref{eq:district-factor-ident})},
then $q_{V} \in {\mathcal{P}}^{\mathrm{c}}_{\mathrm{f}}({\mathscr G} )$.
\end{lemma}

Note that by Proposition~\ref{prop:mb},
$\mbvtex _{{\mathscr G}}( d, \anvtex _{{\mathscr G}}(D) \cap \pre _{{\mathscr G} ,\prec}(d)) \subseteq D
\cup \pavtex _{{\mathscr G}}(D)$. Lemma~\ref{lem:markovfact} has the following important
consequence:
%
\begin{align}\begin{split}
{\mathcal{P}}^{\mathrm{c}}_{\mathrm{f}}({\mathscr G} ) = {}&\biggl\{ q_{V}
\;\left|\;\, \text{for every ancestral set }A, \vphantom{f_A^D} \right.
\\
& q_{V}(x_{V\cap A} \cmid x_{W}) = \prod
_{D \in {
\mathcal{D}}({\mathscr G} _{A})} q_{D}(x_{D} \cmid
x_{\pavtex (D)\setminus D})  \biggr\}, \label{eq:tian-factor2}
\end{split}\end{align}
where $q_{D}(x_{D} \cmid x_{\pavtex (D)\setminus D})$ is defined via the right-hand
side of (\ref{eq:district-factor-ident}) under any topological ordering.
In a context where we refer to $q_{V}$ and $q_{D}$ for
$D \in {\mathcal{D}}({\mathscr G} )$, it is implicit that $q_{D}$ is derived from
$q_{V}$ in this way.  In Section \ref{sec:nested} we will 
extend this notation to include other ``reachable'' sets.

\subsection{Equivalence of factorizations and Markov properties for CADMGs}
\label{subsec:markov}

The above definitions describe the same set of kernels due to the following
result.
%
\begin{theorem}%
\label{thm:one-step-markov}
\label{cadmg-equiv}
 ${\mathcal{P}}^{\mathrm{c}}_{\mathrm{f}}({\mathscr G} )   = {\mathcal{P}}^{\mathrm{c}}_{\mathrm{l}}({\mathscr G} ,
\prec )   =   {\mathcal{P}}^{\mathrm{c}}_{\mathrm{m}}({\mathscr G} )   =   {\mathcal{P}}^{\mathrm{c}}_{
\mathrm{a}}({\mathscr G} )$.
\end{theorem}
We thus use ${\mathcal{P}}^{\mathrm{c}}({\mathscr G} )$ to denote the set of such kernels,
and simply refer to a kernel $q_{V} \in {\mathcal{P}}^{\mathrm{c}}({\mathscr G} )$ as being
\emph{Markov} with respect to a CADMG ${\mathscr G} $.

\subsection{The fixing operation and fixable vertices}
\label{subsec:fixing}

We now introduce a ``fixing'' operation on an ADMG or CADMG that has the
effect of transforming a random vertex into a fixed vertex, thereby changing
the graph. However, we define this operation only for a subset of the vertices
in the graph, which we term the set of (potentially) fixable vertices.
%
\begin{definition}
Given a CADMG ${{\mathscr G}}(V,W)$, the set of \emph{fixable vertices} is
\begin{equation*}
{\mathbb F}({{\mathscr G}}) \equiv \bigl\{ v \in V \left|\; \dis _{{\mathscr G}}(v) \cap
\devtex _{{\mathscr G}}(v) = \{v\} \right.\bigr\}.
\end{equation*}
\end{definition}
In words, a vertex $v$ is fixable in ${\mathscr G} $ if there is no other vertex
$t$ that is both a descendant of $v$ and in the same district as $v$ in
${\mathscr G} $. For details on the causal interpretation of fixing, see Section~\ref{sec:conn_causal}.

\begin{proposition}%
\label{prop:exists-fixable-descendant}
For any $v \in V$, we have
$\devtex _{{\mathscr G}}(v) \cap \dis _{{\mathscr G}}(v) \cap {\mathbb F}({{\mathscr G}}) \neq
\emptyset $.
\end{proposition}
Thus, if a vertex in a district is not fixable then there is a descendant
of the vertex within the district that is fixable. In particular, every
district contains at least one vertex that is fixable.

Recall that $\mbvtex _{{\mathscr G}}(t)$, defined in (\ref{eq:mb2}), is the set of vertices
$v \in (V \setminus \{t\}) \cup W$, which can be reached via paths of the
form: $t\leftrightarrow \cdots \leftrightarrow v$ or
$t \leftrightarrow \cdots \leftrightarrow \leftarrow v$ or
$t \leftarrow v$. Suppose that $q_{V}$ is Markov with respect to the CADMG
${\mathscr G} $; if $\chvtex _{{\mathscr G}}(t) = \emptyset $, then (\ref{eq:local}) follows by
the CADMG local Markov property. More generally, if $t$ is fixable then
%
\begin{align}
\label{eq:mb-when-t-fixable} X_{t} \civtex X_{\ndvtex _{{\mathscr G}}(t) \setminus \mbvtex _{{\mathscr G}}(t)} \cmid
X_{\mbvtex _{{\mathscr G}}(t)}\quad [q_{V}]
\end{align}
follows from m-separation in ${\mathscr G} ^{|W}$; thus, in addition,
$X_{t} \civtex X_{W \setminus \mbvtex _{{\mathscr G}}(t)} \cmid X_{\mbvtex _{{\mathscr G}}(t)}$ in
$q_{V}$. These hold even if $\chvtex _{{\mathscr G}}(t) \neq \emptyset $.

\begin{definition}
Given a CADMG ${{\mathscr G}}(V,W)$, and a kernel $q_{V}(x_{V} \cmid x_{W})$, for
every $r \in {\mathbb F}({{\mathscr G}})$, we associate a \emph{fixing transformation
$\phi _{r}$ on the graph} ${{\mathscr G}}$ defined as follows:
\begin{equation*}
\phi _{r} ({{\mathscr G}}) \equiv %
{{\mathscr G}}^{*}\bigl(V
\setminus \{r\}, W \cup \{r\}\bigr),
\end{equation*}
where ${\mathscr G} ^{*}(V\setminus \{r\}, W \cup \{r\})$ has precisely the subset
of edges in ${\mathscr G} (V,W)$ that do not have arrowheads at $r$. With slight
abuse of notation, we define a \emph{fixing transformation
$\phi _{r}$ on the pair $(q_{V}(x_{V} \cmid x_{W}),{{\mathscr G}})$}:
%
\begin{equation}
\label{eq:fix-kernel} \phi _{r} \bigl(q_{V}(x_{V}
\cmid x_{W}); {{\mathscr G}}\bigr)\equiv %
\frac{q_{V}(x_{V} \cmid x_{W})}{q_{V}(x_{r} \cmid x_{\mbvtex _{{\mathscr G}}(r)})}.
\end{equation}
\end{definition}

Note that
$\mathbb{V}(\phi _{r} ({{\mathscr G}})) = \mathbb{V}({{\mathscr G}})\setminus \{r\}$ and
$\mathbb{W}(\phi _{r} ({{\mathscr G}})) = \mathbb{W}({{\mathscr G}})\cup \{r\}$, so that
$\phi _{r} ({{\mathscr G}})$ is a new CADMG in which the status of $r$ changes from
random to fixed, while $\phi _{r} (q_{V}; {{\mathscr G}})$ forms a new kernel, per
Lemma~\ref{lem:fixing-yields-kernel}. Although the CADMG
$\phi _{r} ({{\mathscr G}})$ is determined solely by the graph ${\mathscr G} $ given as input,
the transformation $\phi _{r} (q_{V}(x_{V} \cmid x_{W}); {{\mathscr G}})$ on the
kernel $q_{V}(x_{V} \cmid x_{W})$ is a function of both the graph and the
kernel itself.

\begin{proposition}
\label{prop:barren-fixable}
If $q_{V}$ is Markov with respect to a CADMG ${{\mathscr G}}(V,W)$ and
$\chvtex _{{\mathscr G}}(r) = \emptyset $ (and hence $r \in \mathbb{F}({\mathscr G} )$), then
\begin{equation*}
\phi _{r} \bigl(q_{V}(x_{V} \cmid
x_{W}); {{\mathscr G}}\bigr) = \sum_{x_{r}}
q_{V}(x_{V} \cmid x_{W}) = q_{V}(x_{V\setminus \{r\}}
\cmid x_{W}).
\end{equation*}
\end{proposition}
 Thus, if $\chvtex _{{\mathscr G}}(r) = \emptyset $, then $\phi _{r}$ simply marginalizes
over $X_{r}$: the conditioning on $X_{r}$ in
$\phi _{r} (q_{V}(x_{V} \cmid x_{W});{\mathscr G} )$ is vacuous in the sense that
the resulting kernel does not depend on the value of $X_{r}$. Though it
may at first appear unnatural, it greatly simplifies our subsequent analyses
to unify marginalization and fixing in this way.

\begin{proposition}%
\label{prop:fix-with-mb}
If $q_{V}$ is Markov w.r.t. ${{\mathscr G}}(V,W)$ and
$r \in \mathbb{F}({\mathscr G} )$, then
%
\begin{align}
\label{eq:alt-fix-def} \phi _{r} \bigl(q_{V}(x_{V}
\cmid x_{W});{{\mathscr G}}\bigr) = q_{V}(x_{V} \cmid
x_{W}) / q_{V}(x_{r} \cmid x_{\ndvtex _{{\mathscr G}}(r)}).
\end{align}
\end{proposition}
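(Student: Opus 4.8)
The plan is to show that the two candidate divisors agree, i.e.\ that
\[
q_V(x_r \mid x_{\mb_\G(r)}) = q_V(x_r \mid x_{\nd_\G(r)}),
\]
whenever $q_V$ is Markov with respect to $\G$ and $r$ is fixable. Once this identity is established, the claim is immediate by substituting into the defining equation (\ref{eq:fix-kernel}) for $\phi_r(q_V(x_V\mid x_W);{\G})$.

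First I would recall that since $r \in \mathbb{F}(\G)$, the invariance property (\ref{eq:mb-when-t-fixable}) applies: from m-separation in $\G^{|W}$ we have
\[
X_r \ci X_{\nd_\G(r)\setminus \mb_\G(r)} \mid X_{\mb_\G(r)}\quad [q_V].
\]
The key observation is that $\mb_\G(r) \subseteq \nd_\G(r)$: every vertex reachable from $r$ by an edge of the form $r \leftarrow v$, $r \leftrightarrow \cdots \leftrightarrow v$, or $r \leftrightarrow \cdots \leftrightarrow \leftarrow v$ is a non-descendant of $r$, because $r$ is fixable so $\dis_\G(r)\cap\de_\G(r)=\{r\}$, which forces every vertex in $\dis_\G(r)\setminus\{r\}$ to be a non-descendant of $r$, and every parent of a vertex in $\dis_\G(r)$ is a non-descendant of $r$ as well (an edge into such a parent would create a descendant in the district, or else the parent is already a non-descendant; one checks this using acyclicity together with fixability). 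Granting $\mb_\G(r)\subseteq\nd_\G(r)$, the sets $\mb_\G(r)$ and $\nd_\G(r)\setminus\mb_\G(r)$ partition $\nd_\G(r)$, so the displayed conditional independence says precisely that $q_V(x_r\mid x_{\nd_\G(r)})$ does not depend on the coordinates $x_{\nd_\G(r)\setminus\mb_\G(r)}$, i.e.\ equals $q_V(x_r\mid x_{\mb_\G(r)})$ (here one uses the convention (\ref{eqn:kernel-cond}) for conditioning in kernels, together with the fact that $\mb_\G(r)\subseteq V\cup W$ and $\nd_\G(r)\subseteq V\cup W$, so all the relevant conditional kernels are well defined).

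With the divisor identity in hand, the proof concludes: $\phi_r(q_V(x_V\mid x_W);{\G}) = q_V(x_V\mid x_W)/q_V(x_r\mid x_{\mb_\G(r)}) = q_V(x_V\mid x_W)/q_V(x_r\mid x_{\nd_\G(r)})$, which is (\ref{eq:alt-fix-def}).

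The main obstacle I anticipate is the set-containment step $\mb_\G(r)\subseteq\nd_\G(r)$ and, more precisely, verifying that (\ref{eq:mb-when-t-fixable}) genuinely follows from m-separation in $\G^{|W}$ for a fixable but not necessarily childless $r$ — the remark after (\ref{eq:mb-when-t-fixable}) asserts this but it must be justified. The cleanest route is to argue directly with m-connecting paths: any path in $\G^{|W}$ from $r$ to a vertex $u\in\nd_\G(r)\setminus\mb_\G(r)$ that avoids $\mb_\G(r)$ and is m-connecting given $\mb_\G(r)$ must leave $r$ along an edge $r\to\cdot$ (since any edge into $r$ lands in $\mb_\G(r)$, which is blocked) and hence its first vertex is a child of $r$, a descendant of $r$; tracing such a path and using fixability (no descendant of $r$ is in $\dis_\G(r)\setminus\{r\}$) together with acyclicity shows the path cannot reach a non-descendant of $r$ while remaining m-connecting — the collider/non-collider bookkeeping here is the delicate part and should be handled carefully, perhaps by appealing to a known m-separation lemma about Markov blankets in the companion reference \citep{richardson03sjs} rather than re-deriving it.
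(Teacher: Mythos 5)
Your proposal is correct and follows essentially the same route as the paper: the paper's proof is exactly the observation that, by Theorem \ref{thm:one-step-markov} and the m-separation statement (\ref{eq:mb-when-t-fixable}), $X_r \ci X_{\nd_\G(r)\setminus\mb_\G(r)}\mid X_{\mb_\G(r)}$ holds in $q_V$, so the two divisors $q_V(x_r\mid x_{\mb_\G(r)})$ and $q_V(x_r\mid x_{\nd_\G(r)})$ coincide. You simply make explicit the two points the paper leaves implicit --- that fixability plus acyclicity gives $\mb_\G(r)\subseteq\nd_\G(r)$, and that the m-separation holds even when $\ch_\G(r)\neq\emptyset$ --- and your path argument for the latter is sound.
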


\begin{lemma}%
\label{lem:no-backtracking}
If $r \in {\mathbb F}({{\mathscr G}})$, then
${\mathbb F}({{\mathscr G}}) \setminus \{r\} \subseteq {\mathbb F}(\phi _{r} ({
{\mathscr G}}))$.
\end{lemma}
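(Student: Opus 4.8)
The plan is to take an arbitrary vertex $v \in {\mathbb F}(\G) \setminus \{r\}$ and show $v \in {\mathbb F}(\phi_r(\G))$, i.e.\ that $\dis_{\phi_r(\G)}(v) \cap \de_{\phi_r(\G)}(v) = \{v\}$. Write $\G^* \equiv \phi_r(\G)$; recall that $\G^*$ has the same vertices as $\G$ but $r$ moved from random to fixed, and edge set $E_r$ obtained from $E$ by deleting every edge with an arrowhead at $r$ (so all edges $\ast \to r$ and $\ast \leftrightarrow r$ are removed, while edges $r \to \ast$ are retained). First I would record two monotonicity facts about how the fixing operation acts on the relevant graphical sets: (1) deleting edges only removes directed paths, so $\de_{\G^*}(v) \subseteq \de_\G(v)$ for every random vertex $v$; and (2) since all bidirected edges incident to $r$ are deleted and $r$ is no longer a random vertex, for any random $v \neq r$ the district $\dis_{\G^*}(v)$ is contained in $\dis_\G(v) \setminus \{r\}$ — indeed a bidirected path among random vertices of $\G^*$ is a bidirected path among random vertices of $\G$ that avoids $r$.

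Given these two containments the conclusion is almost immediate: for $v \in {\mathbb F}(\G)\setminus\{r\}$ we have
\[
\dis_{\G^*}(v) \cap \de_{\G^*}(v) \subseteq \bigl(\dis_\G(v)\setminus\{r\}\bigr) \cap \de_\G(v) \subseteq \dis_\G(v)\cap\de_\G(v) = \{v\},
\]
using fixability of $v$ in $\G$ for the last equality; and since $v$ is trivially in both sets on the left, equality with $\{v\}$ holds, so $v \in {\mathbb F}(\G^*)$. The only genuinely careful point is fact (1): deleting edges with arrowheads into $r$ could a priori change directed paths \emph{not} passing through $r$ only if some retained directed edge were deleted, which cannot happen since a directed edge $a \to b$ has its unique arrowhead at $b$, so it is deleted only when $b = r$; hence every directed path of $\G^*$ is a directed path of $\G$, giving $\de_{\G^*}(v) \subseteq \de_\G(v)$. (In fact one does not even need the stronger claim $\dis_{\G^*}(v)\subseteq\dis_\G(v)\setminus\{r\}$; the weaker $\dis_{\G^*}(v) \subseteq \dis_\G(v)$ together with $r \notin \dis_{\G^*}(v)$, which holds because $r \notin \mathbb{V}(\G^*)$, already suffices.)

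I do not anticipate a serious obstacle here; the lemma is essentially a bookkeeping consequence of the fact that fixing only \emph{removes} edges (and only those pointing into $r$) and only \emph{removes} $r$ from the random vertices, so both the descendant set and the district of any surviving random vertex can only shrink. The mild subtlety worth stating explicitly in the write-up is that $v$ remains a \emph{random} vertex of $\G^*$ (it is not $r$), so it is eligible to be in ${\mathbb F}(\G^*)$ in the first place, and that the defining condition for fixability is preserved under passing to a subgraph-with-fewer-random-vertices of this specific form.
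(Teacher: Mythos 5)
Your proof is correct and takes essentially the same route as the paper, which simply observes that the edge set of $\phi_r(\G)$ is a subset of that of $\G$ (cf.\ Proposition \ref{prop:sub}), so that districts and descendant sets can only shrink and the fixability condition $\dis(v)\cap\de(v)=\{v\}$ is preserved. Your write-up just spells out the two containments that the paper leaves implicit.
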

That is, any vertex $s$ that was fixable before $r$ was fixed is still
fixable after $r$ has been fixed (with the obvious exception of $r$ itself).
Thus, when fixing vertices, although our choices may be limited at various
stages, we are never faced with a choice between fixing $r$ and
$r^{\prime}$, whereby choosing $r$ precludes subsequently fixing
$r^{\prime}$.

\begin{proposition}
\label{prop:sub}
If ${\mathscr G}$ is a subgraph of ${{\mathscr G}}^{*}$ with the same random and fixed vertex
sets, then ${\mathbb F}( {{\mathscr G}}^{*}) \subseteq {\mathbb F}( {{\mathscr G}})$.
\end{proposition}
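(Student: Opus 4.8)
The plan is to unwind the definition of $\mathbb F$ and reduce everything to two elementary monotonicity observations about subgraphs. Recall that $\mathbb F(\G) = \{ v \in \mathbb V(\G) \mid \dis_\G(v) \cap \de_\G(v) = \{v\}\}$. Since $\G$ and $\G^*$ have the same random and fixed vertex sets, $\mathbb V(\G) = \mathbb V(\G^*)$, and $E(\G) \subseteq E(\G^*)$. So it suffices to fix $v \in \mathbb F(\G^*)$ and show $\dis_\G(v) \cap \de_\G(v) = \{v\}$.

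First I would note that because every edge of $\G$ is an edge of $\G^*$, every directed path in $\G$ is a directed path in $\G^*$, and hence $\de_\G(v) \subseteq \de_{\G^*}(v)$; likewise every bidirected path in $\G$ lies in $\G^*$, and since districts are maximal bidirected-connected sets of random vertices (and the random/fixed partition is unchanged), $\dis_\G(v) \subseteq \dis_{\G^*}(v)$. Combining these,
\[
\{v\} \subseteq \dis_\G(v) \cap \de_\G(v) \subseteq \dis_{\G^*}(v) \cap \de_{\G^*}(v) = \{v\},
\]
where the leftmost inclusion holds since $v$ is trivially its own descendant and lies in its own district, and the rightmost equality is the assumption $v \in \mathbb F(\G^*)$. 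Hence $\dis_\G(v) \cap \de_\G(v) = \{v\}$, so $v \in \mathbb F(\G)$, which gives $\mathbb F(\G^*) \subseteq \mathbb F(\G)$.

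There is essentially no obstacle here; the only point requiring a moment's care is that the hypothesis ``$\G$ is a subgraph of $\G^*$ with the same random and fixed vertex sets'' is exactly what licenses comparing $\dis_\G(v)$ with $\dis_{\G^*}(v)$ as sets of the same vertices with a consistent random/fixed labelling, so that the district inclusion is meaningful. (Note this is essentially the graph-level counterpart of Lemma \ref{lem:no-backtracking}, which is the special case $\G^* = \G$, $\G = \phi_r(\G^*)$ restricted appropriately; stating it in this general form is what will be needed later.)
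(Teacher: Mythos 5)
Your proof is correct and is essentially the paper's own argument: the paper's one-line proof ("if $r$ has no descendant within the district containing it in $\G^*$ then this also holds in $\G$") is exactly the monotonicity of $\de_\G(v)$ and $\dis_\G(v)$ under edge deletion that you spell out. The only difference is that you make the two inclusions $\de_\G(v)\subseteq\de_{\G^*}(v)$ and $\dis_\G(v)\subseteq\dis_{\G^*}(v)$ explicit, which is a harmless elaboration.
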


For conciseness, we use $D^{r}$ to denote $\dis _{{\mathscr G}}(r)$ when the graph
${\mathscr G} $ is clear from the context.

\begin{proposition}%
\label{prop:districts-after-fixing}
Let ${\mathscr G} $ be a CADMG, with $r \in {\mathbb F}({{\mathscr G}})$. If
$r \in D^{r} \in {\mathcal{D}}({\mathscr G} )$, then
\begin{equation*}
{\mathcal D}\bigl(\phi _{r}({\mathscr G} )\bigr) = \bigl( {\mathcal{D}}(
{\mathscr G} )\setminus \bigl\{{D}^{r} \bigr\} \bigr) \cup {\mathcal{D}}(
{\mathscr G} _{{D}^{r}\setminus \{r\}}),
\end{equation*}
where the sets on the right are disjoint. Thus, if
$D \in {\mathcal{D}}(\phi _{r}({{\mathscr G}}))$ then $D\subseteq D^{*}$ for some
$D^{*} \in {\mathcal{D}}({{\mathscr G}})$; further if $D \neq D^{*}$, then
$r\in D^{*}$.
\end{proposition}
In words, the set of districts in $\phi _{r}({\mathscr G} )$, the graph obtained
by fixing $r$, consists of the districts in ${\mathscr G} $ that do not contain
$r$, together with new districts that are subsets of $D^{r}$, the district
in ${\mathscr G} $ that contains $r$. The new districts are bidirected-connected
subsets of $D^{r}$ after removing $r$.

\subsection{Fixing and factorization}
\label{subsec:fixing-fact}

\begin{proposition}
\label{prop:fix_gform}
\label{prop:fix-dist-marg}
Take a CADMG ${\mathscr G} (V,W)$ with kernel
$q_{V} \in {\mathcal{P}}^{\mathrm{c}}({\mathscr G} )$ with associated district factorization:
%
\begin{equation}
\label{eq:district-factorize} q_{V}(x_{V} \cmid x_{W}) = \prod
_{D \in {\mathcal{D}}({\mathscr G} )} q_{D}(x_{D} \cmid
x_{\pavtex _{{\mathscr G}}(D) \setminus D}),
\end{equation}
where the kernels $q_{D}(x_{D} \cmid x_{\pavtex _{{\mathscr G}}(D) \setminus D}) $ are
defined via the right-hand side of \textup{(\ref{eq:district-factor-ident})}.
If $r\in {\mathbb F}({{\mathscr G}})$ and $D^{r} \in {\mathcal{D}}({\mathscr G} )$ is the district
containing $r$, then
\begin{equation*}
\phi _{r} \bigl(q_{V}(x_{V} \cmid
x_{W}); {{\mathscr G}}\bigr) = q_{D^{r}} (x_{D^{r}
\setminus \{r\}} \cmid
x_{\pavtex _{{\mathscr G}}(D^{r} ) \setminus D^{r} }) \prod_{D \in {\mathcal{D}}({\mathscr G} )\setminus \{D^{r} \}} q_{D}(x_{D}
\cmid x_{\pavtex _{{\mathscr G}}(D) \setminus D}).
\end{equation*}
\end{proposition}
The proof is found in the Supplementary Material, Section A.6 in \citet{supp}. In words,
the result of a fixing operation is solely to marginalize the variable
$X_{r}$ from the density $q_{D^{r}}$ associated with the district
$D^{r}$ in which the vertex $r$ occurs, while leaving unchanged all of
the other terms $q_{D}$ in the factorization.

\subsection{Reachable graphs derived from an ADMG}
\label{subsec:reachable}

A sequence of distinct vertices ${\mathbf{w}}$ is said to be \emph{valid} in
${\mathscr G} $ if either ${\mathbf{w}} = \langle \rangle $, or
${\mathbf{w}} = \langle w_{1}, w_{2}, \ldots , w_{k} \rangle $,
$w_{1} \in \mathbb{F}({\mathscr G} )$ and
$\langle w_{2}, \ldots , w_{k} \rangle $ is valid in
$\phi _{w_{1}}({\mathscr G} )$. Given a valid ${\mathbf{w}}$, we define the fixing operator
$\phi _{\mathbf{w}}({\mathscr G} )$ on graphs, and the fixing operator
$\phi _{\mathbf{w}}(q_{V}; {\mathscr G} )$ on kernels inductively as follows:
\begin{align*}
\phi _{\mathbf{\langle \rangle}}({\mathscr G} ) &\equiv {\mathscr G} ;
\\
\phi _{\langle { w_{1},\ldots ,w_{k}}\rangle}({\mathscr G} ) &\equiv \phi _{
\langle { w_{2},\ldots ,w_{k}}\rangle}\bigl(\phi
_{w_{1}}({\mathscr G} )\bigr);
\\
\phi _{\langle \rangle}(q_{V}; {\mathscr G} ) &\equiv q_{V};
\\
\phi _{\langle { w_{1},\ldots ,w_{k}}\rangle}(q_{V}; {\mathscr G} ) &\equiv \phi _{\langle { w_{2},\ldots ,w_{k}}\rangle}
\bigl(\phi _{w_{1}}(q_{V}; {\mathscr G} ); \phi _{w_{1}}(
{\mathscr G} )\bigr).
\end{align*}

\begin{definition}
A CADMG ${{\mathscr G}}(V,W)$ is \emph{reachable} from an ADMG
${{\mathscr G}}^{*}(V\cup W)$ if there exists a valid fixing sequence $\mathbf{w}$ of
the vertices in $W$ such that ${\mathscr G} = \phi _{\mathbf{w}}({{\mathscr G} ^{*}})$.
\end{definition}

In words, a graph is reachable from ${\mathscr G} ^{*}$ if there exists an ordering
on vertices in $W$ such that the first element $w_{1}$ in the ordering
may be fixed in ${\mathscr G} ^{*}$, the second element $w_{2}$ in
$\phi _{\langle w_{1} \rangle}({\mathscr G} ^{*})$, the third element $w_{3}$ in
$\phi _{\langle w_{1}, w_{2} \rangle}({\mathscr G} ^{*})$, and so on. Note that by
definition ${\mathscr G} $ is reachable from itself. If a CADMG ${{\mathscr G}}(V,W)$ is reachable
from ${{\mathscr G}}^{*}(V \cup W)$, we say that \emph{$V$ is reachable in
${\mathscr G} ^{*}$}.

A key result, which we will show later as Theorem~\ref{thm:invariant},
is that under the nested Markov model reachable CADMGs and their associated
kernels are invariant with respect to any valid fixing sequence. It is
not hard to see that if there are two valid fixing sequences
${\mathbf{w}}$ and ${\mathbf{u}}$ for $W$ then
$\phi _{\mathbf{w}}({{\mathscr G}}) = \phi _{{\mathbf{u}}}({{\mathscr G}})$. However, it requires more
work to show that
$\phi _{\mathbf{w}}(q_{V};{{\mathscr G}}) = \phi _{{\mathbf{u}}}(q_{V}; {{\mathscr G}})$.

\section{Nested Markov models}
\label{sec:nested}

In this section, we define a set of recursive Markov properties and a factorization,
and show their equivalence, in Proposition \ref{prop:nested_global_factorization} and Theorem \ref{thm:global_local}. The models, which obey these properties, will
be called ``nested'' Markov models. For the rest of this section, we will
fix an ADMG ${\mathscr G} (V)$ and a density $p(x_{V})$.

\begin{definition}
\label{dfn:global-definition}
We say that a distribution $p(x_{V})$ is
\emph{globally nested Markov} with respect to ${\mathscr G} (V)$ if for all fixing
sequences ${\mathbf{w}}$ valid in ${\mathscr G} $, the kernel
$\phi _{{\mathbf{w}}}(p(x_{V});{\mathscr G} )$ obeys the global Markov property for
$\phi _{{\mathbf{w}}}({\mathscr G} )$.
\end{definition}
Note that the same graph may be reached by more than one sequence, that
is, $\phi _{{\mathbf{w}}_{1}}({\mathscr G} ) = \phi _{{\mathbf{w}}_{2}}({\mathscr G} )$ for two distinct
valid sequences ${\mathbf{w}}_{1},{\mathbf{w}}_{2}$.

\begin{definition}
\label{dfn:factorization-definition}
We say that a distribution $p(x_{V})$ \emph{nested Markov factorizes} with
respect to ${\mathscr G} (V)$ if, for all valid fixing sequences ${\mathbf{w}}$ in
${\mathscr G} $, there exist kernels
$\{ f^{{\mathbf{w}}}_{D}(x_{D} \cmid x_{\pavtex _{{\mathscr G}}(D) \setminus D}) : D
\in {\mathcal{D}}(\phi _{{\mathbf{w}}}({\mathscr G} )) \}$ such that
%
\begin{equation}
\phi _{{\mathbf{w}}}\bigl(p(x_{V});{\mathscr G} \bigr) = \prod
_{D \in {\mathcal{D}}(\phi _{{\mathbf{w}}}(
{\mathscr G} ))} f^{{\mathbf{w}}}_{D}(x_{D} \cmid
x_{\pavtex _{{\mathscr G}}(D) \setminus D}).
\end{equation}
\end{definition}

\begin{proposition}
\label{prop:nested_global_factorization}
With respect to an ADMG ${\mathscr G} (V)$, a distribution $p(x_{V})$ is globally
nested Markov if and only if it nested Markov factorizes.%
\end{proposition}

\subsection{Invariance to the order of fixing in an ADMG}
\label{subsec:invariance}

In this section, we show that, given a distribution that obeys the nested
Markov property with respect to an ADMG, any two valid fixing sequences
that fix the same vertices will lead to the same reachable graph and kernel.
For marginal distributions obtained from a hidden variable DAG, this claim
follows by results in \citet{tian02on} on identification of causal effects
in hidden variable DAG models. However, for distributions which obey the
nested Markov property for an ADMG, but which are not derived from any
hidden variable DAG, the claim is far less obvious; see Example~\ref{ex:chsh}
below. For instance in the ADMG in Figure~\ref{fig:two_fix}, the fixing
sequence $\langle 4, 3, 1 \rangle $, which leads to the kernel
\begin{equation*}
q^{1}_{2,5}(x_{2},x_{5} \cmid
x_{4}, x_{3}, x_{1}) \equiv \frac{\sum_{x_{3}} p(x_{5} \cmid x_{4}, x_{3}, x_{2}, x_{1}) p(x_{3}, x_{2}, x_{1})}{
\sum_{x_{3},x_{2},x_{5}} p(x_{5} \cmid x_{4}, x_{3}, x_{2}, x_{1}) p(x_{3}, x_{2}, x_{1})
}
\end{equation*}
and the fixing sequence $\langle 3, 4, 1 \rangle $, which leads to the
kernel
\begin{equation*}
q^{2}_{2,5}(x_{2},x_{5} \cmid
x_{4}, x_{3}, x_{1}) \equiv \frac{p(x_{5} \cmid x_{4},x_{3}, x_{2}, x_{1}) p(x_{2}, x_{1})}{
\sum_{x_{5},x_{2}} p(x_{5} \cmid x_{4},x_{3}, x_{2}, x_{1}) p(x_{2}, x_{1})
}
\end{equation*}
are both valid, and these two kernels are therefore the same, in the context
of our model. This is not entirely obvious from inspecting these expressions.
In addition, $q^{1}_{2,5}$ and $q^{2}_{2,5}$ are not functions of
$x_{3}$ in our model; this is clear for $q^{1}_{2,5}$ since $x_{3}$ is
summed out, but not so obvious for $q^{2}_{2,5}$.

\begin{figure}
\begin{center}
\begin{tikzpicture}[>=stealth, node distance=1.2cm,
pre/.style={->,>=stealth,ultra thick,line width = 1.4pt}]
    \tikzstyle{format} = [circle, draw, very thick, minimum size=5mm, inner sep=.5mm]
  
  \begin{scope}

    \path[->]
    
    		node[format] (x1) {$1$} 
    		node[format, right of=x1] (x2) {$2$} 
                  (x1) edge[pre, blue] (x2)
		node[format, right of=x2] (x3) {$3$} 
                  (x2) edge[pre, blue] (x3)
                  (x1.315) edge[pre,->, bend right,blue] (x3.225)
		node[format, right of=x3] (x4) {$4$} 
                  (x3) edge[pre, blue] (x4)
   		
		
		node[format, right of=x4] (x5) {$5$} 
		
		(x4) edge[pre, blue] (x5)
		
		(x2) edge[pre,<->, bend left=40, red] (x5)
		(x1) edge[pre,<->, bend left=40, red] (x4)

		(x2) edge[pre,->, bend right, blue] (x5)
		;


  \end{scope}

  \end{tikzpicture}

\end{center}
\caption{A graph where $\langle 4, 3, 1\rangle $ and
$\langle 3, 4, 1\rangle $ are valid fixing sequences.}
\label{fig:two_fix}
\end{figure}

\begin{lemma}%
\label{lem:permute}
Let ${\mathscr G} (V,W)$ be a CADMG with $r,s \in {\mathbb F}({\mathscr G} )$ and let
$q_{V}$ be a kernel Markov w.r.t. ${\mathscr G} $. Then
\begin{align*}
\phi _{\langle r, s \rangle}({\mathscr G} ) = \phi _{\langle s, r \rangle}({\mathscr G} ) \quad \text{and} \quad
\phi _{\langle r, s \rangle}(q_{V}; {\mathscr G} ) = \phi _{
\langle s, r \rangle}(q_{V};
{\mathscr G} ).
\end{align*}
\end{lemma}

In words, if we have a choice to fix two vertices in ${\mathscr G} $ then the order
in which we do this does not affect the resulting graph, or kernel, provided
that the original kernel is Markov with respect to ${\mathscr G} $.
%
\begin{theorem}%
\label{thm:invariant}
\label{invariance}
 Let ${p}(x_{V})$ be a distribution that is nested Markov with respect to
an ADMG ${\mathscr G} $ (in either the sense of Definitions
\ref{dfn:global-definition} or \ref{dfn:factorization-definition}). Let
${\mathbf{u}},{\mathbf{w}}$ be different valid fixing sequences for the same set
$W\subset V$. Then $ \phi _{{\mathbf{u}}}({\mathscr G} ) = \phi _{{\mathbf{w}}}({\mathscr G} )$ and
%
\begin{equation}
\label{eq:invar-thm} \phi _{{\mathbf{u}}}\bigl({p}(x_{V});{\mathscr G} \bigr) =
\phi _{{\mathbf{w}}}\bigl({p}(x_{V});{\mathscr G} \bigr).
\end{equation}
\end{theorem}
Due to this theorem, our fixing operations $\phi _{{\mathbf{w}}}$, which were
defined for a specific fixing sequence ${\mathbf{w}}$, can, under the model, be defined purely
in terms of the set $W$ of nodes that were fixed; the order does not matter
(provided that at least one valid fixing sequence exists). Consequently,
we will subscript the fixing operator $\phi $ by a set rather than a sequence.
That is, we write $\phi _{V \setminus R}({\mathscr G} )$ and
$\phi _{V \setminus R}(p(x_{V}); {\mathscr G} )$ to mean ``apply the fixing operator
$\phi _{\mathbf{w}}$, for any valid sequence ${\mathbf{w}}$ of elements in
$V \setminus R$, to ${\mathscr G} $ and the pair $(p(x_{V});   {\mathscr G} )$, respectively.''
For conciseness, and consistency with notation in \citet{tian02on}, we
will also denote $\phi _{V \setminus R}({\mathscr G} )$ by ${\mathscr G} [R]$.

\begin{corollary}
\label{cor:nested-factorize}
If a distribution $p(x_{V})$ obeys the global nested Markov property for
${\mathscr G} (V)$, then for all reachable sets $R$, the kernel
\begin{align*}
q_{R}(x_{R} \cmid x_{\pavtex (R) \setminus R}) \equiv \phi
_{V \setminus R}\bigl(p(x_{V}); {\mathscr G} \bigr)
\end{align*}
obeys the global Markov property for $\phi _{V \setminus R}({\mathscr G} )$.
\end{corollary}

We will subsequently see that if we assume the existence of a latent variable
DAG model (with observed variables $V\cup W$) that has latent projection
${\mathscr G} $, then if $W$ is fixable, the kernel
$\phi _{W}(p(x_{V\cup W}); {\mathscr G} )$ can be interpreted as the intervention
distribution $p(x_{V}\cmid \Dovtex _{{\mathscr G}}(x_{W}))$; see Lemma~D.3 in the Supplementary Material. In this context, a valid fixing
sequence corresponds to a sequence of steps in the ID algorithm of
\cite{tian02on} that identify this intervention distribution; see Section~\ref{subsec:id}. Consequently, were we to assume the existence of a DAG
with latent variables, then the soundness of the ID algorithm would directly
imply the equality (\ref{eq:invar-thm}). However, since we are \emph{not}
assuming such a DAG exists, $\phi _{W}(p(x_{V\cup W}); {\mathscr G} )$ may not correspond
to an intervention distribution, and hence a separate proof is required;
see Example~\ref{ex:chsh} for an inequality constraint that is implied
by the existence of a latent variable, but does not follow from the nested
Markov property applied to the latent projection.

\subsection{Intrinsic sets}
\label{subsec:global}

We introduce the following two definitions that will prove useful.

\begin{definition}
A set $C$ is \emph{intrinsic} if it is a district in a reachable graph derived
from ${\mathscr G} $. The set of intrinsic sets in an ADMG ${\mathscr G} $ is denoted by
${\mathcal{I({\mathscr G} )}}$.
\end{definition}

\begin{definition}
For a set $R$ reachable in ${\mathscr G}$ and a vertex $v \in R$, with
$\chvtex _{ \phi _{V \setminus R}({\mathscr G} )}(v)=\emptyset $, we define \emph{the
Markov blanket of $v$ in $R$} to be
%
\begin{equation}
\label{eq:mb-in-r} \mbvtex _{{\mathscr G}}(v, R) \equiv \mbvtex _{\phi _{V \setminus R}({\mathscr G} )}(v).
\end{equation}
\end{definition}
 Since every ancestral set $A$ is reachable in ${\mathscr G} $, this is a natural
extension of our previous definition (\ref{eq:mb-in-A}). The next two subsections
give different characterizations of the nested Markov model.

\subsection{The simplified nested factorization property}
\label{subsec:fact}

\begin{theorem}%
\label{thm:global-reachable-factorization}
If $p(x_{V})$ nested Markov factorizes with respect to ${\mathscr G} $, then for
every reachable $R$ in ${\mathscr G} $,
%
\begin{align}
\phi _{V \setminus R}\bigl(p(x_{V}); {\mathscr G} \bigr) = \prod
_{D \in {\mathcal{D}}(\phi _{V
\setminus R}({\mathscr G} ))} \phi _{V \setminus D}\bigl(p(x_{V}); {\mathscr G}
\bigr). \label{eqn:simp_nest_fact}
\end{align}
\end{theorem}

Since all the sets $D$ quantified in the product are districts in a reachable
graph derived from ${\mathscr G} $, it follows that in a nested Markov model every
kernel corresponding to a reachable set can be constructed by combining
kernels corresponding to intrinsic sets. We call (\ref{eqn:simp_nest_fact})
the \emph{simplified} \emph{nested factorization}.

\subsection{The ordered local nested property}
\label{subsec:local}

Notwithstanding Theorem~\ref{thm:invariant}, in general we may not know
\emph{a priori} that a distribution is Markov with respect to a graph.
It is therefore helpful to have a canonical order in which variables should
be fixed for the purposes of verifying that a distribution is in the nested
Markov model.

\begin{definition}
We define the \emph{intrinsic power DAG} for an ADMG ${\mathscr G} $ and topological
ordering to be the graph ${\frak I}({\mathscr G} )$ whose vertices are intrinsic
sets ${\mathcal{I}}({\mathscr G} )$, and such that there is an edge from $D$ to $C$ if
one can obtain $C$ as the district of the maximal vertex in $D$ by fixing
some \emph{other} vertex in $D$.
\end{definition}

To simplify the next definition, we set
$\fmvtex _{{\mathscr G}}(C) := C \cup \pavtex _{{\mathscr G}}(C)$.

\begin{definition}
Let ${\mathscr G} $ be an ADMG with arbitrary topological order $\prec $; define
$T_{v} = \{w : w \prec v\} \cup \{v\}$ as the initial segment for
$v$. A distribution $p(V)$ is \emph{ordered local nested Markov} with respect
to $\prec $, if:
\begin{itemize}
\item[(i)] for each $v \in V$, we have the independences
%
\begin{align}
X_{v} \civtex X_{T_{v} \setminus \fmvtex (C)} \mid X_{\fmvtex (C) \setminus \{v\}}\quad [p],
\label{eqn:localknew0}
\end{align}
where $C = \dis _{T_{v}}(v)$;
\item[(ii)] for every edge $D \rightarrow C$ in ${\intgr}({\mathscr G} )$ (obtained by
fixing $w \in D \setminus \{v\}$) and $v$ being maximal under
$\prec $ in $D$, we have
%
\begin{align}
X_{v} \civtex X_{\fmvtex (D) \setminus (\fmvtex (C) \cup \{w\})} \mid X_{\fmvtex (C)
\setminus \{v\}}
\quad
\bigl[%
\phi _{w}\bigl({q}_{D}; {\mathscr G} [D]\bigr)
\bigr], \label{eqn:localknew}
\end{align}
where $q_{D}$ is the unique kernel resulting from some valid fixing sequence
for $D$.
\end{itemize}
%
\end{definition}

Note that in this definition, we assume that we have reached each node
$D$ by traversals from a root node (i.e.,~a set $C$ in (i) for some
$v$). In this way, crossing the edge $D \to C$ will (potentially) introduce
a new conditional independence, as well as defining $q_{C}$.

\begin{theorem}
\label{thm:global_local}
$p(x_{V})$ is globally nested Markov with respect to ${\mathscr G} $ if and only
if $p(x_{V})$ is ordered local nested Markov with respect to ${\mathscr G} $ for
any topological ordering $\prec $.
\end{theorem}

Proposition~\ref{prop:nested_global_factorization} and Theorem~\ref{thm:global_local} thus yield three equivalent ways of defining the
set ${\mathcal{P}}^{\mathrm{n}}({{\mathscr G}})$ of distributions in the nested Markov
model of ${{\mathscr G}}$.

\subsection{Nested Markov models for complete graphs are saturated}
\label{subsec:saturated}

It is known that any distribution is Markov relative to a complete DAG
or ADMG (i.e., a graph with at least one edge between every pair of vertices).
This also holds in the nested Markov case.

\begin{theorem}
\label{thm:saturated_a}
Let ${\mathscr G} $ be an ADMG. The model ${\mathcal{P}}^{\mathrm{n}}({\mathscr G} )$ is saturated if
and only if for every valid fixing sequence $r_{1}, \ldots , r_{k}$, and
every $i,j \in 1,\ldots , k$, either
$r_{i} \in \mbvtex _{{\mathscr G} ^{(j)}}(r_{j})$ or
$r_{j} \in \mbvtex _{{\mathscr G} ^{(i)}}(r_{i})$. Here, ${\mathscr G} ^{(1)} \equiv {\mathscr G} $ and
${\mathscr G} ^{(\ell +1)} \equiv \phi _{r_{\ell}}({\mathscr G} ^{(\ell )})$.
\end{theorem}

\begin{corollary}
\label{cor:complete-saturated}
Let ${\mathscr G} $ be a complete ADMG; then ${\mathcal{P}}^{\mathrm{n}}({\mathscr G} )$ is saturated.
\end{corollary}

\section{Connections with causal inference}
\label{sec:conn_causal}

The fixing operation is closely related to the identification of intervention
distributions.

\subsection{Latent variable DAG models are in the nested Markov model}
\label{subsec:latent-dag}

We first show that if $p(x_{L \cup V})$ is Markov relative to a DAG
${\mathscr G} (L \cup V)$, then $p(x_{V})$ is in the nested Markov model of
${\mathscr G} (V)$. We extend the latent projection operation in the natural way
from ADMGs to CADMGs, and denote the operation of creating a latent projection
${\mathscr G} (V,W)$ of a CADMG ${\mathscr G} (L \cup V, W)$ onto the subset $V$ as
$\sigma _{L}$. That is, $\sigma _{L}({\mathscr G} (L \cup V, W)) = {\mathscr G} (V, W)$, and
${\mathscr G} (V,W)^{|W}$ encodes the m-separation relations holding among
$V \cup W$ in ${\mathscr G} (L \cup V, W)^{|W}$. Precise statements appear in the
Supplementary Material, Section A.3.

We will call a CADMG, which does not contain bidirected arrows, a
\emph{conditional acyclic directed graph} (CDAG). It is a corollary of the
definition of ${\mathcal{P}}^{\mathrm{c}}_{\mathrm{f}}$ that if ${\mathscr G} (V, W)$ is a CDAG,
then $q_{V}(x_{V} \cmid x_{W}) \in {\mathcal{P}}^{\mathrm{c}}_{\mathrm{f}}({\mathscr G} )$ if
\begin{equation*}
q_{V}(x_{V} \cmid x_{W}) = \prod
_{a \in V} q_{V}(x_{a} \cmid
x_{\pavtex _{
{\mathscr G}}(a)}).
\end{equation*}

\begin{lemma}%
\label{lem:reachable-factorizes}
\label{lem:fixing_in_dags}
 Let ${\mathscr G} $ be a DAG with a vertex set $V$. Then every nonempty subset
$S$ of $V$ is reachable, and if $p(x_{V})$ is Markov with respect to
${\mathscr G} $,
\begin{equation*}
\phi _{V \setminus S}\bigl(p(x_{V}); {\mathscr G} \bigr) =
q_{S}(x_{S} \cmid x_{\pavtex _{{\mathscr G}}(S)
\setminus S}) = \prod
_{a \in S} p(x_{a} \cmid x_{\pavtex _{{\mathscr G}}(a)}).
\end{equation*}
In other words,
$\phi _{V \setminus S}(p(x_{V}); {\mathscr G} ) \in {\mathcal{P}}^{\mathrm{c}}(\phi _{V
\setminus S}({\mathscr G} ))$.
\end{lemma}
For a DAG ${\mathscr G} $, let ${\mathcal{P}}^{\mathrm{d}}({\mathscr G} )$ denote the set of distributions
Markov with respect to ${\mathscr G} $ (see the Supplementary Material, Section~A.1, for details).
%
\begin{corollary}
\label{cor:dag_in_nested}
For a DAG ${\mathscr G} $, ${\mathcal{P}}^{\mathrm{d}}({\mathscr G} ) = {\mathcal{P}}^{\mathrm{n}}({\mathscr G} )$.
\end{corollary}

 That ${\mathcal{P}}^{\mathrm{d}}({\mathscr G} ) \subseteq {\mathcal{P}}^{\mathrm{n}}({\mathscr G} )$ is a special
case of Theorem~\ref{thm:dags_in_nested} below when $L$ is empty.

\begin{lemma}
\label{lem:fix_marg_commute_graph}
Let ${\mathscr G} (L \cup V, W)$ be a CDAG. Assume $v \in V$ is fixable in
${\mathscr G} (V, W) = \sigma _{L}({\mathscr G} (L \cup V, W))$. Then
$\sigma _{L} ( \phi _{v}({\mathscr G} (L \cup V, W) ) ) = \phi _{v} ( \sigma _{L}(
{\mathscr G} (L \cup V, W) ) )$. That is, the commutative diagram in Figure \textup{\ref{fig:commute-diagram}(a)} holds.
\end{lemma}
Note that $v \in V$ is fixable in ${\mathscr G} (V,W)$ by assumption, while the fact
that every element of $V$ is fixable in ${\mathscr G} (L \cup V,W)$ follows since
${\mathscr G} (L \cup V,W)$ is a CDAG, and has no bidirected edges.
%
\begin{figure}
\centering
\begin{tikzpicture}
	\begin{scope}
            \node (og) {$\G(L \cup V, W)$};
            \node (fg) [below of=og, yshift=-1.2cm] {$\G((L \cup V)
            	\setminus \{ v \}, W \cup \{ v \})$};
            \node (pg) [right of=og, xshift=5.0cm] {$\G(V, W)$};
            \node (fpg) [right of=fg, xshift=5.0cm] {$\G(V \setminus \{ v \},
            	W \cup \{ v \})$};
            
            \draw[->] (og) to node[left] {$\phi_v$} (fg);
            \draw[->] (og) to node[above] {$\sigma_L$} (pg);
            \draw[->] (fg) to node[below] (l) {$\sigma_L$} (fpg);
            \node[below of=l, yshift=0.4cm] (alab) {$(a)$};
            \draw[->] (pg) to node[right] {$\phi_v$} (fpg);
            
	\end{scope}
	\begin{scope}[yshift=-4.5cm]
    	    \node (og) {$q_{L \cup V}(x_{L \cup V} \cmid x_W)$};
            \node (fg) [below of=og, yshift=-1.2cm] {$q_{(L \cup V)\setminus\{v\}}
            	(x_{(L \cup V)\setminus \{v\}} \cmid x_{W\cup\{v\}})$};
            
            \node (pg) [right of=og, xshift=5.0cm] {$q_{L \cup V}(x_V \cmid x_W)$};
            \node (fpg) [right of=fg, xshift=5.0cm] {$q_{V\setminus\{v\}}
            	(x_{V\setminus \{v\}} \cmid x_{W\cup\{v\}})$};
            
            \draw[->] (og) to node[left] {$\phi_v(.;\G(L\cup V, W))$} (fg);
            \draw[->] (og) to node[above] {$\sum_{x_L}$} (pg);
            \draw[->] (fg) to node[below] (l) {$\sum_{x_L}$} (fpg);
            \node[below of=alab, yshift=-3.7cm] (l2) {$(b)$};
            \draw[->] (pg) to node[right] {$\phi_v(.;\G(V, W))$} (fpg);
	\end{scope}
\end{tikzpicture}
\caption{Commutativity diagrams for: (a) Lemma~\protect\ref{lem:fix_marg_commute_graph}; (b) Lemma~\protect\ref{lem:fix_marg_commute_kernel}.}
\label{fig:commute-diagram}
\end{figure}
In fact, an inspection of the proof of this lemma (found in the Supplementary Material)
shows it does not rely on the vertex $v$ being fixable in
${{\mathscr G}}(V,W)$, only on the specific way edges are removed by $\phi $.
In Lemma \ref{lem:fix_marg_commute_kernel},
we give a more general version of this result, that is useful for deriving properties of causal models, which we discuss later in Section~\ref{subsec:causal}.

Given a DAG ${\mathscr G} (V \dot\cup L)$, there is always a well-defined intervention
distribution associated with any vertex $v \in V$; however, if $v$ is not
fixable in ${\mathscr G} (V)$, then this distribution may not be identifiable from
$p(x_{V})$; see (\ref{eq:g-formula}) below. We use $\phi ^{*}_{v}$ to denote
the graphical operation that corresponds to this intervention, and note
that this applies to all vertices in ${\mathscr G} (V)$, as opposed to
$\phi _{v}$, which presupposes identifiability from $p(x_{V})$, and hence
applies only to elements in $\mathbb{F}({\mathscr G} (V))$. The resulting graph
$\phi _{v}^{*}({\mathscr G} (V))$, without distinguishing fixed and random vertices,
was denoted ${\mathscr G} _{\overline{v}}$ by \cite{pearl00causality}. We use
$\phi ^{*}_{v}$ to make the connection to fixing more explicit.

For any $r,s \in V$,
$\phi ^{*}_{\langle r, s \rangle}({\mathscr G} ) = \phi ^{*}_{\langle s, r
\rangle}({\mathscr G} )$, and so for any $Z \subseteq V$, we define
$\phi ^{*}_{Z}({\mathscr G} )$ inductively under any order as in Lemma~\ref{lem:permute}.

\begin{corollary}
\label{cor:commute}
\label{cor:mut_marg_commute_graph}
Let ${\mathscr G} (L \cup V, W)$ be a CDAG. Then for any $v \in V$,
$\sigma _{L} ( \phi ^{*}_{v}({\mathscr G} (L \cup V, W) ) ) = \phi ^{*}_{v} (
\sigma _{L}({\mathscr G} (L \cup V, W) ) )$.
\end{corollary}
Corollary~\ref{cor:commute} is essentially equivalent to Proposition~8
of \citet{evans:mdags:16}.

\begin{lemma}
\label{lem:fix_marg_commute_kernel}
Let ${\mathscr G} (L \cup V, W)$ be a CDAG, and assume
$q_{L \cup V}(x_{L \cup V} \cmid x_{W}) \in {\mathcal{P}}^{\mathrm{c}}_{\mathrm{f}}(
{\mathscr G} (L \cup V, W))$. Assume $v \in V$ is fixable in
${\mathscr G} (V, W) = \sigma _{L}({\mathscr G} (L \cup V, W))$. Then
\begin{equation*}
\sum_{x_{L}}\phi _{v}\bigl(q_{L \cup V}(x_{L \cup V}
\cmid x_{W}); {\mathscr G} (L \cup V, W)\bigr) = \phi _{v}
\bigl(q_{L \cup V}(x_{V} \cmid x_{W}); \sigma
_{L}\bigl( {\mathscr G} (L \cup V, W)\bigr)\bigr)
\end{equation*}
That is, the commutative diagram in Figure \textup{\ref{fig:commute-diagram}(b)} holds.\vadjust{\goodbreak}
\end{lemma}

 \citet{robins86new} proves a similar result that he calls the ``collapse
of the g-formula.'' We now have enough to prove the main result of this
section.

\begin{theorem}%
\label{thm:dags_in_nested}
Let ${\mathscr G} (V \cup L)$ be a DAG. Then
\begin{align*}
p(x_{V \cup L}) \in {\mathcal{P}}^{\mathrm{d}}\bigl({\mathscr G} (V \cup L)
\bigr)\quad \Rightarrow\quad p(x_{V}) \in {\mathcal{P}}^{\mathrm{n}}\bigl(
{\mathscr G} (V)\bigr).
\end{align*}
\end{theorem}
Thus, the constraints implied by the nested model for the latent projection
of ${\mathscr G} (V \cup L)$ also hold in the hidden variable CDAG model. Note that
the converse is not true in general. There are distributions
$p(x_{V}) \in {\mathcal{P}}^{\mathrm{n}}({\mathscr G} (V))$ for which there exists no joint
distribution $p(x_{V\cup L}) \in {\mathcal{P}}^{\mathrm{d}}({\mathscr G} (V\cup L))$. This
is because marginals of hidden variable DAGs may induce additional
\emph{inequality constraints}, which are not satisfied by all elements of
the associated nested Markov model. The best known of these are the
\emph{instrumental inequalities} of \citet{pearl:95}, which were generalized
by \citet{bonet}, \citet{evans:12} and \citet{kedagni2020generalized}.
Building on the above result, \citet{evans:complete} showed further that
the discrete latent variable DAG model does not imply any additional
\emph{equality} constraints not implied by the discrete nested Markov model.

\begin{example}%
\label{ex:chsh}
Consider variables $X_{0},\ldots ,X_{4}$ under a distribution, which is
Markov with respect to the graph in Figure~\ref{fig:mut}(i). Then the
marginal distribution over $X_{1},\ldots ,X_{4}$ satisfies the nested Markov
property with respect to the graph in Figure~\ref{fig:verma1}(i). However,
if the observed variables are binary (and regardless of the state-space
of $X_{0}$) their marginal distribution also satisfies the following inequality
constraints not implied by the nested Markov property:
\begin{align*}
0 \leq{}& \bigl\{ q_{24}(0_{2} \cmid x_{1}) +
q_{24}(0_{4} \cmid x_{3}) + q_{24}(0_{2},
0_{4} \cmid 1 - x_{1}, 1 - x_{3})
\\
&{} - q_{24}(0_{2}, 0_{4} \cmid x_{1},
x_{3}) - q_{24}(0_{2}, 0_{4} \cmid
x_{1}, 1-x_{3}) - q_{24}(0_{2},
0_{4} \cmid 1 - x_{1}, x_{3}) \bigr\} \leq 1
\end{align*}
 for each $x_{1}, x_{3} \in \{0,1\}$; here, for example, $0_{2}$ is a
shorthand for the event $\{X_{2} = 0\}$. These are related to the CHSH
inequalities of \citet{clauser:69}, and are sometimes referred to as
\emph{Bell inequalities} after \citet{bell:64}.
\end{example}

\subsection{Causal model of a DAG}
\label{subsec:causal}

The statistical model of a DAG ${\mathscr G} $ with vertices $V$, described earlier,
is a set of distributions $p(x_{V})$ defined by the factorization (A.1)
formulated on ${\mathscr G} $. We define a causal model of a DAG ${\mathscr G} $ by a set of
similar factorizations that yield joint distributions under an
\emph{intervention} operation, which corresponds to setting values of variables
from outside the system.

Specifically, for a DAG ${\mathscr G} $ with vertices $V$, and any
$A \subseteq V$, the causal model for ${\mathscr G} $ defines the kernel resulting
from intervention on $x_{A}$ to be
%
\begin{align}
p\bigl(x_{V \setminus A} \mid \Dovtex _{{\mathscr G}}(x_{A})\bigr)
\equiv \prod_{v \in V
\setminus A} p(x_{v} \cmid
x_{\pavtex _{{\mathscr G}}(v)}). \label{eq:g-formula}
\end{align}
This is known as the
\emph{g-formula, truncated factorization or manipulated distribution.} Note
that since for any DAG ${\mathscr G} $,
${\mathcal{P}}^{\mathrm{d}}({\mathscr G} ) = {\mathcal{P}}^{\mathrm{n}}({\mathscr G} )$, we have
%
\begin{align}
p\bigl(x_{V \setminus A} \mid \Dovtex _{{\mathscr G}}(x_{A})\bigr)
\equiv \prod_{v \in V
\setminus A} p(x_{v} \cmid
x_{\pavtex _{{\mathscr G}}(v)}) = \phi _{A}\bigl(p(x_{V}); {\mathscr G} \bigr).
\label{eq:g-formula-fix}
\end{align}
We will drop the ${\mathscr G} $ subscript from $\Dovtex (\cdot)$ when the graph is obvious.
Thus, (\ref{eq:g-formula-fix}) provides a causal interpretation of the
fixing operation in a DAG.

In this light the earlier propositions relating to the preservation of
independence in kernels may also be reinterpreted. Specifically, Proposition~\ref{prop:constructed} (Separation) states that a variable that is intervened
on no longer depends on its direct causes. Proposition~\ref{prop:ind-preserve-1} (Ordering) states that interventions in the future
cannot causally affect the past. While Proposition~\ref{prop:ind-preserve-2} (Modularity) says that in the setting given,
intervening and conditioning are interchangeable.

\subsection{Reformulation of the ID algorithm via fixing}
\label{subsec:id}

Identification of causal effects is a more difficult problem in causal
DAGs where some variables are unobserved. In particular, not every distribution
$p(x_{Y} \mid \Dovtex (x_{A}))$ is identified in every ${\mathscr G} (L \cup V)$. Given
a causal model ${\mathscr G} (L \cup V)$, the goal is to find an identifying functional
for $p(x_{Y} \mid \Dovtex (x_{A}))$ in terms of the observed marginal distribution
$p(x_{V})$ \emph{or}  to show that no such functional exists.

The problem may be formalized by considering a latent projection ADMG
${\mathscr G} (V)$ in place of the original causal DAG with hidden variables,
${\mathscr G} (L \cup V)$. A well-known ``folklore'' result in causal inference states
that any two hidden variable DAGs ${\mathscr G} ^{1}(L^{1} \cup V)$ and
${\mathscr G} ^{2}(L^{2} \cup V)$ with the same latent projection ${\mathscr G} (V)$ will share
all identifying functionals, and so there is no loss of generality in reasoning
on ${\mathscr G} (V)$. We prove this folklore result later in this section as Corollary~\ref{cor:admg-id}.

The ID algorithm, introduced by \citet{tian02on} solves the identification
problem; the algorithm, which applies to ADMGs, was proved to be complete
by \citet{shpitser06id} and also independently by \citet{huang06do}. Here,
``complete'' means that whenever the algorithm fails to find an expression
for $p(x_{Y} \mid \Dovtex (x_{A}))$ in terms of $p(x_{V})$ in the causal model
given by ${\mathscr G} (L \cup V)$, no other algorithm is able to do so without making
more assumptions. In this section, we use the fixing operation to give
a simple constructive characterization (via an algorithm) of all causal
effects identifiable by the ID algorithm, and thus of all causal effects
identifiable in any hidden variable causal DAG ${\mathscr G} (L \cup V)$. We can
view this characterization as using the fixing operation to simplify the
ID algorithm from its original two-page formulation down to the single
formula (\ref{eqn:1-line-id}).

\begin{theorem}
\label{thm:1-line-id}
Let ${\mathscr G} (L \cup V)$ be a causal DAG with latent projection ${\mathscr G} (V)$. For
$A\dotcup Y\subseteq V$, let
$Y^{*} = \anvtex _{{\mathscr G} (V)_{V \setminus A}}(Y)$. Then if
${\mathcal{D}}({\mathscr G} (V)_{Y^{*}}) \subseteq {\mathcal{I}}({\mathscr G} (V))$,
%
\begin{align}\begin{split}
p\bigl(x_{Y} \cmid \Dovtex _{{\mathscr G} (L \cup V)}(x_{A})\bigr)
&= \sum_{x_{Y^{*}
\setminus Y}} \prod_{D \in {\mathcal{D}}({\mathscr G} (V)_{Y^{*}})}
p\bigl(x_{D} \cmid \Dovtex _{{\mathscr G} (L \cup V)}(x_{\pavtex _{{\mathscr G} (V)}(D)\setminus D})\bigr)
\\
&= \sum_{x_{Y^{*} \setminus Y}} \prod_{D \in {\mathcal{D}}({\mathscr G} (V)_{Y^{*}})}
\phi _{V \setminus D}\bigl(p(x_{V});{\mathscr G} (V)\bigr). \label{eqn:1-line-id}
\end{split}
\end{align}
If not, there exists $D \in {\mathcal{D}}({\mathscr G} (V)_{Y^{*}})$ that is not intrinsic
in ${\mathscr G} (V)$, and $p(x_{Y} \mid \Dovtex _{{\mathscr G} (L \cup V)}(x_{A}))$ is not identifiable
in ${\mathscr G} (L \cup V)$.
\end{theorem}
Note that $Y^{*}$ is the set of vertices $v \in V\setminus A$ for which,
for some $y\in Y$, there is a directed path
$v \rightarrow \cdots \rightarrow y$, with no vertex on the path in
$A$; such paths are called ``proper causal paths'' in
\citet{perkovic2015complete}. Also note that since, by Theorem~\ref{thm:dags_in_nested},
$X_{D} \civtex X_{V\setminus (D\cup \pavtex _{{\mathscr G} (V)}(D))} \mid X_{\pavtex _{{\mathscr G} (V)}(D)
\setminus D}$ in $\phi _{V \setminus D}(p(x_{V});{\mathscr G} (V))$, it follows that
$\phi _{V \setminus D}(p(x_{V});{\mathscr G} (V))$ is a function solely of
$x_{D}$ and $x_{\pavtex _{{\mathscr G} (V)}(D) \setminus D}$. Thus, the product on the
RHS of (\ref{eqn:1-line-id}) is a function of the ``bound'' variables
$x_{Y^{*}\setminus Y}$ present in the sum and (a subset of) the ``input''
variables on the LHS, $x_{Y}$, $x_{A^{*}}$ where
$A^{*} = A \cap \bigcup_{D \in \mathcal{D}({\mathscr G} (V)_{Y*})} \pavtex _{{\mathscr G} (V)}(D)$.

\begin{corollary}
\label{cor:admg-id}
Let ${\mathscr G} ^{1}(L^{1} \cup V)$ and ${\mathscr G} ^{2}(L^{2} \cup V)$ be two causal DAGs,
with the same latent projection, so ${\mathscr G} ^{1}(V) = {\mathscr G} ^{2}(V)$. Then for
any $A\dot{\cup}Y \subseteq V$:
\begin{itemize}
\item[(i)] $p(Y \mid \Dovtex _{{\mathscr G} ^{1}}(A))$ is identified if and only if
$p(Y \mid \Dovtex _{{\mathscr G} ^{2}}(A))$ is identified;
\item[(ii)] if $p(Y \mid \Dovtex _{{\mathscr G} ^{1}}(A))$ is identified, then
$p(Y \mid \Dovtex _{{\mathscr G} ^{1}}(A)) = p(Y \mid \Dovtex _{{\mathscr G} ^{2}}(A))$.
\end{itemize}
 %
%
\end{corollary}

\begin{example}
Given some hidden variable DAG ${\mathscr G} (V\cup L)$, where
$V=\{x_{1},\ldots ,x_{4}\}$ with latent projection ${\mathscr G} (V)$ given by the
ADMG in Figure~\ref{fig:verma1}(i), consider the problem of identifying
$p(x_{4} \mid \Dovtex _{{\mathscr G}}(x_{2}))$. Mapping this problem to the notation
of Theorem~\ref{thm:1-line-id}, we have $Y = \{ 4 \}$,
$A = \{ 2 \}$, $Y^{*} = \{ 4, 3, 1 \}$. The districts of
${{\mathscr G}}_{Y^{*}}$ are $\{ 4 \}$, $\{ 3 \}$ and $\{ 1 \}$. In fact, these
three sets are intrinsic in ${\mathscr G} $, and thus a fixing sequence exists for
each corresponding kernel:
\begin{align*}
\phi _{{\langle 2, 3, 4 \rangle}}\bigl(p(x_{1},x_{2},x_{3},x_{4});
{\mathscr G} \bigr) &= \phi _{{\langle 2, 3 \rangle}} \biggl( \frac{p(x_{1},x_{2},x_{3},x_{4})}{p(x_{4} \cmid x_{3},x_{2},x_{1})}; {\mathscr G}
_{\{1,2,3\}} \biggr)
\\
&= \phi _{{\langle 2 \rangle}} \biggl( \frac{p(x_{1},x_{2},x_{3})}{p(x_{3} \cmid x_{2},x_{1})}; {\mathscr G} _{\{1,2\}}
\biggr)
\\
&= \frac{p(x_{1},x_{2})}{p(x_{2} \cmid x_{1})} = p(x_{1}),
\end{align*}
\begin{align*}
\phi _{{\langle 1, 2, 4 \rangle}}\bigl(p(x_{1},x_{2},x_{3},x_{4});
{\mathscr G} \bigr) &= \phi _{{\langle 1, 2 \rangle}} \biggl( \frac{p(x_{1},x_{2},x_{3},x_{4})}{p(x_{4} \cmid x_{3},x_{2},x_{1})}; {\mathscr G}
_{\{1,2,3\}} \biggr)
\\
&= \phi _{{\langle 1 \rangle}} \biggl( \frac{p(x_{3},x_{2},x_{1})}{p(x_{2} \cmid x_{1})}; \phi _{2}({\mathscr G}
_{\{1,2,3
\}}) \biggr)
\\
&= \frac{p(x_{3} \cmid x_{2}, x_{1})p(x_{1})}{p(x_{1})} = p(x_{3} \cmid x_{2},
x_{1}),
\end{align*}
\begin{align*}
\phi _{{\langle 2, 3, 1 \rangle}}\bigl(p(x_{1},x_{2},x_{3},x_{4});
{\mathscr G} \bigr) &= \phi _{{\langle 2, 3 \rangle}} \biggl( \frac{p(x_{1},x_{2},x_{3},x_{4})}{p(x_{1})}=
p(x_{2},x_{3},x_{4} \cmid x_{1});
\phi _{1}({\mathscr G} ) \biggr)
\\
&= \phi _{{\langle 2 \rangle}} \biggl( \frac{p(x_{2},x_{3},x_{4}\cmid x_{1})}{p(x_{3}\cmid x_{2},x_{1})} \equiv q_{24}(x_{2},x_{4}
\cmid x_{1},x_{3}); \phi _{31}({\mathscr G} ) \biggr)
\\
&= \frac{q_{24}(x_{2},x_{4}\cmid x_{1},x_{3})}{q_{24}(x_{2}\cmid x_{1},x_{3},x_{4})}
\\
&= \sum_{x_{2}} p(x_{4}\cmid
x_{3},x_{2},x_{1}) p(x_{2} \cmid
x_{1}).
\end{align*}
The last step here follows because
$q_{24}(x_{2},x_{4}\cmid x_{1},x_{3}) = p(x_{2} \cmid x_{1}) p(x_{4}
\cmid x_{1},x_{2},x_{3})$, and
$q_{24}(x_{2}\cmid x_{1},x_{3},x_{4}) = q_{24}(x_{2},x_{4}\cmid x_{1},x_{3})
  /   ( \sum_{x_{2}} q_{24}(x_{2},x_{4}\cmid x_{1},x_{3})
  )  $. Combining these kernels as in Theorem~\ref{thm:1-line-id} yields the same identifying functional as the one obtained
by the ID algorithm applied to ${\mathscr G} $:
\begin{equation*}
p\bigl(x_{4} \mid \Dovtex _{{\mathscr G}}(x_{2})\bigr) =
\sum_{x_{3},x_{1}} p(x_{1}) p(x_{3}
\cmid x_{2},x_{1}) \sum_{x'_{2}} p
\bigl(x_{4}\cmid x_{3},x'_{2},x_{1}
\bigr) p\bigl(x'_{2} \cmid x_{1}\bigr),
\end{equation*}
where we relabel $x_{2}$ as $x'_{2}$ in the last kernel to avoid confusion
between free and summation quantifier-captured versions of the variable
$x_{2}$ in the final expression.
\end{example}

\subsection{Connections with Tian's constraint algorithm}
\label{subsec:tian}

An algorithm for enumerating constraints on kernels in marginals of DAG
models was given in \citet{tian02on}. Tian's algorithm may be viewed as
implementing fixing for both graphs and kernels, with three important differences
from our formalization. First, unlike CADMGs, subgraphs obtained by fixing
in \citet{tian02on} do not show fixed nodes explicitly. This makes it difficult
to graphically represent constraints that may involve fixed nodes. Second,
the kernel objects obtained in intermediate steps of the algorithm, called
``q-factors'' and written as $Q[V]$ where $V$ is the set of nodes not yet
fixed, do not explicitly show the dependence on nodes already fixed. This
makes it hard to explicitly write down restrictions on q-factors, since
these restrictions often state that dependence on some already fixed nodes
does not exist. Third, there is no unified fixing operation on kernels,
instead the algorithm in \citet{tian02on} alternates between steps corresponding
to the application of the g-formula (division by a conditional density),
and steps corresponding to marginalization.

For a given DAG ${\mathscr G} (V \cup L)$ and a density $p(x_{V \cup L})$ Markov
relative to ${\mathscr G} (V \cup L)$, a subset of observable nodes $V$, and a topological
order $\prec $ on ${\mathscr G} $, Tian's constraint algorithm gives a list of constraints
of the form ``a kernel corresponding to a q-factor $Q[C]$ obtained by some
set of applications of the g-formula and marginalization on
$p(x_{V})$ does not functionally depend on a set $X_{D}$, for some
$D \subseteq V$.''

The algorithm in \citet{tian02on} was developed in the context of hidden
variable DAG models only. We have reformulated this algorithm, using the
language of CADMGs and kernels, to yield an algorithmic specification of
a set of generalized independence constraints implied by the larger nested
model, which does not rely on the existence of an underlying hidden variable
DAG; see Algorithm~1 in the Supplementary Material, Section~E. As noted before, this is not a trivial reformulation,
since the issue of invariance to choice of valid fixing sequences arises
if we do not assume the existence of an underlying hidden variable DAG.
We now state a key result relating this algorithm and the nested Markov
model.

\begin{theorem}
\label{thm:tian_equals_nested}
For an ADMG ${\mathscr G} (V)$, let ${\mathcal{P}}_{\mathrm{t}}({\mathscr G} , V, \prec )$ be the set
of densities $p(x_{V})$ in which the list of constraints found by Algorithm~1 holds. Then
${\mathcal{P}}_{\mathrm{t}}({\mathscr G} , V, \prec ) = {\mathcal{P}}^{\mathrm{n}}({\mathscr G} (V))$.
\end{theorem}
Thus, the set of constraints given by Algorithm~1 implicitly
defines the nested Markov model.

\subsection{Connections with r-factorization}

\citet{shpitser11eid} used constraints in causal DAG models with latent
variables to construct a variable elimination (VE) algorithm for evaluating
causal queries $p(x_{Y} \cmid \Dovtex (x_{A}))$ in a computationally efficient
manner. This algorithm used an older definition called the ``r-factorization
property.'' The nested Markov model r-factorizes, which implies that the
VE algorithm applies to these models as well.

\begin{theorem}
\label{r-fact}
If $p(x_{V}) \in {\mathcal{P}}^{\mathrm{n}}({\mathscr G} (V))$, then $p(x_{V})$ r-factorizes
with respect to ${\mathscr G} $ and
$\{ \phi _{V \setminus C}(p(x_{V}); {\mathscr G} ) \mid C \in {\mathcal{I}}({\mathscr G} ) \}$.
\end{theorem}

\section{Summary}

We have introduced a novel statistical model defined by the equality constraints
holding in marginals of DAG models, such as the Verma constraint. Though
this model represents constraints found in marginal distributions, it does
not itself model latent variables explicitly; indeed the existence of a
latent variable may imply additional inequality constraints not captured
by our model; see Example~\ref{ex:chsh}. We call this model the nested
Markov model, and it is represented by means of an acyclic directed mixed
graph (ADMG). Our model is ``nested'' because it is recursively defined.
Specifically, just as a DAG model links a graph and a distribution, the
nested model links sets of graphs derived from the original ADMG by a graphical
fixing operation with sets of kernels obtained from the original distribution
by an analogous fixing operation. This operation unifies certain marginalizations,
conditioning operations and applications of the g-formula. Central to our
model definition is the fact that any two valid sequences of fixing operations
that fix the same set of nodes give the same result. We have characterized
our model via Markov properties and a factorization. We have also shown
a close connection between our model and a constraint enumeration algorithm
for marginals of causal DAG models given in \citet{tian02on}, and used
the fixing operation to characterize all identifiable causal effects in
hidden variable DAG models using a one line formula (\ref{eqn:1-line-id}).


\subsection*{Acknowledgements}
We thank Zhongyi Hu for pointing out an error in the proof of Proposition~\ref{prop:semigraphoid}, and the Associate Editor and referees for their
helpful comments. The authors completed work on this paper while visiting
the American Institute for Mathematics and the Simons Institute, Berkeley, California.
%
The first author was supported in part by ONR Grants N00014-19-1-2446 and
N00014-15-1-2672, and NIH Grant R01 AI032475.

The third author was supported
in part by ONR Grant N00014-19-1-2446, and NIH Grant R01 AI032475.

The fourth author was supported in part by ONR Grant N00014-21-1-2820,
NSF Grants 2040804 and 1942239, and NIH Grant R01 AI127271-01A1.

{
\small

%
%

\bibliographystyle{chicago}
\bibliography{references}
}

\makeatletter\@input{xx_supp.tex}\makeatother

\end{document}



\setcounter{theorem}{35}
\setcounter{figure}{9}


\title{Supplementary Materials for\\
Nested Markov Properties for Acyclic Directed Mixed Graphs}




\maketitle

\appendix

\section{Graphical Model Definitions And Results}

\subsection{Graphical Definitions} \label{sec:app:definitions}

An undirected graph $\G(V)$ is a pair consisting of a finite set of vertices
$V$ and a finite set of unordered pairs of vertices corresponding to undirected edges,
which we denote by $E \subseteq \{\{a,b\} \subseteq V : a \neq b\}$.
We denote undirected edges $\{a,b\}$ by $a-b$.

A directed graph $\G(V)$ is similar but the edge set consists of
\emph{ordered} pairs: $E \subseteq \{(a,b) \in V \times V : a \neq b\}$; 
if $(a,b) \in E$ we write $a \rightarrow b$.  A directed acyclic graph
(DAG) is subject to the restriction that there are no directed 
cycles $v\rightarrow \cdots \rightarrow v$.

A \emph{directed mixed graph} $\G(V)$ is a graph with a set of
vertices $V$, and a set of edges $E$ which are each either directed
($\to$) or bidirected ($\leftrightarrow$).  
A {\it path} in any of our graphs $\G$ is a sequence of distinct, 
adjacent edges, of any type or orientation, between distinct vertices.  
The first and last vertices on the path are the \emph{endpoints}.
 In mixed graphs it is necessary to specify a path as a sequence 
 of edges rather than vertices because it is possible that there 
 is both a directed and a bidirected edge between the same
 pair of vertices.
 A path of the form $a\rightarrow\cdots\rightarrow b$ is a {\em 
 directed path} from $a$ to $b$; similarly, paths of the form 
$a-\cdots- b$ or 
 $a\leftrightarrow\cdots\leftrightarrow b$ are respectively 
 {\em undirected} or {\em bidirected} paths
 between $a$ and $b$.

Let $a$, $b$ and $d$ be vertices in a directed mixed graph $\G$. If $b\to a$ then
we say that $b$ is a {\em parent} of $a$, and $a$ is a {\em child} of $b$.
A vertex $a$ is said to be an {\it ancestor} of a vertex $d$
if {\it either} there is a directed path $a\rightarrow\cdots\rightarrow
d$ from $a$ to $d$, {\it or} $a=d$; similarly $d$ is said to be
a {\em descendant} of $a$.  If this is not the case we say that 
$d$ is a {\em non-descendant} of $a$.  

If $a \leftrightarrow b$ we say that $a$ is a \emph{sibling} of $b$ (and vice
versa).  The \emph{district} of $a$ is the set of vertices that are connected
to $a$ by a bidirected path (including $a$ itself).  
%
The set of parents, children, ancestors, descendants, 
non-descendants and siblings of $a$ in $\G$ are written $\pa_\G(a)$,
$\ch_\G(a)$, $\an_\G(a)$,
$\de_\G(a)$, $\nd_\G(a)$ and $\sib_\G(a)$ respectively; the 
district of $a$ is denoted $\dis_\G(a)$. 
We apply these definitions disjunctively to sets, 
e.g.~$\an_\G(A) = \bigcup_{a\in A} \an_\G(a)$.
A set of vertices $A$ in $\G$ is called \emph{ancestral} if $\an_\G(a) \subseteq A$ whenever $a \in A$.

An ordering $\prec$ of nodes in $\G$ is said to be {\em topological}
if for any vertex pair $a,b \in \G$, if $a \prec b$, then
$a \not\in \de_\G(b)$; 
note that this definition is the same as that for a DAG.  
We define the set
$\pre_{\G, \prec}(b) \equiv \{ a \; |\; a \prec b \}$.


\begin{definition}\label{def:dag-markov} A distribution $p(x_V)$ is said to be {\em Markov relative
to a DAG} $\G$ if
\begin{equation}\label{eq:dagfactor}
p(x_V) = \prod_{v\in V} p(x_v \cmid x_{\pa_{\G}(v)}).
\end{equation}
\end{definition}
\noindent We denote the set of distributions that are Markov relative to a DAG
$\G$ by ${\cal P}^{\rm d}(\G)$.

A {\em directed cycle} is a
path of the form $v \to \cdots \to w$ along with an edge $w \to v$.  An
\emph{acyclic} directed mixed graph (ADMG) is a mixed graph containing no
directed cycles.  For any $T \subset V$, the \emph{induced subgraph} $\G_T$
of $\G$ contains the vertex set $T$, and the subset of edges in $E$ that
have both endpoints in $T$.



\subsection{The m-separation criterion}
\label{ssec:msep}

We introduce the natural
extension of d-separation to mixed graphs.
A non-endpoint 
vertex $z$ on a path is a
\emph{collider on the path} if the edges preceding and succeeding $z$ on the path both 
have an arrowhead at $z$, i.e. $\rightarrow  z \leftarrow$, $\leftrightarrow
z\leftrightarrow$,
$\leftrightarrow  z\leftarrow$, $\rightarrow z\leftrightarrow$.  A non-endpoint
vertex $z$ on a path which is not a collider is a 
\emph{non-collider on the path}, i.e. $\leftarrow z \rightarrow$, $\leftarrow
z\leftarrow$, $\rightarrow z \rightarrow$, $\leftrightarrow z 
\rightarrow$, $\leftarrow z \leftrightarrow$.  A path between vertices 
$a$ and $b$ in a mixed graph $\G$ is said to be \emph{m-connecting given a set} $C$ \emph{in} 
 $\G$
if 
every non-collider on the path is not in $ C$, and 
every collider on the path is an ancestor of $ C$ in $\G$. 
If there is no path 
 m-connecting $a$ and $b$
given $ C$, then $a$  and $b$ are said to be {\it m-separated} given $ C$. 
Sets
$ A$ and $ B$ are said to be 
\emph{m-separated} given $ C$, if for all $a$, $b$, with $a \in {
A}$ and $b\in { B}$, $a$ and $b$ are  m-separated given $ C$. 
 Note that if $\G$ is a DAG then the above
definition is identical to Pearl's d-separation
criterion; see \citep{pearl88probabilistic}.



\subsection{Latent Projections}
\label{ssec:latent project}

Given a DAG with latent variables we associate a mixed
graph via the following operation; see \citep{verma91theory}.

\begin{definition}[latent projection]\label{def:proj}
Let  ${\G}$ be an ADMG with vertex set 
$V\dotcup L$ where the vertices in $V$ are observed,
those in $L$ are latent and $\dotcup$ indicates a disjoint union. The {\em latent projection} $\G(V)$
is a directed mixed graph with vertex set $V$, where for every pair of distinct
vertices $a,b \in V$:

\begin{itemize}
\item[\rm (i)] $\G(V)$ contains an edge $a \to b$ if there is a
directed path $a \rightarrow \cdots  \rightarrow b$ on which every
non-endpoint vertex is in $L$.

\item[\rm (ii)] $\G(V)$ contains an edge $a \leftrightarrow b$ if
there exists a path between $a$ and $b$ such that the non-endpoints 
are all non-colliders in $L$, and such that the 
edge adjacent to $a$ and the edge adjacent to $b$ both have arrowheads at those vertices.
For example, $a \leftrightarrow \cdots \rightarrow b$.
\end{itemize}
\end{definition}


Generalizations of this construction are considered by \citet{wermuth:11},
\citet{koster:02} and \citet{sadeghi14markov} in the context of marginalizing {\em and}\ conditioning.
As an example, the mixed graph in Figure \ref{fig:verma1}(i) is the latent
projection of the DAG shown in Figure \ref{fig:mut}(i).

\begin{proposition}\label{proj-admg}
If $\G$ is an ADMG with vertex set $V\dotcup L$ then $\G(V)$
is also an ADMG.
\end{proposition}
\begin{proof} 
It follows directly from the construction that if
$v \to v^\prime$ in $\mathscr{G}(V)$ then $v \in \an_{\mathscr{G}}(v^\prime)$.
The presence of a directed cycle in $\mathscr{G}(V)$ would imply
a directed cycle in $\mathscr{G}$, which is a contradiction.
\end{proof}

\bigskip

\begin{definition}[latent projection for CADMGs]
Let  ${\G(V\dotcup L,W)}$ be a CADMG
where $W$ is a set of fixed vertices, the random vertices are $V \cup L$ and
those in $L$ are latent. The {\em latent projection} $\G(V, W)$
is a directed mixed graph with random vertex set $V$, where for every pair of distinct
vertices $a,b \in V \cup W$:

\begin{itemize}
\item[\rm (i)] $\G(V,W)$ contains an edge $a \to b$ if there is a
directed path $a \rightarrow \cdots  \rightarrow b$ on which every
non-endpoint vertex is in $L$.


\item[\rm (ii)] $\G(V,W)$ contains an edge $a \leftrightarrow b$ if
there exists a path between $a$ and $b$ such that the non-endpoints 
are all non-colliders in $L$, and such that the 
edge adjacent to $a$ and the edge adjacent to $b$ both have arrowheads at those vertices.
For example, $a \leftrightarrow \cdots \rightarrow b$.

\end{itemize}
\label{def:proj_cadmg}
\end{definition}

\begin{proposition}\label{proj-cadmg}
If $\G(V\dotcup L, W)$ is a CADMG then $\G(V,W)$
is also a CADMG. In particular,  for all $w\in W$, $\pa_{\G(V,W)}(w) = \sib_{\G(V,W)}(w) = \emptyset$.
\end{proposition}
\begin{proof}
The absence of directed cycles in $\G(V,W)$ follows from the proof of Proposition
\ref{proj-admg}. Since vertices in $W$ only have children in $\G(V\dotcup L, W)$,
and no vertex in $W$ is latent, it follows from Definition \ref{def:proj_cadmg}
  that there are no additional edges introduced between vertices in $W$.
  Further, any additional edge with a vertex $w \in W$ as an endpoint takes the form 
  $w \rightarrow v$.
\end{proof}

\subsection{Additional Results On Graphs And Kernels}

\begin{proa}{\ref{prop:msep}}
Let $\G$ be a DAG with vertex set $V\dotcup L$.  For disjoint subsets $A, B, C \subseteq V$  
(where $C$ may be empty), 
$A$ is d-separated from $B$ given $C$ in $\G$  if and only if
$A$ is m-separated from $B$ given $C$ in $\G(V)$.
\end{proa}
\begin{proof} For every path ${\pi}$ in $\mathscr{G}$, by Definition \ref{def:district}
there is a corresponding path 
${\pi}^*$ in $\mathscr{G}(V)$ consisting of a subsequence of the vertices on $\pi$, such that
if a vertex $v$ is a collider (non-collider) on ${\pi}^*$ then it is a collider (non-collider) on $\pi$.
It follows from this that d-connection in $\mathscr{G}$ implies m-connection in $\mathscr{G}(V)$.
Conversely, by Definition \ref{def:district} for each edge
$\epsilon^*$ with endpoints $e$ and $f$ on $\pi^*$ in $\mathscr{G}(V)$ there is a unique path $\mu_{\epsilon^*}$ with 
endpoints $e$ and $f$ in $\mathscr{G}$ such that there is an arrowhead at $e$ ($f$) on $\epsilon^*$ if and only if the
edge on $\mu_{\epsilon^*}$ with $e$ ($f$) as an endpoint has an arrowhead at $e$ ($f$).
It then follows from Lemma 3.3.1 in \citep{spirtes93causation} that if there is a path m-connecting $a$ and $b$ given $C$
in $\mathscr{G}(V)$ then there is a path d-connecting $a$ and $b$ given $C$ in $\mathscr{G}$. The result
then follows.\end{proof}

\bigskip

\begin{proa}{\ref{prop:add-w}}
In a set of kernels $q_V(x_V \cmid x_W)$, 
$X_A \;\ci\; X_B \mid X_C$ if and only if either
$X_A \;\ci\; X_{B\cup (W\setminus C)} \mid X_C$ or 
$X_B \;\ci\; X_{A\cup (W\setminus C)} \mid X_C$.
\end{proa}
\begin{proof}
This follows directly from Definition \ref{def:ind}.
\end{proof}

\bigskip

\begin{proa}{\ref{prop:semigraphoid}}
The semi-graphoid axioms are sound for kernel independence.
\end{proa}
\begin{proof}
Symmetry follows directly from Definition \ref{def:ind}.

Let $q_V(x_V \cmid x_W)$ be a kernel for which $(X_A \ci X_{B \cup D} \mid X_C)$ holds.  Assume condition (a) for this
independence holds, that is $A \cap W = \emptyset$, and assume $q_V(x_A \cmid x_B, x_C, x_D, x_{W\setminus (B \cup C \cup D)})$
is only a function of $x_A$ and $x_C$.
Then it immediately follows that condition (a) for $(X_A \ci X_B \mid X_{C \cup D})$ also holds.
To see that $(X_A \ci X_D \mid X_C)$ holds, consider the following derivation.
\begin{align*}
\lefteqn{q_V(x_{A} \cmid x_{C}, x_{D}, x_{W \setminus (C \cup D)})}\\
&=
\frac{
\sum\limits_{x_{B \cap V}} q_V(x_{A}, x_{B \cap V}, x_{C \cap V}, x_{D \cap V} \cmid x_W)
}{
\sum\limits_{x_{B \cap V}} q_V(x_{B \cap V}, x_{C \cap V}, x_{D \cap V} \cmid x_W)
}
\\
&=
\frac{
\left(
\sum\limits_{x_{B \cap V}}
q_V(x_{A} \cmid  x_B, x_C, x_D, x_{W \setminus (B \cup C \cup D)}) \cdot
	q_V(x_{B \cap V}, x_{C \cap V}, x_{D \cap V} \cmid x_W)
\right)
}{
\sum\limits_{x_{B \cap V}} q_V(x_{B \cap V}, x_{C \cap V}, x_{D \cap V} \cmid x_W)
}
\\
&=
\frac{
\left(
q_V(x_{A} \cmid  x_B, x_C, x_D, x_{W \setminus (B \cup C \cup D)}) \cdot
\sum\limits_{x_{B \cap V}} 
	q_V(x_{B \cap V}, x_{C \cap V}, x_{D \cap V} \cmid x_W)
\right)
}{
\sum\limits_{x_{B \cap V}} q_V(x_{B \cap V}, x_{C \cap V}, x_{D \cap V} \cmid x_W)
}
\\[6pt]
&= q_V(x_{A} \cmid  x_B, x_C, x_D, x_{W \setminus (B \cup C \cup D)}).
\end{align*}
Here the first equality follows by (\ref{eqn:kernel-marg-cond}), the second by the chain rule of probability, which applies to
kernels also by (\ref{eqn:kernel-marg-cond}), the third since we established above that
$(X_A \ci X_B \mid X_{C \cup D})$ holds in $q_V(x_V \cmid x_W)$, and the last by cancellation.  Since $(X_A \ci X_{B \cup D} \mid X_{C})$,
the final term does not depend upon $x_B$ or $x_D$, so the independence $(X_A \ci X_D \mid X_C)$ follows.

Now assume $(X_A \ci X_{B \cup D} \mid X_C)$ holds due to condition (b), that is
$q_V(x_{B}, x_{D} \cmid x_{A}, x_{C},$ $x_{W \setminus (A \cup C)})$ is only a function of $x_B$, $x_C$ and $x_D$.
Then $q_V(x_{B} \cmid x_{A}, x_{C}, x_{D}, x_{W \setminus (A \cup C \cup D)})$ is also only a function of
$x_B$, $x_C$ and $x_D$, which in turn implies $(X_A \ci X_B \mid X_{C \cup D})$. 
To see that $(X_A \ci X_D \mid X_C)$ holds, 
simply sum over $x_B$, to see that 
$q_V(x_{D} \cmid x_{A}, x_{C}, x_{W \setminus (A \cup C)})$ is only a function of $x_C$ and $x_D$.

To show the converse, let $q_V(x_V \cmid x_W)$ be a kernel in which $(X_A \ci X_B \mid X_{C \cup D})$ and $(X_A \ci X_D \mid X_C)$ hold.
If this is due to condition (a) (where $A \cap W = \emptyset$), then $(X_A \ci X_{B \cup D} \mid X_C)$ follows by the above derivation, and the two assumed independences.  If this is due to condition (b) (where $(B \cup D) \cap W = \emptyset$), then $(X_A \ci X_{B \cup D} \mid X_C)$ follows
by the above derivation where $x_A$ is swapped with $x_B$ and $x_D$ respectively, and the two assumed independences.
\end{proof}

\begin{lema}{\ref{lem:fixing-yields-kernel}}
$q^*_{V\setminus H}(x_{R}, x_{T}\cmid x_{H}, x_{W})$ is a kernel.
\end{lema}

\begin{proof}
Since $q_{V}(x_{R} \cmid x_{H}, x_{T}, x_{W})$ and
$q_{V}(x_{T} \cmid x_{W})$ are derived from $q_{V}(x_V \cmid x_W)$, they are kernels.  Thus,
for every value $x_{H},x_{W}$, $q^*_{V\setminus H}(x_{R}, x_{T}\cmid x_{H}, x_{W}) \geq 0$.
In addition,
\begin{align*}
\sum_{x_R,x_T} q^*_{V\setminus H}(x_{R}, x_{T}\cmid x_{H}, x_{W})
&=
\sum_{x_R,x_T} q_{V}(x_{R} \cmid x_{H}, x_{T}, x_{W}) \, q_{V}(x_{T} \cmid x_{W})\\
&=
\sum_{x_T} q_{V}(x_{T} \cmid x_{W}) \cdot \left( \sum_{x_R} q_{V}(x_{R} \cmid x_{H}, x_{T}, x_{W}) \right)\\
&= \sum_{x_T} q_{V}(x_{T} \cmid x_{W}) \cdot 1 = 1.
\end{align*}
\end{proof}

\begin{lema}{\ref{lem:invariance-kernel}}
For the kernel constructed in {\rm (\ref{eq:invariance-one})}:
\begin{align}
q^*_{V\setminus H}(x_{R}, x_{T}\cmid x_{H}, x_{W}) &= q_{V}(x_{R} \cmid x_{H}, x_{T}, x_{W})
q_{V}(x_{T} \cmid x_{W}), \tag{\ref{eq:invariance-two}}
\end{align}
and hence\\[-24pt]
\begin{align}
q^*_{V\setminus H}(x_{R}\cmid x_{H}, x_{T}, x_{W}) &= q_{V}(x_{R}\cmid x_{H}, x_{T}, x_{W});
\tag{\ref{eq:invariance-cond}}\\
q^*_{V\setminus H}(x_{T} \cmid x_{H}, x_{W}) &= q_{V}(x_{T} \cmid x_{W}) = q^*_{V\setminus H}(x_{T} \cmid x_{W}).\tag{\ref{eq:invariance-marg}}
\end{align}
\end{lema}
\begin{proof}
By the chain rule of probability: 
\[
q_{V}(x_{R}, x_{H}, x_{T} \cmid  x_{W}) = 
q_{V}(x_{R} \cmid x_{H}, x_{T}, x_{W}) \, q_{V}(x_{H}  \cmid  x_{T}, x_{W}) \,
q_{V}(x_{T} \cmid  x_{W}).
\]
Hence (\ref{eq:invariance-two}) follows directly from  (\ref{eq:invariance-one}).
Since 
\begin{align}\label{eq:proof-one}
q^*_{V\setminus H}(x_{R}, x_{T}\cmid x_{H}, x_{W}) =
q^*_{V\setminus H}(x_{R}\cmid x_{H}, x_{T}, x_{W}) \,
q^*_{V\setminus H}(x_{T}\cmid x_{H}, x_{W}),
\end{align}
the first equality in (\ref{eq:invariance-marg}) follows by summing the right-hand sides of (\ref{eq:invariance-two}) 
and (\ref{eq:proof-one}) over $x_R$. The second then follows directly since 
$q_{V}(x_{T} \cmid x_{W})$ is not a function of $x_H$.
Finally (\ref{eq:invariance-cond}) follows from (\ref{eq:invariance-marg}) by canceling 
$q^*_{V\setminus H}(x_{T} \cmid x_{H}, x_{W})$ and $q_{V}(x_{T} \cmid x_{W})$
from the right-hand sides of (\ref{eq:invariance-two}) and (\ref{eq:proof-one}).
\end{proof}

\begin{proa}{\ref{prop:constructed}}
For the kernel constructed in {\rm (\ref{eq:invariance-one})}:
\begin{itemize}
\item[{\rm (i)}] $(X_{H} \ci X_{T} \mid X_{W})$ in  $q^*_{V\setminus H}(x_{V \setminus H} \cmid x_{H}, x_{W})$.
\item[{\rm (ii)}] If $X_V \ci X_{W \setminus W_1} \mid X_{W_1}$ in $q_V$, then
$X_{H \cup (W \setminus W_1)} \ci X_T \mid X_{W_1}$ in $q^*_{V \setminus H}$.
\item[{\rm (iii)}] If $R = \emptyset$, $X_H \ci X_{V \setminus H} \,|\, X_W$ in $q^*_{V \setminus H}$.
\end{itemize}
\end{proa}
\begin{proof}
From (\ref{eq:invariance-marg}), the kernel $q^*_{V\setminus H}(x_T \cmid x_H, x_W)$ is 
a function only of $x_T$ and $x_{W}$, which implies (i).  
Further, if $X_V \ci X_{W \setminus W_1} \mid X_{W_1} \, [q_V]$ then
$q_V(x_V \cmid x_W)$ is not a function of $x_{W \setminus W_1}$, and 
therefore neither is $q_V(x_T \cmid x_W)$.  Hence the previous argument
gives (ii).  (iii) is implied by definition of marginalization.
\end{proof}

\begin{lema}{\ref{lem:markovfact}} 
Let $\G$ be a CADMG with topological ordering $\prec$.
If $q_V \in {\cal P}^{\rm c}_{\rm f}(\G)$, then
  for every ancestral set $A$ and every $D \in {\cal D}(\G_{A})$, we have
\begin{equation}\label{eq:district-factor-ident}
f_D^A(x_D \cmid x_{\pa(D) \setminus D}) = \prod_{d \in D} q_V(x_d \cmid x_{T^{(d,D)}_\prec}),
\end{equation}
where $T^{(d,D)}_\prec \equiv \mb_\G( d,\; \an_\G(D) \cap \pre_{\G,\prec}(d))$, so that
\begin{equation}\label{eq:markovb}
q_V(x_d \cmid x_{T^{(d,D)}_\prec}) = q_V(x_d \cmid x_{A \cap \pre_{\G,\prec}(d)}, x_W).
\end{equation}
Conversely, if
{\rm (\ref{eq:markovb})} holds for all $d \in A$ and ancestral sets $A$, 
and the $f_D^A$ functions are defined by 
{\rm (\ref{eq:district-factor-ident})},
then $q_V \in {\cal P}^{\rm c}_{\rm f}(\G)$.
\end{lema}

\begin{proof}
(Cf.~proof of Lemma 1 in \citep{tian02general}):\par
\noindent ($\Rightarrow$) The proof is by induction on the size of the set $A$ in (\ref{eq:tian-factor}).
If $|A|=1$, the claim is trivial. 
Suppose that the claim holds for sets $A$ of size less than $n$. Specifically, we assume
that  all factors $f^A_D(\cdot | \cdot )$ occurring in (\ref{eq:tian-factor}) for
sets $A$ such that $|A| < n$, obey (\ref{eq:district-factor-ident}) and (\ref{eq:markovb}).

Now suppose that $A$ contains $n$ variables and that $A \subseteq \{t\} \cup \pre_{\G, \prec}(t)$
for some vertex $t \in A$. Let $D^t \equiv \dis_{\G_{A}}(t)$ be the district containing $t$ 
in $\G_A$, so that by hypothesis:
\begin{equation}\label{eq:indfactor}
q_V(x_{A\cap V} \cmid x_W ) = f^A_{D^t}(x_{D^t} \cmid x_{\pa(D^t)\setminus D^t}) \;
\prod_{D \in {\cal D}(\G_{A})\setminus \{ D^t\}}\!\!\!\! f^A_D(x_D \cmid x_{\pa(D) \setminus D}).
\end{equation}
Since $A\setminus \{t\} \subseteq \pre_{\G, \prec}(t)$, for all
$D \in {\cal D}(\G_{A})\setminus \{ D^t\}$, $t\notin \pa_\G(D)\setminus D$.
Thus summing both sides of (\ref{eq:indfactor}) over $x_t$ leads to:
\begin{align}
q_V(x_{(A\cap V)\setminus \{t\}} \cmid x_W ) &=\left( \sum_{x_t} f^A_{D^t}(x_t, x_{D^t\setminus \{t\}} \cmid x_{\pa(D^t)\setminus D^t})\right)
\nonumber\\
&\quad\times 
\prod_{D \in {\cal D}(\G_{A})\setminus \{D^t\}}\!\!\!\! f^A_D(x_D \cmid x_{\pa(D) \setminus D}).\label{eq:t-margin}
\end{align}
Now since $\prec$ is a topological ordering, $A\setminus \{ t\}$ is an ancestral
set in $\G$; further every district in ${\cal D}(\G_{A})\setminus \{D^t\}$ is also a
district in $\G_{A\setminus \{t\}}$ hence, by the induction hypothesis,
all of the corresponding densities $f^A_D(\cdot | \cdot)$ in
(\ref{eq:t-margin})
obey  (\ref{eq:district-factor-ident}) and (\ref{eq:markovb}). Rearranging (\ref{eq:indfactor}) gives:
\[
 f_{D^t}^A(x_{D^t} \cmid x_{\pa(D^t)\setminus D^t}) =
 \frac{ 
 q_V(x_{A\cap V} \cmid x_W )
 }{
 \prod_{D \in {\cal D}(\G_{A}) \setminus \{D^t\}}\, f^A_D(x_D \cmid x_{\pa(D) \setminus D})
 }.
\]
By the chain rule of probability,
\[
q_V(x_{A\cap V} \cmid x_W) = \prod_{a \in {A\cap V}} q_V(x_a \cmid x_{A \cap \pre_{\G,\prec}(a)},x_W).
\]
Since for every $D \in {\cal D}(\G)\setminus \{D^t\}$, $f^A_D(\cdot | \cdot)$ obeys (\ref{eq:district-factor-ident}) and
 (\ref{eq:markovb})
so 
\begin{equation}\label{eq:firstresult}
f_{D^t}^A(x_{D^t} \cmid x_{\pa(D^t)\setminus D^t}) = 
\prod_{d \in D^t} q_{V}(x_d \cmid x_{A \cap \pre_{\G,\prec}(d)},x_W).
\end{equation}
By the inductive hypothesis applied to $A\setminus \{ t\}$, we have that 
{\rm (\ref{eq:district-factor-ident})} holds for all $D^* \in {\cal D}(\G_{A\setminus \{t\}})$ and 
(\ref{eq:markovb}) holds for all $d \in D^t\setminus \{t\} \subseteq A\setminus \{ t\}$. Consequently:
\begin{equation}\label{eq:firstresult-simplified}
f_{D^t}^A(x_{D^t} \cmid x_{\pa(D^t)\setminus D^t}) = 
q_{V}(x_t \cmid x_{A \cap \pre_{\G,\prec}(t)},x_W)
\prod_{d \in D^t\setminus\{t\}} q_{V}(x_d \cmid x_{T_{\prec}^{(d,A)}}).
\end{equation}
Rearranging (\ref{eq:firstresult-simplified})
we obtain:
\begin{eqnarray}
 q_{V}(x_t \cmid x_{A \cap \pre_{\G,\prec}(t)},x_W) &=& 
   \frac{
  f_{D^t}^A(x_{D^t} \cmid x_{\pa(D^t)\setminus D^t})
  }{
   \prod_{d \in D^t\setminus \{t\}} q_{V}(x_d \cmid x_{T_{\prec}^{(d,A)}})
  }.\label{eq:tindep}
\end{eqnarray}
%
By Proposition \ref{prop:mb}, for all $d \in D^t\setminus \{t\}$ we have $T^{(d,A)}_{\prec} \subseteq (D^t\setminus \{t\}) \cup \pa_\G(D^t)$, so 
the RHS is a function of $x_{D^t \cup \pa (D^t)}$.
Hence:
\[
X_t \ci X_{(A\cup W) \setminus \left(D^t \cup \pa (D^t)\right)} \mid X_{(D^t\setminus \{t\}) \cup \pa (D^t)}\quad
[q_V]
\]
from which (\ref{eq:markovb}) holds for $t$ and $D^t$; {(\ref{eq:district-factor-ident})} follows from (\ref{eq:firstresult-simplified}).

\noindent ($\Leftarrow$) Follows from construction of the kernels $f^A_D(\cdot | \cdot)$ via (\ref{eq:district-factor-ident}).
\end{proof}

\subsection{Results On Fixing and Fixability} \label{ssec:fixable}


\begin{proa}{\ref{prop:exists-fixable-descendant}}
For any $v \in V$, we have $\de_{\G}(v) \cap \dis_\G(v) \cap {\mathbb F}({\G}) \neq \emptyset$.
\end{proa}
\begin{proof}
Let $\prec$ be any topological ordering of $\G$, and 
consider the maximal vertex in the set 
$\de_{\G}(v) \cap \dis_\G(v) \supseteq \{v\}$.
\end{proof}

\begin{proa}{\ref{prop:barren-fixable}}
If $q_V$ is Markov with respect to a CADMG ${\G}(V,W)$ and $\ch_\G(r) = \emptyset$ 
{\rm (}and hence $r \in \mathbb{F}(\G)${\rm )}, 
then 
\[
\phi_r (q_V(x_V \cmid x_W); {\G})
= \sum_{x_r} q_V(x_V \cmid x_W)
= q_V(x_{V\setminus \{r\}} \cmid x_W).
\]
%
\end{proa}
\begin{proof}
This follows by definition of ${\mathbb F}(\G)$, $\phi_r$ and (\ref{eq:mb-when-t-fixable}).
\end{proof}

Note that, in this Appendix, we will sometimes write simply $\phi_r(q_V)$ rather than
 $\phi_r(q_V; \G)$, provided that the graph with respect to which the fixing is being 
 performed is clear.

\begin{proa}{\ref{prop:fix-with-mb}}
 If $q_V$ is Markov with respect to $\G(V,W)$ and $r \in \mathbb{F}(\G)$, then
 \begin{align}
  \phi_r (q_V(x_V \cmid x_W);{\G}) = q_V(x_V \cmid x_W) / q_V(x_r \cmid x_{\nd_{\G}(r)}). \tag{\ref{eq:alt-fix-def}}
\end{align}
\end{proa}
\begin{proof}
This follows from Theorem \ref{thm:one-step-markov}
 and
 (\ref{eq:mb-when-t-fixable})
 with $t=r$.
\end{proof}

\begin{lema}{\ref{lem:no-backtracking}}
If $r \in {\mathbb F}({\G})$, then ${\mathbb F}({\G}) \setminus \{r\} \subseteq {\mathbb F}(\phi_r ({\G}))$.
\end{lema} 

That is, any vertex $s$ that was fixable before $r$ was
fixed is still fixable after $r$ has been fixed (with the obvious exception of
$r$ itself).  Thus when fixing
vertices, although our choices may be limited at various stages, we are never faced with a choice
between fixing $r$ and $r^\prime$, whereby choosing $r$ precludes subsequently
fixing $r^\prime$.
\begin{proof}
This follows from the definition of ${\mathbb F}({\G})$ and $\phi_r ({\G})$.
Since the set of edges in $\phi_r(\G)$ is a subset of the set of edges in $\G$, any vertex
$t\in V\setminus \{r\}$ that is in ${\mathbb F}({\G})$ is also in ${\mathbb F}(\phi_r({\G}))$.
\end{proof}

\begin{proa}{\ref{prop:sub}}
  If $\G$ is a subgraph of ${\G}^*$ with the same random and fixed
  vertex sets, then
  ${\mathbb F}( {\G}^*) \subseteq {\mathbb F}( {\G})$.
\end{proa}

\begin{proof} 
If $r$ has no descendant within the district containing it in ${\G}^*$ then this also holds in ${\G}$.
\end{proof}

\begin{proa}{\ref{prop:districts-after-fixing}}
Let $\G$ be a CADMG, with $r \in {\mathbb F}({\G})$.  If $r \in D^r \in {\cal D}(\G)$, then
\[
 {\mathcal D}(\phi_r(\G)) = \left( {\cal D}(\G)\setminus \{{D}^r \}\right) \cup
\;{\cal D}(\G_{{D}^r\setminus \{r\}}),
\]
where the sets on the right are disjoint. 
Thus, if $D \in {\cal D}(\phi_r({\G}))$ then $D\subseteq D^*$ for some $D^* \in {\cal D}({\G})$;
further if $D \neq D^*$, then $r\in D^*$.
\end{proa}


\begin{proof}
Fixing $r$ does not affect bidirected edges that are not
adjacent to $r$, so it is clear that districts other than $D^r$
will be preserved.  The districts that replace $D^r$ are just 
those in the induced subgraph over $D^r \setminus \{r\}$, which 
gives the result.
\end{proof}

\begin{proposition}
Let $t \in {\mathbb F}({\G})$ and $v \in V\cup W$; then
$\de_{\phi_{t}({\G})}(v)\subseteq \de_{\G}(v)$
and (for $v \neq t$) $\pa_{\phi_{t}({\G})}(v)= \pa_{\G}(v)$.
\end{proposition}
\begin{proof}
The directed edges removed by forming $\phi_t(\G)$ are precisely the edges
into $t$ (and none are added), from which both results follow.
\end{proof}
This result implies, in particular, that the fixing operation preserves the set of parents of all vertices other than the one being fixed, meaning that the set of parents of any random vertex in any CADMG resulting from a valid fixing sequence 
is equal to the set of parents in the original ADMG.

\subsection{Fixing and factorization} \label{ssec:fixfactor}

\begin{proa}{\ref{prop:fix-dist-marg}}
Take a CADMG $\G(V,W)$ with kernel
$q_V \in  {\cal P}^{\rm c}(\G)$ 
with the associated
district factorization:
\begin{equation}
q_V(x_V \cmid x_W) =
\!
\prod_{D \in {\cal D}(\G)}
\!
q_D(x_D \cmid x_{\pa(D) \setminus D}), \tag{\ref{eq:district-factorize}}
\end{equation}
where the kernels $q_D(x_D \cmid x_{\pa(D) \setminus D}) $ are defined via the right-hand side of 
{\rm (\ref{eq:district-factor-ident})}.
If $r\in {\mathbb F}({\G})$ and $D^r \in {\cal D}(\G)$ is the district
containing $r$, 
then \begin{equation}\label{eq:effect-on-one-district}
\phi_r (q_V(x_V \cmid x_W); {\G}) \;=\;
q_{D^r} (x_{D^r \setminus \{r\}} \cmid x_{\pa_{\G}(D^r ) \setminus D^r })
\!\!\!
\!\!\!
\!\!\!
\prod_{D \in {\cal D}(\G)\setminus \{D^r \}}
\!\!\!
\!\!\!
\!\!\!
q_D(x_D \cmid x_{\pa_{\G}(D) \setminus D}).
\end{equation}
\end{proa}

\begin{proof}
\begin{align*}
\lefteqn{\phi_r (q_V(x_V \cmid x_W); {\G})}\\[4pt]
  &=\left. \left(
 \prod_{D \in {\cal D}(\G_{V \cup W})}
 \!\!\!
\!\!\!
 q_D(x_D \cmid x_{\pa(D) \setminus D})\right) \right/ 
 q_V(x_r \cmid x_{\mb(r)})\\[8pt]
 &= \frac{q_{D^r}(x_{D^r} \cmid x_{\pa(D^r ) \setminus D^r })}
  { q_V(x_r \cmid x_{\mb(r)})}
  \prod_{D \in {\cal D}(\G)\setminus \{D^r \}}
\!\!\!
\!\!\!
\!\!\!
\!\!
q_D(x_D \cmid x_{\pa(D) \setminus D}).
\end{align*}
Now consider a topological ordering $\prec$ on the vertices in $\G$ under 
which $r$ is the last vertex in $D^r$, so that 
$D^r\setminus \{r\} \subseteq \pre_{\G, \prec}(r)$; since $r \in \mathbb{F}(\G)$,  such an ordering exists. By (\ref{eq:district-factor-ident}), and taking $A=V$, we have that:
\begin{equation}
q_{D^r}(x_{D^r} \cmid x_{\pa({D^r}) \setminus {D^r}}) = \prod_{d \in {D^r}} q_{D^r}(x_d \cmid x_{T_\prec^{(d,A)}}),
\end{equation}
where $T_\prec^{(d,A)} \equiv \mb_\G\!\left( d,\; A \cap \pre_{\G,\prec}(d)\right)
\subseteq {D^r} \cup \pa ({D^r})$ by Proposition \ref{prop:mb}. 
Finally, $q_V(x_r \,|\, x_{\mb(r)}) = q_V(x_r \,|\,  x_{\nd(r)}) = q_{D^r}(x_r \,|\,  x_{T_\prec^{(r,A)}})$, by the local Markov property and (\ref{eq:district-factor-ident}). Hence these terms cancel as required. 
\end{proof}

\subsection{Invariance to the order of fixing in an ADMG}

\begin{lema}{\ref{lem:permute}}
Let $\G(V,W)$ be a CADMG with $r,s \in {\mathbb F}(\G)$ and 
let $q_V$ be a kernel Markov w.r.t. $\G$. 
Then
\[
\phi_{\langle r, s \rangle}(\G) = \phi_{\langle s, r \rangle}(\G)\quad \hbox{ and }\quad
\phi_{\langle r, s \rangle}(q_V; \G) = \phi_{\langle s, r \rangle}(q_V; \G).
\]
\end{lema}

\begin{proof}
That the resulting graphs are the same is immediate since 
$\phi_r$ removes edges into $r$, while $\phi_s$ removes edges into $s$.

%
%
%
%
%
%
%
%
%
To show that the resulting kernels are the same we will show that if 
$r,s \in {\mathbb F}(\G)$ then the product 
of the two divisors arising in (\ref{eq:fix-kernel})  in $\phi_{r}(q_V(x_V \cmid x_W);\G)$ and 
$\phi_{s}(\phi_r(q_V(x_V \cmid x_W);\G); \; \phi_r(\G))$, are the same as
the product of the divisors  in $\phi_{s}(q_V(x_V \cmid x_W);\G)$ and 
$\phi_{r}(\phi_s(q_V(x_V \cmid x_W);\G);  \;\phi_s(\G))$.

Let $D^r \in {\mathcal D}(\G)$ be the district containing $r$ in $\G$.
The divisor when first fixing $r$ is given by:
\begin{equation}\label{eq:first-divisor}
q_V(x_r \cmid x_{\mb_{\G}(r)}) = 
q_{D^r} (x_{r} \cmid x_{\mb_\G(r)}),
\end{equation}
where $q_{D^r}$ is given by (\ref{eq:district-factor-ident}) and
(\ref{eq:markovb}).

Further, by (\ref{eq:effect-on-one-district}), the resulting kernel is given by:
\begin{equation}\label{eq:phi-r-kernel}
\phi_r (q_V(x_V \cmid x_W); {\G})
=q_{D^r} (x_{D^r \setminus \{r\}} \cmid x_{\pa_\G(D^r ) \setminus D^r})
\!\!\!
\!\!\!
\prod_{D \in {\cal D}(\G)\setminus \{D^r \}}
\!\!\!
\!\!\!
q_D(x_D \cmid x_{\pa_\G(D) \setminus D}).
\end{equation}
Here, and in the remainder of the proof,
we use $q_{D}$, with $D \in {\cal D}(\G)$, to refer to terms in the decomposition (\ref{eq:district-factorize}) associated
with $\G$.

Set $\tilde{\G} = \phi_r(\G)$.
If $r,s \in {\mathbb F}(\G)$ then either (i) 
 $\dis_\G(r) = \dis_\G(s)$, but $r \notin \de_\G(s)$ and
$s \notin \de_\G(r)$,
or (ii) $\dis_\G(r) \neq \dis_\G(s)$.
We now consider each case in turn:
\begin{itemize}
\item[(i)]  
In this case, $s\in D^r$ since $r$ and $s$ are in the same district in $\G$.
By definition, the divisor when fixing $s$, having already fixed $r$, is given by:
\[
(\phi_r(q_V))(x_s \cmid x_{\mb_{\tilde{\G}}(s)}).
\]
Now,
$\{s\} \cup \mb_{\tilde{\G}}(s)$
is a subset of $\{s\} \cup \mb_{\G}(s) = D^r \cup \pa_\G(D^r)$, because fixing removes edges.  

It follows from applying m-separation to $\G^{|W}$ by Definition \ref{dfn:global-definition}
(or alternatively the factorization given in Definition \ref{dfn:factorization-definition})
 that the independence
 $
 X_s \ci X_{(\mb_{\G}(s) \setminus \{r\}) \setminus \mb_{\tilde\G}(s)} \mid X_{\mb_{\tilde\G}(s)}
 $
  holds in $q_V$. Since all the
sets in this independence are subsets of $\mb_\G(r)$, it follows from Proposition \ref{prop:ind-preserve-1}
that this independence also holds in 
$\phi_r(q_{V})$.  Hence
\begin{align*}
\phi_r(q_V)(x_s \cmid x_{\mb_{\tilde\G}(s)}) = \phi_r(q_V)(x_s \cmid x_{\mb_{\G}(s) \setminus \{r\}}) 
= q_V(x_s \cmid x_{\mb_{\G}(s) \setminus \{r\}}),
\end{align*}
where we have used (\ref{eq:invariance-marg}) in the second equality.
The product of the two divisors is then
\begin{align*}
q_V(x_r \cmid x_{\mb_\G(r)}) \cdot \phi_r(q_V)(x_s \cmid x_{\mb_{\tilde\G}(s)}) 
&= q_V(x_r \cmid x_{\mb_\G(r)}) \cdot q_V(x_s \cmid x_{\mb_{\G}(s) \setminus \{r\}})\\
&= q_V(x_r, x_s \cmid x_{\mb_\G(r) \setminus \{s\}}).
\end{align*}

Note that since, by hypothesis, $r$ and $s$ are in the same district in $\G$, this last expression is symmetric in $r$ and $s$.

\item[(ii)] Let $D^s$ be the district in ${\cal D}(\G)$ that contains $s$. Since, by assumption,
$D^s \neq D^r$, by Proposition 
\ref{prop:districts-after-fixing} it follows that $s\in D^s \in {\cal D}(\tilde{\G})$. It then follows from
(\ref{eq:phi-r-kernel}) that 
\begin{equation}
(\phi_r(q_V))(x_s \cmid x_{\mb_{\tilde{\G}}(s)})
= q_{D^s} (x_{s} \cmid x_{\mb_{\G}(s)}).
\end{equation}
Thus the product of divisors is given by 
\[
q_{D^r} (x_{r} \cmid x_{\mb_\G(r)}) \cdot q_{D^s} (x_{s} \cmid x_{\mb_\G(s)}).
\]
\end{itemize}
Hence in both cases, the product of the divisors is symmetric in $r$ and $s$, and 
a symmetric argument shows that the same divisor is obtained when fixing $s$ first, and $r$ second.
\end{proof}

\begin{thma}{\ref{invariance}}
Let ${p}(x_V)$ be a distribution that is nested Markov with respect to an ADMG $\G$ {\rm (}in either the sense of Definitions \ref{dfn:global-definition} or \ref{dfn:factorization-definition}{\rm )}.
Let ${\bf u},{\bf w}$ be different valid fixing sequences for the same set $W\subset V$.  Then
$
\phi_{{\bf u}}(\G) = \phi_{{\bf w}}(\G)
$ and
\begin{equation}\label{eq:invar-thm}
\phi_{{\bf u}}({p}(x_V);\G) = \phi_{{\bf w}}({p}(x_V);\G).
\end{equation}
\end{thma}

\begin{proof}
We perform a direct proof. 
Let $u_i, w_i$ denote the $i$th vertices in sequences ${\bf u}, {\bf w}$ respectively. Further, let $k$ be the smallest $i$ such that $u_i \neq  w_i$, and let $v \equiv u_k$, so that
${\bf u}$ and ${\bf w}$ agree in the first $k-1$ fixing operations.
By definition of $k$, 
\[
\phi_{\langle u_1,\ldots , u_{k-1}\rangle}(\G)
=
\phi_{\langle w_1,\ldots , w_{k-1}\rangle}(\G).
\]
Since ${\bf u},{\bf w}$ both contain the same vertices, there exists $l > k$ such that $w_l = v$. 
Since, by hypothesis, ${\bf u},{\bf w}$ are both valid fixing sequences, it follows that
$v \in \mathbb{F}(\phi_{\langle w_1,\ldots , w_{k-1}\rangle}(\G))$.
It further follows by Lemma \ref{lem:no-backtracking} that 
\[
v \in \mathbb{F}(\phi_{\langle w_1,\ldots , w_{i-1}\rangle}(\G)), \quad
\hbox{ for }k-1 \leq i \leq l.
\] 
Then by Lemma \ref{lem:permute}, we have that:
\begin{eqnarray*}
\phi_{\langle w_1,\ldots , w_{l-1}, v=w_{l}\rangle}(\G) &=& 
\phi_{\langle w_1,\ldots ,  w_{l-2}, v, w_{l-1}\rangle}(\G)\\
\phi_{\langle w_1,\ldots , w_{l-1}, v=w_{l}\rangle}(p(x_V) ; \G) &=& 
\phi_{\langle w_1,\ldots ,  w_{l-2},v, w_{l-1}\rangle}(p(x_V) ;\G).
\end{eqnarray*}
By further applications of Lemma \ref{lem:permute}, we may show that 
both the graphs and kernels resulting from the fixing sequences 
\[
\langle w_1,\ldots , w_{l-1}, v\!=\!w_{l}\rangle\;\; \hbox{ and }\;\;
\langle w_1,\ldots , w_{k-1}, v, w_{k}, \ldots , w_{l-1}\rangle.
\]
are the same.  It further follows that the whole sequence ${\bf w}$ leads to the same graph and kernel as
$\langle w_1,\ldots , w_{k-1}, v, w_{k},\ldots, w_{l-1}, w_{l+1}, \ldots, w_{|W|}\rangle$.
This latter sequence now agrees with ${\bf u}$ in the first $k$ fixing operations. By repeating the argument
we may thus show that ${\bf u}$ and ${\bf w}$ lead to the same graph and kernel.
\end{proof}

As discussed in the main body of the paper, the above result implies that, under the model, if there are multiple fixing sequences for $V \setminus R$ valid in $\G(V)$,
we can define $\phi_{V \setminus R}(p(V); \G(V))$ without loss of generality to be the result of fixing elements in $V \setminus R$ in any valid sequence.
Furthermore if $(S_1,S_2)$ is a partition of $V \setminus R$, and there exists a fixing sequence for $V \setminus R$ valid in $\G(V)$ that fixes all elements
in $S_1$ before any element in $S_2$, we define
$\phi_{\langle S_1, S_2 \rangle}(\G(V))$ to be equal to $\phi_{S_2}(\phi_{S_1}(\G(V)))$, and
$\phi_{\langle S_1, S_2 \rangle}(p(V); \G(V))$ to be equal to $\phi_{S_2}(\phi_{S_1}(p(V); \G(V)), \phi_{S_1}(\G(V)))$.

\section{Equivalence Of CADMG Markov Properties and Factorization} \label{ssec:cadmg_mps}

\begin{thma}{\ref{cadmg-equiv}}
${\cal P}^{\rm c}_{\rm f}(\G)
\!
=
\!
{\cal P}^{\rm c}_{\rm l}(\G,\prec)
\!
=
\!
{\cal P}^{\rm c}_{\rm m}(\G)
\!
=
\!
{\cal P}^{\rm c}_{\rm a}(\G)$.
\end{thma}

To prove this result we will need a number of intermediate results.
The argument for the last two equalities follows that given by \citet{richardson03sjs}.

\begin{lemma}\label{ncsareadjacent} 
For a CADMG $\G(V,W)$, suppose
${\mu}$ is a path which 
m-connects $t$ and $y$ given $Z$ in $\G^{|W}$.  Then the sequence of
non-colliders on ${\mu}$ form a path connecting $t$ and $y$ in $(\G_{\an(\{t,y\} \cup Z)})^{a}$.
\end{lemma}

This proof follows that of Lemma 3 in \cite{richardson03sjs}.
\begin{proof}  
Every vertex on an
m-connecting path is either an ancestor of a collider, and hence
of some element of $Z$, or an ancestor of an endpoint. Thus all the 
vertices on ${{\mu}}$
are in
$\G_{\an (\{t,y\} \cup Z )}$.
Suppose that $v_i$ and $v_{i+1}$ $(1\!\leq\! i\!\leq\! k\!-\!1)$ are the
successive non-colliders on ${\mu}$. The subpath
${\mu}(v_i,v_{i+1})$ consists entirely of colliders, hence
$v_i$ and
$v_{i+1}$ are adjacent in $(\G_{\an (\{t,y\} \cup
Z )})^{a}$. Similarly $v_1$ and $v_k$ are adjacent to
$t$ and $y$, respectively, in $(\G_{\an (\{t,y\} \cup
Z )})^{a}$. 
\end{proof}

\begin{theorem}\label{thm:aug-msep}
If $\mathscr G$ is a CADMG then
\[
{\cal P}^{\rm c}_{\rm m}(\G) ={\cal P}^{\rm c}_{\rm a}(\G).
\]
\end{theorem}
This proof follows that of Theorem 1 in \cite{richardson03sjs}.

\begin{proof}
(${\cal P}^{\rm c}_{\rm m}(\G) \subseteq{\cal P}^{\rm c}_{\rm a}(\G)$)\\
We proceed by showing that if ${T}$ and ${Y}$ are m-connected given 
${Z}$ in $\G^{|W}$ then $T$ and $Y$ are not separated by $Z$ in $(\G_{\an(T\cup Y\cup Z)})^{a}$.
If $T$ and $Y$ are m-connected given $Z$ in $\G^{|W}$ then 
there are vertices $t\in T$,
$y\in Y$ such that there is a path ${\mu}$ which m-connects
$t$ and
$y$ given $Z$ in $\G^{|W}$.  By Lemma \ref{ncsareadjacent}
the non-colliders on
${\mu}$ form a  path ${\mu}^{*}$ connecting $t$ and $y$ in
$(\G_{\an (T\cup Y \cup  Z)})^{a}$.  Since ${\mu}$ is m-connecting,
no non-collider is in $Z$  hence no vertex on ${\mu}^{*}$ is in $Z$.  Thus
$T$ and $Y$ are not separated by $Z$ in $(\G_{\an (T\cup Y \cup  Z)})^{a}$.\\[-12pt]

\noindent(${\cal P}^{\rm c}_{\rm a}(\G) \subseteq{\cal P}^{\rm c}_{\rm m}(\G)$)\\
We show that if $T$ and $Y$ are not separated by $Z$ in 
$(\G_{\an (T\cup Y \cup  Z)})^{a}$ then
$T$ and $Y$ are m-connected given $Z$ in $\G^{|W}$.
If $T$ and $Y$ are not separated by $Z$ in $(\G_{\an(T\cup Y\cup 
Z)})^{a}$ 
then there are vertices $t\in T$,
$y\in Y$ such that there is a minimal path ${\pi}$
between
$t$ and $y$ in $(\G_{\an(T\cup Y\cup Z)})^{a}$ on which no vertex 
 is in $Z$. Our strategy is to
replace each augmented edge on ${\pi}$ with a corresponding
collider path in $\G^{|W}$ and replace the other edges on
${\pi}$ with the corresponding edges in $\mathscr G$
 (choosing arbitrarily if there is more than one). It follows from
Lemma 2 in \cite{richardson03sjs} 
 that the resulting sequence of edges forms a path
from $t$ to $y$ in $\G^{|W}$, which we denote ${\nu}$. Further, any
non-collider on
${\nu}$ is a vertex on
${\pi}$ and hence not in $Z$. Finally, since all vertices in ${\nu}$
are in
$\G_{\an(T\cup Y\cup Z)}$ it follows that every collider is in
$\an_\G(T\cup Y\cup Z)$. Thus by
Lemma 1 in \cite{richardson03sjs} 
there exist vertices $t^{*}\in T$ and $y^{*}\in Y$ which are 
m-connected given $Z$ in $\G^{|W}$,  hence $T$ and $Y$ are 
m-connected given $Z$.
\end{proof}

\begin{lemma} \label{lem:add-w}
If $\G(V,W)$ is a CADMG and $T$ and $Y$ are m-separated given $Z$ in 
$\G^{|W}$ then: 
\begin{itemize}
\item[{\rm (i)}] Either $T \cap W = \emptyset$ or $Y \cap W = \emptyset$ (or both);
\item[{\rm (ii)}]  Either $T$ is m-separated from $Y \cup (W\setminus Z)$ given $Z$ in $\G^{|W}$,
or  $Y$ is m-separated from $T \cup (W\setminus Z)$ given $Z$ in $\G^{|W}$ (or both).
\end{itemize}
\end{lemma}

\begin{proof} (i) Suppose that there are vertices $w, w^* \in W$ such that $w \in T$, $w^*\in Y$.
Since $w \leftrightarrow w^*$ in $\G^{|W}$, it follows that $T$ and $Y$ are m-connected given $Z$ 
in $\G^{|W}$, which is a contradiction.\par
(ii) On similar lines, suppose that there exist vertices $w, w^* \in W\setminus Z$ such that in $\G^{|W}$ some vertex $t\in T$ is m-connected to $w$ given $Z$ by a path ${\nu}_{tw}$, but, in addition, some $y \in Y$ is m-connected to $w^*$ given $Z$ by a path ${\nu}_{w^*y}$. In this case a path m-connecting $t$ and $y$ given $Z$ may be constructed by concatenating 
${\nu}_{tw}$, the edge $w\leftrightarrow w^*$ and ${\nu}_{w^*y}$.
\end{proof}

\begin{lemma}
If $\G(V,W)$ is a CADMG, $t\in V$ is a vertex in an ancestral set
$A \subseteq V\cup W$, and $\ch_{\G}(t) \cap A = \emptyset$,
then the induced subgraph of the
augmented graph $(\G_A)^a$ on the set $\{ t \} \cup \mb_\G(t, A)$ is always
a clique.  In addition, if $y \uned t$ in $(\G_A)^a$ then
$y \in \mb_\G(t, A)$.
\label{lem:clique}
\end{lemma}

\begin{proof}  
(Cf.~proof of Theorem 4 in \cite{richardson03sjs})
 By construction, $y\uned t$ in $(\G_{A})^a$ 
if and only if $t$ is collider-connected to $y$ in $(\G_{A})^{|W\cap A}$.
Since $t \in V$ and $\ch_\G(t)\cap A=\emptyset$, the vertex adjacent to $t$ on 
any collider path is in $\sib_{\G_{A}}(t)\cup \pa_{\G_{A}}(t)$. 
Consequently, a collider path to $t$ in $(\G_{A})^{|W\cap A}$ takes one of three
forms:
\begin{itemize}
\item[(a)]\(y\rightarrow t\quad\Leftrightarrow\quad 
y\in\pa_{\G_{A}}(t)\); 

\item[(b)] \(y\leftrightarrow u\leftrightarrow \cdots 
\leftrightarrow t\quad\Leftrightarrow\quad y\in\dis_{{\G}_{A}}(t)\setminus\{t\}\);

\item[(c)] \(y\rightarrow u\leftrightarrow \cdots 
\leftrightarrow t\quad\Leftrightarrow\quad y\in\pa_{\G_{A}}
(\dis_{\G_{A}}(t)\setminus\{t\})\).

\end{itemize}
It then follows from the definition of a Markov blanket and $\G^{|W}$ that if $t \in V$ then 
$y$ is collider-connected to $t$ in $(\G_{A})^{|W\cap A}$ if and only if
$y\in \mb_{\G}(t,A)$. 

Suppose that
$u,v\in \mb_{\G}(t,A)$, with $u\!\neq\! v$. Then there are collider paths
${\nu}_{ut},
{\nu}_{vt}$ in $(\G_{A})^{|W\cap A}$
that
do not contain any bidirected edge between vertices in $W$.
Traversing the path ${\nu}_{ut}$ from $u$
 to $t$, let $s$ be the first vertex which is also on 
 ${\nu}_{vt}$; such a vertex is guaranteed to exist since $t$ is 
 common to both paths. Concatenating the subpaths ${\nu}_{ut}(u,s)$ and 
 ${\nu}_{vt}(v,s)$ forms a collider path connecting $u$ and $s$ 
 in $(\G_{A})^{|W\cap A}$.
 (If $s\!=\!t$, this follows from ${\ch_{\G}(t)\cap A}=\emptyset$.) Hence
  $u\uned v$ in  $(\G_{A})^{a}$, proving 
the first claim.
\end{proof}

\begin{theorem}
If $\G(V,W)$ is a CADMG and $\prec$ is a topological ordering then
\[
{\cal P}^{\rm c}_{\rm l}(\G, \prec) = {\cal P}^{\rm c}_{\rm f}(\G).
\]
\end{theorem}

\begin{proof}
First we show that ${\cal P}^{\rm c}_{\rm l}(\G, \prec)  \subseteq  {\cal P}^{\rm c}_{\rm f}(\G)$.
Fix an ancestral set $A \subseteq V \cup W$.  We have:
\begin{eqnarray*}
q_V(x_{A \cap V} \cmid x_W) &=&
\!\!\!\!\!
\!\!\!\!\!
	\prod_{d \in D \in {\cal D}(\G_{A})}
\!\!\!\!\!
	q_V(x_d \cmid x_{A \cap \pre_{\G,\prec}(d)
	},x_W)\\
&=&
\!\!\!\!\!
\!\!\!\!\!
\prod_{d \in D \in {\cal D}(\G_{A})}
\!\!\!\!\!
	q_V(x_d \cmid x_{\mb\left({d},A \cap \pre_{\G,\prec}(d)\right)}).
\end{eqnarray*}
The first line is by the chain rule of probabilities and the fact that
${\cal D}(\G_{A}) = {\cal D}(\G_{A\cup W}) $ is a partition of random nodes in
$\G_{A}$, the second by the ordered local Markov property.
This is sufficient for the conclusion.

Now we show that
${\cal P}^{\rm c}_{\rm f}(\G)\subseteq {\cal P}^{\rm c}_{\rm l}(\G, \prec)$.
Let $V=\{v_{1},\ldots, v_{n}\}$ be a numbering of the vertices in $V$ such 
that $v_{i}\prec v_{j}$ if and only if $i<j$, so $\pre_{\G,\prec}(v_{k})\cap V=\{v_{1},\ldots ,v_{k-1}\}$; the proof is by induction on this sequence of vertices.

For $k = 1$, if $A$ is an ancestral set in which $v_1$ is the maximal vertex in $A \cap V$ then 
$A\cup W = \{v_1\} \cup W$. The assumed Markov factorization for the graph $\G(\{v_1\},W)$ then implies
$q_V(x_{v_1} \mid x_W) = q_V(x_{v_1} \mid x_{\pa(v_1)}) = q_V(x_{v_1} \mid x_{\mb(v_1,A)})$, as required by the ordered local Markov property.

Now assume the inductive hypothesis
holds for all $j < k$, so that for all ancestral sets $A^*$ in which $v_j$ is maximal, $q_V$ obeys the ordered local Markov property for $\G(V\cap A^*, W)$ at $v_j$.
Now fix an ancestral set
$A$ with $\max_{\prec}(A \cap V) = v_k$.  
By the chain rule of probabilities and the fact that
${\cal D}(\G_A)$ is a partition of the random nodes in $\G_A$, we have:
\[
q_V(x_{A \cap V} \cmid x_W) =
\!\!\!\!\!
	\prod_{d \in D \in {\cal D}(\G_{A})}
\!\!\!\!\!
	q_V(x_d \cmid x_{A \cap \pre_{\G,\prec}(d)},x_W).
\]
However, by the assumed Markov factorization we also have:
\[
q_V(x_{A \cap V} \cmid x_W) 
=
\!\!\!\!\!
	\prod_{d \in D \in {\cal D}(\G_{A})}
\!\!\!\!\!
	q_V(x_d \cmid x_{\mb\left( d,\; A \cap \pre_{\G,\prec}(d)\right)}).
\]
%
%
%
Thus:
\begin{align*}
\prod_{D \in {\cal D}(\G_{A})} \prod_{d \in D}
	q_V(x_d \cmid x_{A \cap \pre_{\G,\prec}(d)},x_W) =
\!\!\!\!\!
\prod_{D \in {\cal D}(\G_{A})} \prod_{d \in D}
	q_V(x_d \cmid x_{\mb\left( d,\; A \cap \pre_{\G,\prec}(d)\right)}).
\end{align*}
%
As the inductive hypothesis holds for all vertices $d \prec v_k$, 
we can cancel all terms from the above equality, except
for those involving ${v_k}$:
$q_V(x_{v_k} \cmid x_{A \cap \pre_{\G,\prec}(v_k)},x_W)$ 
 and 
$q_V(x_{v_k} \cmid x_{\mb\left( v_k,\; A \cap \pre_{\G,\prec}(v_k)\right)})$.
Since $v_k$ is the maximal vertex in $A\cap V$, it follows that  
$\mb\left( v_k, A \cap \pre_{\G,\prec}(v_k)\right) = \mb(v_k,A)$,
which establishes the ordered local property for $A$ in $\G_{A}$ at $v_k$.
\end{proof}

\begin{theorem}
If $\G(V,W)$ is a CADMG and $\prec$ is a topological ordering then
\[
{\cal P}^{\rm c}_{\rm a}(\G) = {\cal P}^{\rm c}_{\rm l}(\G, \prec).
\]
\end{theorem}

\begin{proof}
We first show that
${\cal P}^{\rm c}_{\rm l}(\G,\prec) \subseteq{\cal P}^{\rm c}_{\rm a}(\G)$.
The proof is similar to 
Proposition 5 in \cite{lauritzen1990independence}, and Theorem 2 in
\cite{richardson03sjs}.

Number the vertices in $V=\{v_{1},\ldots, v_{n}\}$ such 
that $v_{i}\prec v_{j}$ if and only if $i<j$, so $\pre_{\G,\prec}(v_{k+1})=\{v_{1},\ldots ,v_{k}\}$.
Let $q_V\in {\cal P}^{\rm c}_{\rm l}(\G,\prec)$. 
The proof is by induction on the sequence of ordered vertices in $V$. The inductive 
hypothesis is that if $T\dot\cup Y\dot\cup Z\subseteq \{v_{1},\ldots 
,v_{k}\}\cup W$ and $T$ is separated from $Y$ by $Z$ in $(\G_{\an 
(T\cup Y\cup Z)})^{a}$ then $X_T\ind X_Y \mid X_Z$ in $q_V$.

By Theorem \ref{thm:aug-msep} and Lemma \ref{lem:add-w} either $T\cap W = \emptyset$ and $T$ is m-separated from $W\setminus Z$ given $Z$ or $Y\cap W = \emptyset$ and $Y$ is m-separated from $W\setminus Z$ given $Z$ in $\G^{|W}$.
Without loss of generality we suppose the former; the other case is symmetric. 
It follows that we may extend $Y$ such that $W \subseteq Y\cup Z$, so that $T$ is m-separated from $Y$ by $Z$ in $\G^{|W}$, and thus the corresponding separation holds in $(\G_{\an 
(T\cup Y\cup Z)})^{a}$ by Theorem \ref{thm:aug-msep}.

(Base case) For $k=1$, it follows that we have
 $T=\{v_1\}$ and $Y=W\setminus Z$.
Let $A\equiv T \cup Y \cup Z=\{v_1\}\cup W$. Note that $A=\an_\G(A)$ and $\mb_\G(v_1,A) = \pa_\G(v_1) \subseteq Z\cap W$.
By the ordered local Markov property for $A$ we have:
\begin{equation}\label{eq:local-proof}
 X_{v_1}\, \ci \,X_{W\setminus (\pa(v_1)\cup \{v_1\})} \cmid X_{\pa(v_1)}\;\; [q_V].
\end{equation}
Since $Y \subseteq W\setminus Z$ and $ \pa_\G(v_1) \subseteq Z$, (\ref{eq:local-proof}) implies $X_{v_1} \ci X_{Y} \mid X_{Z}$  $[q_V]$.

(Inductive case) Suppose that the induction 
hypothesis holds 
for $j<k$.  Let $\Hg \equiv (\G_{\an(T\cup Y\cup Z)})^{a}$.  If $T$ is separated from $Y$ by $Z$ in 
$\Hg$ and $v\in \an_\G(T\cup Y\cup Z)\setminus (T\cup Y\cup Z)$ then 
in $\Hg$
either $v$ is separated from $Y$ by $Z$, or $v$ is separated from $T$ by $Z$ (or both).  
Hence we may always extend $T$ and $Y$, so that $\an_\G(T\cup Y\cup Z) = 
T\cup Y\cup Z$, and thus need only consider this case.

If $v_{k} \notin 
(T\cup Y\cup Z)$ then $T\cup Y\cup Z \subseteq \{ v_{j} \} \cup \pre_{\G,\prec}(v_{j} ) \cup W$ for some $v_{j} \prec v_{k}$  hence the required 
independence follows directly from the induction hypothesis.  Thus we 
suppose that $v_{k}\in (T\cup Y\cup Z) \subseteq \{ v_{k} \} \cup \pre_{\G,\prec}(v_{k}) \cup W$.  As before, let $A\equiv T\cup Y \cup Z$.  Since $A$ is ancestral 
and $v_{k}$ has no children in $A$, the local Markov 
property for $A=A\cup W$ implies that
\begin{equation}\label{eq:local-markov-proof-b}
X_{v_{k}} \, \ind\, X_{(A\cup W)\setminus(\{v_{k}\}\cup \mb(v_{k},A))}\;\, \mid \;\,
X_{\mb(v_{k},A)}\;\;[q_V].
\end{equation}
There are now three cases to consider: (i) $v_{k}\in T$; (ii) $v_{k}\in Y$; (iii) 
$v_{k}\in Z$.
\begin{itemize}
\item[(i)] Note that $(\G_{\an(Y\cup Z\cup (T\setminus\{v_{k}\}))})^{a}$
contains a  subset of the edges in $\Hg$.
Thus (if non-empty) $T\setminus\{ v_{k}\}$ is separated
from 
$Y$ by $Z$ in $(\G_{\an(Y\cup Z\cup (T\setminus\{v_{k}\}))})^{a}$, hence 
$X_{T\setminus\{ v_{k}\}}\ind X_Y\mid
X_Z$ in $q_V$ by the induction hypothesis. It is thus sufficient to prove that
$X_{v_{k}}\ind X_{Y} \mid X_{Z\cup (T\setminus\{ v_{k}\})}$ in $q_V$; this also covers 
the case where $T=\{v_{k}\}$.
Since, by Lemma \ref{lem:clique}, the vertices in $\{ v_{k}\}\cup \mb_\G(v_{k},A)$ form a clique
in $\Hg$ it follows that $\{ v_{k}\}\cup \mb_\G(v_{k},A) \subseteq
T\cup Z$, so $Y\subseteq A\setminus(\{v_{k}\}\cup \mb_\G(v_{k},A))$.
Thus by the local Markov property for $A=A\cup W$,
\[
X_{v_{k}}\,\ind\, X_{A\setminus(\{v_{k}\}\cup 
\mb(v_{k},A))} \,|\,
 X_{\mb(v_{k},A)}\,[q_V] \;\Rightarrow \; X_{v_{k}} \ind X_Y 
\,|\, X_{Z\cup (T\setminus \{v_{k} \})}\,[q_V].
\]

\item[(ii)] Similar to case (i).

\item[(iii)] By hypothesis, $A=\an_\G(A)\subseteq \{v_k\}\cup \pre_{\G,
\prec}(v_{k})\cup W$, so $v_{k}\notin \an_\G(T\cup Y\cup (Z\setminus\{v_{k}\}))$.  Thus
the  vertex $v_{k}$ is not in $(\G_{\an(T\cup Y\cup (Z\setminus\{v_{k}\}))})^{a}$, and this graph contains a
subset  of the edges in $\Hg$. Hence $T$ is separated from $Y$ given 
$Z\setminus\{ v_{k}\}$ in $(\G_{\an(T\cup Y\cup (Z\setminus\{v_{k}\}))})^{a}$.  The induction then implies 
$X_T\ind X_Y\mid X_{Z\setminus\{ v_{k}\}}$.  It is then sufficient to prove
that either $X_{v_{k}}
\ind X_{Y} \mid X_{T\cup (Z\setminus\{ v_{k}\})}$ or $X_{v_{k}}\ind X_T\mid X_{Y\cup 
(Z\setminus\{ v_{k}\})}$ in $q_V$.  Since by Lemma \ref{lem:clique}, $\{ v_{k}\}\cup
\mb_\G(v_{k},A)$ forms a clique in 
$(\G_{A})^a$ it follows that either
\[
\{ v_{k}\}\cup \mb_\G(v_{k},A)\;\subseteq\; T\cup Z \quad\hbox{or}\quad \{ v_{k}\}\cup 
\mb_\G(v_{k},A)\;\subseteq\; Y\cup Z. 
\]
Suppose the former. In this case, by the ordered local Markov property, 
{
\[\begin{array}{r}
\multicolumn{1}{l}{X_{v_{k}}\;\ind\; X_{A\setminus(\{ v_{k}\}\cup \mb(v_{k},A))}\; \mid\; 
X_{\mb(v_{k},A)}\;[q_V] \quad\quad\quad\quad}\\[8pt]
 \Rightarrow\;\;  X_{v_{k}}\;\ind\; X_Y\;\mid\; X_{T\cup (Z\setminus \{ 
 v_{k}\})}\;[q_V].
\end{array}
\]
}
If the latter then $X_{v_{k}}\ind X_{T} \mid X_{Y\cup (Z\setminus \{ 
v_{k}\})}\;[q_V]$.
\end{itemize}

Now we show that ${\cal P}^{\rm c}_{\rm a}(\G) \subseteq{\cal P}^{\rm c}_{ 
\rm l}(\G,\prec)$.
Let $A$ be an ancestral set with $t = \max_{\prec}(A) \in V$. Note that $A\cup W$ is ancestral.
By Lemma \ref{lem:clique}, every vertex adjacent to $t$ in $(\G_{A\cup W})^{a}$ is in $\mb_\G(t,A\cup W)$. Thus $t$ is separated from 
$(A\cup W)\setminus (\mb_\G(t,A)\cup\{t\})$ by $\mb_\G(t,A)$ in $(\G_{A\cup W})^{a}$. Hence if $q_V\in {\cal P}^{\rm c}_{\rm a}(\G)$ then $q_V\in {\cal P}^{\rm c}_{ 
{\rm l}}(\G,\prec)$.
\end{proof}

%

\section{Results On Nested Markov Models}

We begin with a proof of the equivalence of the global nested Markov
property and district factorization.

\begin{proa}{\ref{prop:nested_global_factorization}}
With respect to an ADMG $\G(V)$, a 
distribution $p(x_V)$ is globally nested Markov 
if and only if it nested Markov factorizes.
\end{proa}

\begin{proof}
This follows immediately by equivalence of the CADMG global property and CADMG factorization.
\end{proof}

\begin{thma}{\ref{thm:global-reachable-factorization}}
If $p(x_V)$ nested Markov factorizes with respect to $\G$ then
for every reachable $R$ in $\G$,
\[
\phi_{V \setminus R}(p(x_V); \G) = \prod_{D \in {\cal D}(\phi_{V \setminus R}(\G))} \phi_{V \setminus D}(p(x_V); \G).
\]
\end{thma}

\begin{proof}
By Theorem \ref{cadmg-equiv} we know that, for each reachable set $R$ (and 
valid fixing sequence ${\bf w}$ for $R$), $\phi_{\bf w}(p)$ will be a member of  
${\cal P}^{\rm c}_{\rm m}(\G[R])$ if and only if it is a member of 
${\cal P}^{\rm c}_{\rm f}(\G[R])$. 
If $p$ nested Markov factorizes w.r.t.~$\G$ then, by Definition \ref{dfn:factorization-definition} and Corollary \ref{cor:nested-factorize}, 
%
for every $R$ reachable in $\G$, $q_R(x_R \cmid x_{V \setminus R}) \equiv \phi_{V \setminus R}(p(x_V);\G) \in {\cal P}^{\rm c}_{\rm f}(\G[R])$, so
\[
\phi_{V \setminus R}(p(x_V);\G) = \!\!\!\! \prod_{D \in {\cal D}(\phi_{V \setminus R}(\G))} \!\!\!\! f^{\bf w}_D(x_D \cmid x_{\pa(D) \setminus D}).
\]
By Lemma \ref{lem:markovfact}, $f^{\bf w}_D = q_D$, where $q_D = \phi_{R \setminus D}(q_R; \G[R]) =
\phi_{R \setminus D}(\phi_{V \setminus R}(p; \G); \phi_{V \setminus R}(\G))$.
However, by Theorem \ref{thm:invariant} we know that this is equivalent to $\phi_{V \setminus D}(p; \G)$.
This gives the result.
\end{proof}

%
%
%

\subsection{Ordered Local Nested Markov Property}

The following example shows that we must take care 
in our definition of the ordered local nested Markov property if we do not yet know that
the distribution is Markov with respect to the graph. 

\begin{example}
It is natural to construct a nested local property in terms of kernels; although under
the model all valid fixing sequences for a given set lead to the same kernel, in terms of \emph{defining} the model
the choice of fixing sequences becomes important.   To see this in a simple example
consider the three node 
graph shown in Figure \ref{fig:simple}(i) and the independence $X_2 \ci X_3$.
That this constraint holds in the kernel that results from fixing $1$ and then $2$ is equivalent to saying that
this independence holds in $p$.  However, if we first fix $2$ this corresponds to marginalization,
and the independence will hold trivially in any kernels derived from it.
The `moral' of this example is that we must be careful which fixing sequence we choose
to describe the kernels used to define the local nested Markov property.

\begin{figure}
\begin{center}
\begin{tikzpicture}[>=stealth, node distance=1.2cm,
pre/.style={->,>=stealth,ultra thick,line width = 1.4pt}]
\tikzstyle{format} = [circle, draw, very thick, minimum size=5mm, inner sep=.5mm]
    \tikzstyle{square} = [rectangle, draw, very thick, minimum size=5mm, inner sep=.5mm]
   \begin{scope}
%
    \path[<->]	node[format] (x2) {$2$} 
    		node[format, right of=x2] (x1) {$1$} 
                  (x1) edge[pre, <->, red] (x2)
		node[format, right of=x1] (x3) {$3$} 
                  (x1) edge[pre, <->, red] (x3);
   \node[below of=x1, xshift=0cm, yshift = .2cm] {(i)};
  \end{scope}
   \begin{scope}[xshift=7cm]
        \path[<->]	node[format] (x2) {$2$} 
    		node[square, right of=x2] (x1) {$1$} 
		node[format, right of=x1] (x3) {$3$}; 
   \node[below of=x1, xshift=0cm, yshift = .2cm] {(ii)};
  \end{scope}
   \begin{scope}[xshift=3.5cm, yshift=-2cm]
        \path[<->]	node[square] (x2) {$2$} 
    		node[format, right of=x2] (x1) {$1$} 
		node[format, right of=x1] (x3) {$3$} 
                  (x1) edge[pre, <->, red] (x3);
   \node[below of=x1, xshift=0cm, yshift = .2cm] {(iii)};
  \end{scope}
  \end{tikzpicture}
\end{center}
\caption{(i) An ADMG, and (ii, iii) the ADMG from (i) after fixing 1 and 2 respectively.}
\label{fig:simple}
\end{figure}

\end{example}

%
%
%
%
%
%
%
%
%
%
%
%
%
%
%
%
%
%
%
%
%
%
%
%

%

\subsection{Definition of Local Property}

\begin{definition}
Suppose that we have an ADMG with a topological ordering $\prec$.  We
define the \emph{initial segments} to be the $|V|$ sets 
$\Pre_v \equiv \pre_{\G,\prec}(v) \cup \{v\} = \{w : w \prec v\} \cup \{v\}$ consisting of
the first $k$ vertices in the ordering, for each $k=1,\ldots,|V|$.  
The associated \emph{initial segment district} is the district of 
the maximal vertex in each initial segment. 
\end{definition}

Let $\G$ be an ADMG with a topological order $\prec$.  
Let $(C, D)$ be two intrinsic sets, such that 
\begin{enumerate}
\item $\max_\prec(C) = v = \max_\prec(D)$;
\item $C = \dis_{\G[D \setminus \{w\}]}(v)$ for some $w \in D$, such that $w \in \mathbb{F}(\G[D])$.
\end{enumerate}

We define ${\intgr}(\G)$ to be the DAG whose vertices are the intrinsic 
sets of $\G$, and edges $D \rightarrow C$ if and only if the pair $(C,D)$ 
satisfies the two conditions above.  We call $\intgr(\G)$ the \emph{intrinsic 
power DAG} for $\G$.
For any transition $D \rightarrow C$ in ${\intgr}(\G)$ obtained by
fixing (say) $w \in D$; 
then $C = \dis_{\G[D \setminus \{w\}]}(v)$, where $v = \max_\prec(D)$.

Note that the power DAG consists of $|V|$ separate connected components, each 
with an initial segment district as its root node. 

%

%
%

%
%
%
%
%
%
%
%

We are now ready to define the ordered local nested Markov property, which we do 
by using the transitions represented in the power DAG.   
To simplify subscripts we use the notation $\fm_\G(C) := C \cup \pa_\G(C)$.

\begin{definition}
Let $\G$ be an ADMG with an arbitrary topological order $\prec$. 
A distribution $p$ obeys the \emph{ordered local nested Markov property}
with respect to $(\G,\prec)$, if:
\begin{enumerate}[label={\rm (\roman*)}]
\item for each $v \in V$ we have that the independences
\begin{align}
X_v \ci X_{\Pre_v \!\setminus\! \fm(C)} \mid X_{\fm(C) \setminus \{v\}} \;\; [\phi_{\Pre_v \!\!\setminus C}(p(x_{\Pre_v}); \G[\Pre_v])]\label{eqn:localknew0}
\end{align}
hold, where $C = \dis_{\Pre_v}(v)$ and  $\phi_{\Pre_v\!\! \setminus C}$
is implemented via some valid fixing sequence;

\item for every transition $D \rightarrow C$ in ${\intgr}(\G)$ 
(obtained by fixing $w \in D$) 
with $v$ being maximal under $\prec$ in $C$, we have:
\begin{align}
X_v \ci X_{\fm(D) \setminus (\fm(C) \cup \{w\})} \mid X_{\fm(C) \setminus\{v\}} \;\; [
 \phi_{\langle \{ w \}, D \setminus (C \cup \{w\}) \rangle}(q_D; \G[D]) ], \label{eqn:localknew}
\end{align}
where $q_D$ is the unique kernel resulting from some valid fixing sequence for 
$D$, and $\phi_{D \setminus (C\cup \{w\})}$ is implemented via some valid fixing sequence. 
\end{enumerate}
\label{dfn:local-definition}
\end{definition}

\begin{remark}
Note that there is one independence associated with each initial segment 
district, and one associated with each transition in the power DAG.  Any of these 
independences may be null, in that the right hand side of the independence might 
not contain any variables. 
\end{remark}

We will show as part of our proof of Proposition \ref{prop:olmp_implies_df} that 
the kernel $q_D$ in (\ref{eqn:localknew}) 
is invariant to the choice of valid fixing sequences in (\ref{eqn:localknew0}) and (\ref{eqn:localknew});
hence, the local nested Markov property itself remains invariant to any choice of valid sequence for any $q_D$ that
appears in the definition.



Note that the above definition is \emph{not} the same as that given in the main body
of the paper, but it follows from the next result that these two sets of
independences are equivalent.  

\begin{lemma} \label{lem:last_guy}
Let $\prec$ be a topological order on an ADMG $\G$, and $p$ an arbitrary distribution.
Then for a given $v$, {\rm (\ref{eqn:localknew0})} holds under 
$\phi_{T_v \setminus C}(p(x_{T_v}); \G[T_v])$ 
if and only if {\rm (\ref{eqn:localknew0})} holds under $p$,
where $T_v = \Pre_v$.

Let $D \rightarrow C$ be a transition in $\intgr(\G)$ introduced by fixing $w$, and
$q_D(x_D \cmid x_{\pa(D) \setminus D})$ be an arbitrary kernel for $D$. 
Then for $v$ maximal under $\prec$ in $C$, {\rm (\ref{eqn:localknew})} holds for 
 $\phi_{\langle \{ w \}, D \setminus (\{w\} \cup C) \rangle}(q_D; \G[D])$
if and only 
if the same independence holds under 
$\phi_{w}(q_D; \G[D])$.
\end{lemma}

\begin{proof}
For the first claim, note that this independence will clearly hold under 
$p(x_{\Pre_v})$ if and only if it holds under $p$, since all the variables 
involved are contained in $\Pre_v$.  Then also notice that, since none of the
variables in $\Pre_v \!\setminus \, C$ are descendants of $v$ and they are all in 
 districts that do not contain $v$, by Proposition 
\ref{prop:ind-preserve-2} fixing them will have no effect on the status 
of the conditional independence (\ref{eqn:localknew0}).

For the second claim, note that $C$ becomes a district in $\G[D]$ immediately 
after fixing $w$.  Then
note that any later fixings are within districts other than $C$, and therefore
the Markov blankets of those vertices do not include the maximal 
vertex $v$.  Hence, again by Proposition \ref{prop:ind-preserve-2}, the 
status of the independence involving the conditional distribution of $v$ given $D \setminus \{w,v\}$ is 
unaffected by all these fixings, and so it also holds in $\phi_w(q_D; \G[D])$ if and
only if it holds in
$\phi_{\langle \{ w \}, D \setminus (\{w\} \cup C) \rangle}(q_D; \G[D])$.
\end{proof}


\subsection{Proof that ordered local nested property implies the global property}

\begin{definition} \label{dfn:fix_by_mar}
Let $v$ be a fixable vertex in a CADMG $\G(V,W)$ such that 
$q_V(x_v \cmid x_{\mb(v)}) = q_V(x_v \cmid x_{V \setminus \{v\}}, x_W)$.  Then we say that
$v$ is \emph{fixed by marginalization} in $q_V$ (and $\G$).
\end{definition}

The terminology is used because, in this case, 
\begin{align*}
\phi_v(q_V; \G) = \frac{q_V(x_V \cmid x_W)}{q_V(x_v \cmid x_{V \setminus \{v\}}, x_W)} = \sum_{x_v} q_V(x_V \cmid x_W),
\end{align*}
and it follows that $\phi_v(q_V; \G)$ does not depend upon $x_v$.
This may occur because 
$\nd_\G(v) = (V \setminus \{v\}) \cup W$ 
and $q_V$ is Markov with respect to $\G$, or equally 
because $\mb_\G(v) = (V \setminus \{v\}) \cup W$ even if $q_V$
is not Markov with respect to $\G$.  This idea will be particularly useful
in proofs where the fact of $q_V$ being Markov with respect to a graph 
has not yet been established.

Recall that $\Pre_\ell$ 
denotes $\{\ell\} \cup \{k : k \prec \ell\}$ under a topological ordering $\prec$.

\begin{proposition} \label{prop:olmp_implies_df}
Let $p$ be a distribution that is ordered local nested Markov with respect
to $\G$ and the topological order $\prec$.  Then given any valid fixing sequence
for a reachable set $R$ (say ${\bf s}$), the kernel $\phi_{\bf s}(p; \G)$ also district factorizes
according to $\G[R]$.
\end{proposition}

\begin{proof}
Note first that the 
ordered local nested Markov property for $\G$ implies the same property for $\G_{\Pre_k}$,
so we can use the trivial base case with $|V| = 1$ and an inductive argument to assume 
that the global Markov property holds for the margin $p(x_{\pre_{\G,\prec}(k)}) = \phi_k(p(x_{\Pre_{k}}); \G)$;
this identity follows from the independence (\ref{eqn:localknew0}) with $v=k$,
which becomes
\begin{align}
X_k \ci X_{\pre(k) \setminus \mb(k, \Pre_k )} \mid X_{\mb(k, \Pre_k)} \quad\quad [\phi_{\Pre_k \setminus C}(p_{\Pre_k})].
\label{eq:indep-prop-c7-proof}
\end{align} 
We now proceed to prove that the hypothesis also holds for $\Pre_k$.  Note that
an application of Lemma \ref{lem:last_guy} shows that it is equivalent to observe 
the independence (\ref{eq:indep-prop-c7-proof}) in $p_{\Pre_k}$, the marginal distribution over $\Pre_{ k }$.

Using all the independences (\ref{eqn:localknew0}) we obtain
\begin{align*}
p(x_V) &= p(x_k \cmid x_{\mb(k,\Pre_k)}) \cdot \prod_{i : i \prec k} q_{\{i\}}(x_i \cmid x_{\mb(i,\Pre_i)}).
\end{align*}
Note that this gives us the district factorization of $p(x_V)$ w.r.t.~$\G$ since, 
by definition, the Markov blanket within the set of predecessors only ever 
includes vertices in the same district (and its parents).
So in particular, by Lemma \ref{lem:markovfact},  we also have
\begin{align*}
p(x_V) &= \prod_{D \in \mathcal{D}(\G[V])} q_D(x_D \cmid x_{\pa(D) \setminus D}).
\end{align*}

Now, any reachable set $R$ containing $k$ can be obtained by iteratively 
fixing one vertex at a time.  We now proceed to show that $q_R$ district 
factorizes by an inner induction on reachable sets whose maximal vertex is 
$k$.  As a base case, we already 
have the result for $R = V$, so now assume we have the required factorization
for some set $R \subseteq V$.  We will prove that it also holds for sets 
$R' = R\setminus \{\ell\}$ obtained by fixing a vertex 
$\ell \in \mathbb{F}(R) \setminus \{k\}$.  

Let $D \equiv \dis_{\G[R]}(k)$; there are two possibilities: either $\ell$ 
is in $D$, or it is not.  If it is not, then by application of Proposition 
\ref{prop:fix-dist-marg} the fixing will only affect the factors that 
correspond to vertices in the same district.  We therefore know, from the 
original (outer) inductive hypothesis for ${\pre_{\G,\prec}(k)}$, that if we 
only fix vertices not in $D$ we will 
obtain that the appropriate district factorization for the reachable 
graph $\G[R']$ also holds.

If $\ell \in D$, then there will be a transition $D \rightarrow C$ in 
$\intgr(\G)$ corresponding to fixing $\ell$ from $D$, so by (\ref{eqn:localknew}) that yields the independence
\begin{align*}
X_k \ci X_{(\fm(D) \setminus \{\ell\}) \setminus \fm(C)} \mid X_{\fm(C) \setminus \{k\}} \quad 
[
\phi_{\langle \{ \ell \}, D \setminus (C\cup\{\ell\}) \rangle}(q_D; \G)
].
\end{align*}
Note that, from $q_D$ the vertex $\ell$ is fixed by marginalization
by the district factorization; the new independence shows that if $k$ 
is fixed after $\ell$, then $k$ is also fixed by marginalization.

For the reverse order, note that by the inner inductive hypothesis we already 
have the district factorization 
for $R$, so $k$ is clearly also fixed from $D$ by marginalization.  Then
we can apply the outer induction hypothesis on ${\pre_{\G,\prec}(k)}$ to show that 
$\ell$ is also fixed by marginalization from 
its new district in $D \setminus \{k\}$.
%
It follows that $\phi_k$ and $\phi_\ell$ commute when applied to $q_D$, and 
hence $q_{\widetilde{C}}$ is well defined for every district $\widetilde{C}$ of 
$D \setminus \{\ell\}$. 

Hence (letting $D^* \equiv D \setminus \{\ell\}$) we obtain
\begin{align*}
q_{D^*}(x_{D^*} \cmid x_{\pa(D^*) \setminus D^*}) &= q_D(x_k \cmid x_{(C \setminus \{k\}) \cup \pa(C)}) \!\!\! 
\prod_{\widetilde{C} \in \mathcal{D}(\G[D^* \setminus \{k\}])} \!\!\! q_{\widetilde{C}}(x_{\widetilde{C}} \cmid x_{\pa(\widetilde{C}) \setminus \widetilde{C}})\\
&= \prod_{\widetilde{C} \in \mathcal{D}(\G[D^*])} q_{\widetilde{C}}(x_{\widetilde{C}} \cmid x_{\pa(\widetilde{C}) \setminus \widetilde{C}}).
\end{align*}
Consequently we have that for the reachable set $R' = R \setminus \{\ell\}$, 
the distribution $q_{R'} = \prod_{D' \in \mathcal{D}(\G[R'])} q_{D'}$ 
district factorizes according to $\G[R']$.
Hence, by the combination of the two inductions, the same is true for 
every reachable set $R$ in $\G$. 
\end{proof}

\begin{thma}{\ref{thm:global_local}}
$p(x_V)$ is globally nested Markov with respect to $\G$ if and only if $p(x_V)$
is ordered local nested Markov with respect to $\G$ for any topological ordering $\prec$.
\end{thma}

\begin{proof}
That the global property implies the ordered local nested Markov 
property (for any ordering) is obvious from the definitions, since (\ref{eqn:localknew0})
and (\ref{eqn:localknew}) correspond to m-separations in reachable subgraphs.  For the
converse, we simply apply Proposition \ref{prop:olmp_implies_df} and 
Theorem \ref{thm:global-reachable-factorization} to obtain the global
property.
\end{proof}

%

\subsection{Illustrative Examples}

Here we give some examples of the ordered local nested Markov property
associated with a given vertex, so as to illustrate how it works.  

\begin{example}
Consider again the bidirected graph in Figure \ref{fig:simple}(i), under
the topological ordering $1,2,3$.  In 
this case there are three intrinsic sets containing the maximal vertex $3$,
and these are $\{3\}$, $\{2,3\}$ and $\{1,2,3\}$.
Then the power DAG for the initial segment $\{1,2,3\}$ is just 
the complete DAG on these three vertices, and the only transition 
that gives a constraint is $\{1,2,3\} \to \{3\}$ when fixing $1$, which 
tells us that $X_3 \ci X_2 \, [\phi_1(p)]$, or equivalently that 
$X_3$ and $X_2$ are marginally independent.
\end{example}

\begin{example}
Consider the ADMG in Figure \ref{fig:counter1}(i), and consider the 
topological ordering $1,2,3,4,5,6$.  Then the relevant power DAG 
restricted to descendants of $V=\Pre_6$ 
is shown in Figure \ref{fig:powerdag1}.

There are two highlighted arrows, one from $\{1,4,5,6\}$ to $\{6\}$, 
and the other from $\{1,3,5,6\}$ to $\{6\}$.  These are both 
\emph{maximal}, in the sense that there is no superset of 
either of these sets from which one can transition directly to the set 
$\{6\}$.  

\begin{figure}
\begin{center}
\begin{tikzpicture}[>=stealth, node distance=1.2cm,
pre/.style={->,>=stealth,very thick,line width = 1.4pt}]
\tikzstyle{format} = [circle, draw, very thick, minimum size=5mm, inner sep=.5mm]
    \tikzstyle{square} = [rectangle, draw, very thick, minimum size=5mm, inner sep=.5mm]
   \begin{scope}
    \path[<->]	node[format] (x4) {$4$} 
    		node[format, above right of=x4] (x1) {$1$} 
                  (x4) edge[pre, <->, red] (x1)
                  node[format, below right of=x4] (x2) {$2$} 
                  (x1) edge[pre, ->, blue] (x2)
		node[format, above of=x1] (x6) {$6$} 
                  (x1) edge[pre, <->, red] (x6)
                  node[format, below of=x2] (x3) {$3$} 
                   (x2) edge[pre, ->, blue] (x3)
                  (x4) edge[pre, <->, red] (x2)
                   node[format, left of=x4] (x5) {$5$} 
                  (x4) edge[pre, <->, red] (x5)
                   (x5) edge[pre, <->, red, out=45, in=180] (x1)
                  (x5) edge[pre, <->, red, out=290, in=160] (x3)
                  ;               
   \node[below of=x5, yshift = -1cm] {$\G$};
  \end{scope}
      \begin{scope}[xshift=4.5cm]
   \path[<->]	node[format] (x4) {$4$} 
    		node[square, above right of=x4] (x1) {$1$} 
                  node[square, below right of=x4] (x2) {$2$} 
		node[format, above of=x1] (x6) {$6$} 
                  node[square, below of=x2] (x3) {$3$} 
                   node[format, left of=x4] (x5) {$5$} 
                 (x4) edge[pre, <->, red] (x5)
                  ;                      
  \node[below of=x5, yshift = -1cm] {$\G[\{4,5,6\}]$};
  \end{scope}  
     \begin{scope}[xshift=9cm]
    \path[<->]	node[square] (x4) {$4$} 
    		node[format, above right of=x4] (x1) {$1$} 
                  node[square, below right of=x4] (x2) {$2$} 
		node[format, above of=x1] (x6) {$6$} 
                  (x1) edge[pre, <->, red] (x6)
                  node[format, below of=x2] (x3) {$3$} 
                   (x2) edge[pre, ->, blue] (x3)
                   node[format, left of=x4] (x5) {$5$} 
                   (x5) edge[pre, <->, red, out=45, in=180] (x1)
                  (x5) edge[pre, <->, red, out=290, in=160] (x3)
                  ;               
  \node[below of=x5, yshift = -1cm] {$\G[\{1,3,5,6\}]$};
  \end{scope}  
  \end{tikzpicture}
\end{center}
\caption{(i) An ADMG $\G$; (ii) $\G[\{4,5,6\}]$; (iii) $\G[\{1,3,5,6\}]$.}
\label{fig:counter1}
\end{figure}

\begin{figure}
\begin{center}
\begin{tikzpicture}[>=stealth, node distance=1.5cm,
pre/.style={->,>=stealth,very thick,line width = 1.4pt}]
   \begin{scope}
    \tikzstyle{rv} = [minimum size=5mm, inner sep=.5mm]
\node[rv] (123456) {$123456$};
\node[rv, below of=123456] (12456) {$12456$};
\node[rv, below of=12456] (1456) {$1456$};
\node[rv, left of=1456, xshift=-1cm] (1356) {$1356$};
\node[rv, right of=1456, xshift=1cm] (1246) {$1246$};
\node[rv, below of=1456, xshift=-1.25cm] (156) {$156$};
\node[rv, below of=1456, xshift=1.25cm] (146) {$146$};
\node[rv, below of=156, xshift=1.25cm] (16) {$16$};
\node[rv, below of=16, xshift=0cm] (6) {$6$};
\draw[pre] (123456) -- (12456);
\draw[pre] (12456) -- (1456);
\draw[pre] (12456) -- (1246);
\draw[pre] (123456) -- (1356);
\draw[pre] (123456) -- (1246);
\draw[pre] (12456) -- (156);
\draw[pre] (1456) -- (156);
\draw[pre] (1456) -- (146);
\draw[pre] (1356) -- (156);
\draw[pre] (1246) -- (146);
\draw[pre] (1356) to[bend right] (16);
\draw[pre] (1246) to[bend left] (16);
\draw[pre] (146) -- (16);
\draw[pre] (146) to[bend left] (6);
\draw[pre] (156) -- (16);
\draw[pre, color=green] (1456) to[bend left=20] (6);
\draw[pre] (156) -- (6);
\draw[pre, color=orange] (1356) to[bend right=20] (6);
\draw[pre] (16) -- (6);
  \end{scope}
  \end{tikzpicture}
\end{center}
\caption{The connected component of the power DAG for $\G$ in Figure \ref{fig:counter1} with
intrinsic sets in which 6 is the maximal vertex.}
\label{fig:powerdag1}
\end{figure}

Note that it is necessary for us to use two different fixing 
sequences to get to the intrinsic set 6, because one of them
gives the ordinary independence $X_6 \ci X_4,X_5$, and the other gives
a nested independence $X_6 \ci X_3, X_5 \mid X_2$.  There is 
no way to obtain these \emph{both} from the same fixing 
sequence, and there is no joint independence involving all
of these variables.
\end{example}

\begin{example}
Now consider the graph in Figure \ref{fig:counter2}(a) under a 
numerical topological ordering, and the part of its intrinsic power DAG 
in Figure \ref{fig:counter2}(b).  Again, note that there are two 
distinct edges into $\{6\}$, from $\{2,4,6\}$ and $\{3,4,6\}$, that 
give two distinct independences.  However, there is no intrinsic set 
containing $\{2,3,4,6\}$ from which we could obtain both these 
independences. 

\begin{figure}
\begin{center}
\begin{tikzpicture}[>=stealth, node distance=1.6cm,
pre/.style={->,>=stealth,ultra thick,line width = 1.4pt}]
   \begin{scope}
    \tikzstyle{format} = [circle, draw, very thick, minimum size=5mm, inner sep=.5mm]
    \path[<->]	node[format] (e) {$4$} 
    		node[format, below of=e] (f) {$5$} 
		node[format, below of=f] (y) {$6$} 
		node[format, left of=f] (a) {$1$} 
		node[format, left of=a] (d) {$3$} 
		node[format, left of=y] (b) {$2$} 
		(e) edge[pre, ->, blue] (f)
		(f) edge[pre, ->, blue] (y)
		(a) edge[pre, ->, blue] (b)
                (d) edge[pre, <->, red] (e)
                (d) edge[pre, <->, red] (a)
                 (b) edge[pre, <->, red] (e)
                   (a) edge[pre, <->, red] (f)
                    (e) edge[pre, <->, red, out=300,in=60] (y);
   \node[left of=b, yshift=-7mm] {(a)};
  \end{scope}
     \begin{scope}[xshift=5cm, yshift=1cm]
    \tikzstyle{rv} = [minimum size=5mm, inner sep=.5mm]
\node[rv] (abdefy) {123456};
\node[rv, below right of=abdefy] (adefy) {13456};
\node[rv, below left of=adefy] (bey) {246};
\node[rv, below right of=adefy] (dey) {346};
\node[rv, below right of=bey] (ey) {46};
\node[rv, below of=ey] (y) {6};
\draw[pre] (abdefy) -- (adefy);
\draw[pre] (abdefy) -- (bey);
\draw[pre] (adefy) -- (dey);
\draw[pre] (adefy) -- (ey);
\draw[pre] (bey) -- (ey);
\draw[pre] (dey) -- (ey);
\draw[pre, color=orange] (bey) -- (y);
\draw[pre, color=green] (dey) -- (y);
\draw[pre] (ey) -- (y);
   \node[left of=y] {(b)};
  \end{scope}
  \end{tikzpicture}
\end{center}
\caption{(a) Graph with two fixing sequences for $\{6\}$ that lead to different 
conditional independences. (b) The power DAG component with intrinsic sets in which $6$ is the maximal
 vertex.}
\label{fig:counter2}
\end{figure}
\end{example}

A moral of the examples above is that it is also not possible
to have a local Markov property where there is at most  
one independence statement corresponding to each intrinsic set.  
Our local property associates independences with 
initial segment districts $D$, and with pairs $(D,w)$ where $D$ is an 
intrinsic set in which $w$ is fixable.


\subsection{Saturated Nested Models}


In this subsection, we assume a fixing sequence $1, \ldots, k$ for $V$
and denote $\G^{(1)} := \G$ and $\G^{(i+1)} := \phi_i(\G^{(i)})$ 
for $i=1,\ldots,k$. Also let $V^{(i)} \equiv \mathbb{V}(\G^{(i)}) = \{i,\ldots,k\}$. 

Our first lemma will be useful in showing the existence of distributions
that are not Markov with respect to a particular graph.

\begin{lemma} \label{lem:triv-dist}
Let $\G$ be an ADMG and $1, \ldots, i, \ldots, j$ be a valid fixing sequence 
in which $i \notin \mb_{\G^{(j)}}(j)$ and $j \notin \mb_{\G^{(i)}}(i)$.
Also let $p(x_V)$ be a distribution in which $X_{i}, X_{j} \ci X_{V \setminus \{i,j\}}$. 

Then:
\begin{enumerate}[label={\rm (\roman*)}]
\item $X_i, X_j \ci X_{V \setminus \{i,j\}}$ in $q_{V^{(\ell)}}$ for all $\ell \leq j$;
\item $q_{V^{(\ell)}}(x_{j} \cmid x_{i}) = p(x_{j} \cmid x_{i})$ for
all $\ell \leq j$.  In particular, if $X_i \not\ci X_j$ in $p$, then this also holds for each 
$q_{V^{(\ell)}}$.
\end{enumerate}
\end{lemma}

\begin{proof}
For $\ell \neq i,j$, probabilistic fixing involves dividing by 
$q_{V^{(\ell)}}(x_\ell \cmid x_{\mb_{\G^{(\ell)}}(\ell)})$.  By the independence,
this is the same as both 
$q_{V^{(\ell)}}(x_\ell \cmid x_{\mb_{\G^{(\ell)}}(\ell) \setminus \{i,j\}})$ and 
$q_{V^{(\ell)}}(x_\ell \cmid x_{\{i,j\} \cup \mb_{\G^{(\ell)}}(\ell)})$;
and hence by Lemma \ref{lem:no-backtracking} both 
the conditional distribution
of $X_i, X_j$ given $X_{V \setminus \{i,j\}}$ and
the marginal distribution
of $X_i, X_j$ is preserved in $q_{V^{(\ell+1)}}$. 

When fixing $i$, we divide by 
$q_{V^{(i)}}(x_i \cmid x_{\mb_{\G^{(i)}}(i)})$ and, by hypothesis,
$j \notin \mb_{\G^{(i)}}(i)$.  Hence, again by Lemma \ref{lem:invariance-kernel},
the distribution of $X_j$ conditional on $X_{V \setminus \{j\}}$ is 
preserved.  Then since $X_j \ci X_{V \setminus \{i,j\}} \mid X_i$, this 
means that the distribution of $X_j$ conditional on $X_i$ is 
preserved.
\end{proof}

\begin{lemma} \label{lem:fixbymar}
Let $v$ be a childless vertex in a complete ADMG $\G$.  Then, for
any distribution $p$ and valid fixing sequence, $v$ is fixed
by marginalization.\footnote{See Definition \ref{dfn:fix_by_mar}.}  In addition, if any 
$w \in \sib_\G(v)$ is fixed before $v$, it is also fixed by marginalization.
\end{lemma}

\begin{proof}
Note that every other random vertex is a parent or sibling of $v$, and hence
contained in $\mb_\G(v)$.  Certainly, then, if $v$ is fixed from $\G$ then it is
fixed by marginalization.
Let $w$ be a sibling of $v$, and note that if $w$ is fixable in $\phi_{V \setminus R}(\G)$ 
then it is also childless in $\phi_{V \setminus R}(\G)$.  To see this, suppose for contradiction
that $w \rightarrow t$; by completeness, either 
$t \rightarrow v$ or $t \leftrightarrow v$, but in either case $w$ is not 
fixable and we reach our contradiction. 

Now suppose we are fixing $w \in \{v\} \cup \sib_\G(v)$ in some reachable
graph $\phi_{V \setminus R}(\G)$.  Every random 
vertex is joined to $w$, and hence is a parent or sibling of $w$.  Note that,
(if $w \neq v$) any parents of $v$ are also in $\mb_\G(w; R)$.  
The only variables that might be outside
$\mb_\G(w; R)$ are those fixed vertices that were not parents of $w$ (or $v$) 
in $\G$; i.e.\ they were also siblings of $v$.  The first such vertex in the sequence, 
say $a$, is fixed by marginalization because its Markov blanket includes all 
variables.  

For subsequent vertices, the Markov blanket includes all vertices other
than those previously fixed by marginalization; since, by definition, the kernel does not 
depend upon these marginalized vertices, then these subsequent vertices
are also fixed by marginalization.  It follows by a simple induction that $w$ is also
fixed by marginalization. 
\end{proof}

Starting with a complete ADMG, fixing operations remove edges and may lead 
to a CADMG that is no longer complete.  Consequently, the following result is
by no means trivial.

\begin{lemma} \label{lem:commute}
Let $\G$ be a complete ADMG and $p$ an arbitrary distribution over $X_V$ {\rm (}not 
necessarily Markov with respect to $\G${\rm )}.  
If $v$ is a vertex with no children in $\G$, 
then for any reachable set $R \subseteq V \setminus \{v\}$ 
and any $w$ fixable in $\phi_{V \setminus R}(\G)$, we 
have
{$\phi_{\langle v, w \rangle}(q_R; \phi_{V \setminus R}(\G)) =
\phi_{\langle w, v \rangle}(q_R; \phi_{V \setminus R}(\G))$}
where $q_R = \phi_{V \setminus R}(p; \G)$.
\end{lemma}

\begin{proof}
By Lemma \ref{lem:fixbymar}, $v$ is fixed by marginalization in $\G^*\equiv \phi_{V\setminus R}(\G)$.
If $w$ is a sibling of $v$ then (also by Lemma \ref{lem:fixbymar}) 
it is also fixed by marginalization,
and the operations $\phi_v$ and $\phi_w$ amount to sums that 
clearly commute.  

%
%
%
%
%

%

Otherwise $w$ is a parent of $v$ and---since $w$ is fixable---is 
in a different district to $v$ in $\G^*$. 
Therefore $\mb_{\G}(w; R) = \mb_{\G}(w; R \setminus \{v\})$ and
also $v \notin \mb_{\G}(w; R)$.  It then follows that 
\begin{align*}
\phi_v(\phi_w(q_R; \G^*); \phi_w(\G^*)) &= \sum_{x_v} \frac{q_R(x_R \cmid x_{V \setminus R})}{q_R(x_w \cmid x_{\mb(w,R)})}\\
&= \frac{\sum_{x_v} q_R(x_R \cmid x_{V \setminus R})}{q_{R \setminus \{v\}}(x_w \cmid x_{\mb(w,R \setminus \{v\})})}\\
&= \frac{q_{R \setminus \{v\}}(x_{R \setminus \{v\}} \cmid x_{V \setminus R})}{q_{R \setminus \{v\}}(x_w \cmid x_{\mb(w,R \setminus \{v\})})}\\
&= \phi_w(\phi_v(q_R; \G^*); \phi_v(\G^*))
\end{align*}
as required.
\end{proof}

%
%

%
%
%
%

%
%
%
%

%

\subsubsection{Maximal Arid Graphs}

Let $A \subseteq V$ be a set of vertices, and define  the 
\emph{intrinsic closure} of $A$, denoted $\langle A \rangle_\G$, to be the
smallest reachable set containing $A$.  Note that this set is well-defined by
Theorem \ref{thm:invariant}.  We say that two vertices $v,w$ are 
\emph{densely connected} if either:
\begin{enumerate}[label={\rm (\roman*)}]
\item $v \in \pa_\G(\langle \{w\} \rangle_\G)$; or
\item $w \in \pa_\G(\langle \{v\} \rangle_\G)$; or
\item $\langle \{v,w\} \rangle_\G$ is bidirected-connected.
\end{enumerate}

\begin{lemma}
The condition in Theorem \ref{thm:saturated_a} is equivalent to every pair of vertices being
densely connected.
\end{lemma}

\begin{proof}
First suppose $i$ and $j$ are densely connected.  If $i \in \pa_\G(\langle j \rangle_\G)$, 
then for any fixing sequence $i \in \mb_{\G^{(j)}}(j)$. 
Alternatively, if $\langle \{ i, j \} \rangle_\G$ is bidirected-connected, 
then for any fixing sequence, if $i$ is fixed before
$j$ then $j \in \mb_{\G^{(i)}}(i)$.  

Now suppose that $i$ and $j$ are not densely connected.
Consider any fixing sequence that fixes
$V \setminus \langle \{ i, j \} \rangle_\G$ first (one exists by definition of $\langle \cdot \rangle_\G$).
If $i,j$ both have children in $\phi_{V \setminus \langle \{ i, j \} \rangle}(\G)$,
then $\phi_{V \setminus \langle \{ i, j \} \rangle}(\G)$ has a childless vertex that is not in $\{ i, j \}$.
Since childless vertices are always fixable, we have a contradiction, since the only vertices fixable in
$\phi_{V \setminus \langle \{ i, j \} \rangle}(\G)$ are in $\{ i, j \}$.

Without loss of generality assume $i$ does not have children in $\phi_{V \setminus \langle \{ i, j \} \rangle}(\G)$.  
Then, since $\langle \{ i, j \} \rangle_\G$ is not bidirected-connected, $i$ is not in
the Markov blanket of $j$ in $\G^{(j)}$ 
for a fixing sequence of $V \setminus \langle \{ i, j \} \rangle_\G$, followed by $j$, 
followed by any remaining vertices in $V \setminus \langle i \rangle_\G$, 
followed by $i$.  Furthermore,
the Markov blanket for $i$ in $\G^{(i)}$ is just $\pa_\G(\langle i \rangle_\G)$,
so by assumption $j \notin \mb_{\G^{(i)}}(i)$.
Thus, we have constructed a fixing sequence for which the condition in Theorem
\ref{thm:saturated_a} fails.
%
\end{proof}

In \citet[][Theorem 30]{uai:18} it is shown that every graph $\G$ is nested Markov equivalent to
a graph $\G^{\dag}$, called its \emph{maximal arid projection}, for which 
$\langle \{v\} \rangle_{\G^{\dag}} = \{v\}$.  The 
graph $\G^{\dag}$ is simple (at most one edge between each pair of vertices),
and is obtained from $\G$ by joining pairs of vertices that are 
densely connected.  It follows that the condition in Theorem \ref{thm:saturated_a} 
is satisfied only by graphs that are nested Markov equivalent to a complete, 
simple graph.  


\begin{thma}{\ref{thm:saturated_a}}
Let $\G$ be an ADMG.  
The model ${\cal P}^{\rm n}(\G)$ is 
saturated if and only if for every valid fixing sequence $r_1, \ldots, r_k$,
and every $i,j \in 1,\ldots, k$, either $r_i \in \mb_{\G^{(j)}}(r_j)$ or 
$r_j \in \mb_{\G^{(i)}}(r_i)$.
Here, $\G^{(1)} \equiv \G$ and $\G^{(\ell+1)} \equiv \phi_{r_\ell}(\G^{(\ell)})$.
\end{thma}

\begin{proof}
Choose $\G$ such that the condition is false for a fixing sequence 
$1, \ldots, k$, and a pair of vertices $i, j$ with $i<j$.  Pick a distribution 
$p$ in which $X_{V \setminus \{i,j\}} \ci X_{i}, X_{j}$ but
with $X_{i} \not\ci X_{j}$; we claim $p$ is not nested Markov with respect to $\G$. 
By Lemma \ref{lem:triv-dist} we 
have $X_i,X_j \ci X_{V \setminus \{i,j\}}$ in $q_{V^{(j)}}$, and that 
$q_{V^{(j)}}(x_j \cmid x_i) = p(x_j \cmid x_i)$, so in particular 
$X_i \not\ci X_j$ in $q_{V^{(j)}}$.

By ordinary graphoids
we then obtain $X_i \not\ci X_j \mid X_{C}$ in $q_{V^{(j)}}$ for any 
$C \subseteq V \setminus \{i,j\}$.  But the nested local Markov property
implies that $X_{j} \ci X_{V \setminus (\{j\} \cup \mb_{\G^{(j)}}(j))} \mid X_{\mb_{\G^{(j)}}(j)}$ in 
$q_{V^{(j)}}$ and by hypothesis $i \notin \mb_{\G^{(j)}(j)}$, so in particular 
$X_{j} \ci X_{i} \mid X_{\mb_{\G^{(j)}}(j)}$ in 
$q_{V^{(j)}}$.  This contradicts our construction above, and hence
the distribution $p$ is not nested Markov with respect to $\G$, and
the model is not saturated.

Now, since the condition in the statement of the theorem is equivalent to the resulting 
maximal arid projection $\G^{\dag}$ being 
complete, we will proceed to show that if a graph is simple and complete, then the resulting
nested Markov model is saturated.  Then since every graph $\G$ for which the original 
condition holds is nested Markov equivalent to a complete simple graph $\G^{\dag}$, 
the result will follow.

Suppose all vertex pairs are adjacent in $\G$, take any topological ordering $\prec$,
pick any $S \in {\cal I}(\G)$
and let $i$ be the $\prec$-greatest element of $S$.  We will show that
$q_S$ does not depend upon any $x_j$ for $j \notin S \cup \pa_\G(S)$, and
hence that the local Markov property holds.


First, note Lemma \ref{lem:commute} implies that we may assume any vertices that are 
not ancestors of $S$ have already been fixed by marginalization by reordering reverse topologically.  
To see this, note that if there is some childless vertex not in $S$, then Lemma \ref{lem:commute}
states that we can fix it first without consequence, and that this corresponds to marginalization.
Hence, if $j$ is a child of $i$, then $q_S$ does not depend upon $x_j$.  
Hereafter we assume without loss of generality that all vertices in $\G$ are ancestors of $S$,
and in particular $i$ does not have any children. 

Since the graph is complete, this leaves only two possibilities.  
If $j \in \pa_\G(i)$, then $j \in \pa_\G(S)$, and there is nothing to show.
Otherwise $j \leftrightarrow i$, and
there is nothing to prove unless $j \notin S \cup \pa_\G(S)$.
In this case, by Lemma \ref{lem:fixbymar}, $j$ is fixed by marginalization. 
%
%
%
%
%

%
%
%
%
%
%
%
%

%

We have shown that for any distribution $p$ and any intrinsic set $S$,
the kernel $q_S$ resulting from fixing does not depend upon any 
$x_j$ for $j \notin S \cup \pa_\G(S)$;
hence the local Markov property holds. 
%
%
%
%
%
\end{proof}

%
%
%
%
%
%
%
%
%
%
%
%
%
%
%
%
%
%
%
%
%
%
%

\begin{cora}{\ref{cor:complete-saturated}}
Let $\G$ be a complete ADMG; then ${\cal P}^{\rm n}(\G)$ is saturated.
\end{cora}


\begin{proof}
Note that if $i \leftrightarrow j$, then certainly $i$ and $j$ are in each other's
Markov blankets whenever one of these is fixed.  
If $i \rightarrow j$, then $i$ is in $j$'s Markov blanket, and will remain so even if $i$ 
is fixed.  Hence the condition in Theorem \ref{thm:saturated_a} holds.\footnote{The authors have been made aware of an independent proof by Bhattacharya and Nabi of the result that the nested model associated with a 
complete ADMG is nonparametric saturated \citep{bhattacharya21semi}.}
\end{proof}

\section{Connections To Causal Inference}
\label{app:sec:id}

\begin{lemma} \label{lem:proj_preserves_sep}
Let $\G(L \cup V, W)$ be a CADMG.  
The m-separations in $\G(L \cup V, W)^{|W}$ amongst vertices in $V \cup W$ are the same as those in $\G(V, W)^{|W} = \sigma_L(\G(L \cup V, W)^{|W})$.
\end{lemma}

\begin{proof}
The graphs $\G(V, W)^{|W}$ and $\sigma_L(\G(L \cup V, W)^{|W})$ are ADMGs, and since the 
former is a latent projection of the latter, 
the result follows by standard results on m-connection.
\end{proof}

\begin{cora}{\ref{cor:dag_in_nested}}
For a DAG $\G$, 
${\cal P}^{\rm d}(\G) = {\cal P}^{\rm n}(\G)$.
\end{cora}

\begin{proof}
($\Rightarrow$) follows by Lemma \ref{lem:reachable-factorizes}.
($\Leftarrow$) follows by Theorem
\ref{thm:global-reachable-factorization} (with $R=V$) since ${\mathcal D}(\G) = \{ \{v\}: v\in V\}$ and 
$\phi_{V\setminus \{v\}}(p(x_V);\G) = p(x_v\,|\,x_{\pa(v)})$.
\end{proof}

\begin{lema}{\ref{lem:fix_marg_commute_graph}}
Let $\G(L \cup V, W)$ be a CDAG.
Assume $v \in V$ is fixable in $\G(V, W) =
\sigma_L(\G(L \cup V, W))$.
Then
$\sigma_L ( \phi_v(\G(L \cup V, W) ) ) =
	\phi_v ( \sigma_L(\G(L \cup V, W) ) )$.\footnote{
Note that $v$ is fixable in $\G(L \cup V, W)$ since this graph
has no bidirected edges, and thus all random vertices are fixable.}
That is, the following commutative diagram holds:
\begin{center}
\begin{tikzpicture}
\node (og) {$\G(L \cup V, W)$};
\node (fg) [below of=og, yshift=-1.2cm] {$\G((L \cup V)
	\setminus \{ v \}, W \cup \{ v \})$};
\node (pg) [right of=og, xshift=5.0cm] {$\G(V, W)$};
\node (fpg) [right of=fg, xshift=5.0cm] {$\G(V \setminus \{ v \},
	W \cup \{ v \})$};

\draw[->] (og) to node[left] {$\phi_v$} (fg);
\draw[->] (og) to node[above] {$\sigma_L$} (pg);
\draw[->] (fg) to node[below] {$\sigma_L$} (fpg);
\draw[->] (pg) to node[right] {$\phi_v$} (fpg);
\end{tikzpicture}
\end{center}
\label{dummy}
\end{lema}

\begin{proof}
Both $\sigma_L ( \phi_v(\G(L \cup V, W) ) )$ and
$\phi_v ( \sigma_L(\G(L \cup V, W) ) )$ have the same set of random
vertices $V \setminus \{ v \}$ and fixed vertices $W \cup \{ v \}$.

Consider the set of edges $E$ in $\sigma_L( \G(L \cup V, W)) = \G(V, W)$.
The set of edges $E'$ in $\phi_v ( \sigma_L(\G(L \cup V, W) ) )$
is a subset of $E$ containing all edges not having an arrowhead at $v$.
Now let $\pi$ be the set of  paths in
$\G(L \cup V, W)$, where both endpoints are in $V \cup W$ and all non-endpoints are non-colliders in $L$.
These paths d-connect marginally (i.e.\ given $\emptyset$).
Similarly, let $\pi'$ be the set of paths in
$\phi_v(\G(L \cup V, W)) = \G((L \cup V) \setminus \{v\},
W\cup\{v\})$, where both endpoints are in $V \cup W$ and all non-endpoints are non-colliders in $L$.  $\pi'$ is the subset of $\pi$ 
formed by removing paths containing an edge with an arrowhead at $v$ 
(note that since $v \notin L$, $v$ can only occur as an endpoint).

By definition of latent projections, there is a bijection that associates each edge $e$
in $E$, with a subset of paths in $\pi$ with the same endpoints as $e$, and the same
starting and ending orientations as $e$.  These subsets partition $\pi$.
Applying $\phi_v$ to $\G(L \cup V, W)$ means that only those  paths in $\pi'$ are left in the resulting
graph. Paths in $\pi'$ are only in subsets of $\pi$ associated with edges in $E'$ (by the bijection).
Applying $\sigma_L$ to the graph then results in the edge set $E'$.  This establishes our
conclusion.
\end{proof}

\begin{lemma}
\label{lem:group-dist}
Assume $q_{L \cup V}(x_{L \cup V} \cmid x_W)$ is in ${\cal P}^{\rm c}(\G(L \cup V, W))$ for a CDAG $\G(L \cup V, W)$.
Then
\vspace{-6pt}
\begin{align*}
q_{L \cup V}(x_{V} \cmid x_W)
&=
	\prod_{D \in {\cal D}(\G(V, W))}
\left(
	\sum_{x_{L_D}}
	\prod_{a \in D \cup L_D} q_{L \cup V}(x_a \cmid x_{\pa_{\G(L \cup V,W)}(a)})
\right)\\[2pt]
&=
	\prod_{D \in {\cal D}(\G(V, W))}
\left(
	\prod_{a \in D} q_{L \cup V}(x_a \cmid x_{\pre_{\prec,\G(V,W)}(a)})
\right)
\end{align*}
where $L_D = \an_{\G(L \cup V, W)_{D \cup L}}(D) \cap L$, and $\prec$ is any topological ordering for $\G(V,W)$.
\end{lemma}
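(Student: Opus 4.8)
The plan is to deduce both equalities from the ordinary factorization available because $\G(H\cup O,W)$ is a CADG, namely $q_{H\cup O}(x_{H\cup O}\mid x_W)=\prod_{a\in H\cup O}q_{H\cup O}(x_a\mid x_{\pa_{\G(H\cup O,W)}(a)})$, together with the combinatorics of how $\sigma_H$ forms the districts of $\G(O,W)$. The first thing I would establish are two purely graphical facts. Writing $H_0$ for the set of latents with no descendant in $O$: (i) the sets $H_0$ and $\{H_D\mid D\in{\cal D}(\G(O,W))\}$ partition $H$; and (ii) for every $a\in D\cup H_D$ we have $\pa_{\G(H\cup O,W)}(a)\subseteq (D\cup H_D)\cup(\pa_{\G(O,W)}(D)\setminus D)$, while no vertex of $H_0$ is a parent of any vertex of $O\cup\bigcup_D H_D$. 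For coverage in (i), a latent with an observed descendant has an $H$-internal directed path to its closest observed descendant $o$, which witnesses its membership in $H_{\dis_{\G(O,W)}(o)}$; for disjointness, if $h\in H_D\cap H_{D'}$ with $D\neq D'$ one splices the two $H$-internal directed paths from $h$ into $D$ and into $D'$ at their last common vertex to obtain a path all of whose non-endpoints are non-colliders in $H$ and whose two endpoints (one in $D$, one in $D'$) carry arrowheads, hence a bidirected edge of $\G(O,W)$ between them --- contradicting that $D$ and $D'$ are distinct districts. Fact (ii) follows by prepending an edge $b\to a$ to the path witnessing $a\in D\cup H_D$: if $b\in H$ this shows $b\in H_D$, and if $b\notin H$ it produces an edge $b\to d$ of $\G(O,W)$ with $d\in D$, so $b\in\pa_{\G(O,W)}(D)\setminus D$; and a vertex of $H_0$ cannot be a parent of anything in $O\cup\bigcup_D H_D$ since that would give it an observed descendant.

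Given (i)--(ii), the first equality is bookkeeping. Group the CADG factors of $q_{H\cup O}$ according to the partition class of each vertex. The $H_0$-block sums to $1$ (summing out its vertices in reverse topological order, each factor conditions only on strictly earlier vertices), and by the last clause of (ii) no surviving factor depends on $x_{H_0}$; this leaves $q_{H\cup O}(x_O\mid x_W)=\sum_{x_{\bigcup_D H_D}}\prod_D\prod_{a\in D\cup H_D}q_{H\cup O}(x_a\mid x_{\pa_{\G(H\cup O,W)}(a)})$. By (ii) the $D$-block depends only on $x_{(D\cup H_D)\cup(\pa_{\G(O,W)}(D)\setminus D)}$, and this set is disjoint from $H_{D'}$ for every $D'$, so the summation splits over blocks, which is precisely the first displayed equality.

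For the second equality I would identify the $D$-block, call it $\widetilde q_D$, with the canonical district factor of $\G(O,W)$ and then enlarge its conditioning sets. By Corollary \ref{thm:proj_preserves_sep} the m-separations among $O\cup W$ are the same in $\G(O,W)^{|W}$ and in $\G(H\cup O,W)^{|W}$, so $q_{H\cup O}(x_O\mid x_W)\in{\cal P}^c_m(\G(O,W))={\cal P}^c_f(\G(O,W))$ by Theorem \ref{thm:one-step-markov}. The first equality exhibits a factorization $q_{H\cup O}(x_O\mid x_W)=\prod_D\widetilde q_D$ in which each $\widetilde q_D$ is a kernel in $x_D$ given $x_{\pa_{\G(O,W)}(D)\setminus D}$ --- it is a function of those variables by (ii), and $\sum_{x_D}\widetilde q_D=\sum_{x_{D\cup H_D}}\prod_{a}q_{H\cup O}(x_a\mid x_{\pa_{\G(H\cup O,W)}(a)})=1$ by the same telescoping over a topological order of $D\cup H_D$ --- so, combining it with any valid choice of factors for the remaining ancestral sets, Lemma \ref{lem:markovfact} forces $\widetilde q_D=\prod_{d\in D}q_{H\cup O}(x_d\mid x_{T^{(d,D)}_\prec})$ where $T^{(d,D)}_\prec=\mb_{\G(O,W)}(d,\,\an_{\G(O,W)}(D)\cap\pre_{\prec,\G(O,W)}(d))$. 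It then remains to replace each conditioning set by the full $\pre_{\prec,\G(O,W)}(d)$. The key point is that districts only shrink under restriction, so for every $d\in D$ the district of $d$ in $\G(O,W)$ restricted to the ancestral set $A_d:=\{d\}\cup\pre_{\prec,\G(O,W)}(d)$ equals its district in the smaller ancestral set $\{d\}\cup(\an_{\G(O,W)}(D)\cap\pre_{\prec,\G(O,W)}(d))$ --- both are contained in, and hence equal to, the bidirected-connected subset of $D$ reachable from $d$ through predecessors of $d$ --- and likewise for the parents of that district; by Proposition \ref{prop:mb} the two Markov blankets of $d$ therefore coincide, namely $T^{(d,D)}_\prec=\mb_{\G(O,W)}(d,A_d)$, so applying the ordered local Markov property for $q_{H\cup O}(x_O\mid x_W)$ at $d$ in $A_d$ gives $q_{H\cup O}(x_d\mid x_{\pre_{\prec,\G(O,W)}(d)})=q_{H\cup O}(x_d\mid x_{\mb_{\G(O,W)}(d,A_d)})=q_{H\cup O}(x_d\mid x_{T^{(d,D)}_\prec})$, which turns $\widetilde q_D=\prod_{d\in D}q_{H\cup O}(x_d\mid x_{T^{(d,D)}_\prec})$ into the second displayed equality.

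The main obstacle is the soft graph-theoretic content rather than any analytic step: making the path-splicing and edge-prepending arguments in (i)--(ii) fully rigorous (in particular handling latents in $H_0$ and vertices that enter $\pa_{\G(O,W)}(D)$ only through latent-only directed paths), and establishing the stability of the relevant Markov blanket under ancestral restriction used in the conditioning-set enlargement. Once those graph facts are in hand, the remainder is a routine application of the CADG factorization, Corollary \ref{thm:proj_preserves_sep}, Theorem \ref{thm:one-step-markov}, Proposition \ref{prop:mb} and Lemma \ref{lem:markovfact}.
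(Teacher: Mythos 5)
Your proof is correct, but it is genuinely different from the paper's, because the paper offers no argument at all: its ``proof'' is a one-line deferral to the $W=\emptyset$ case in \citet{tian02on}. Your first equality is essentially Tian's original argument made explicit --- partitioning $H$ into $H_0$ and the $H_D$, checking disjointness of the $H_D$ by the path-splicing argument (which correctly produces a bidirected edge of $\G(O,W)$ between distinct districts, a contradiction), and then letting the sum split over blocks; this is the real content of the lemma and your sketch of facts (i)--(ii) is sound. For the second equality you route through the paper's internal machinery --- Corollary \ref{thm:proj_preserves_sep} and Theorem \ref{thm:one-step-markov} to get $q_{H\cup O}(x_O\mid x_W)\in{\cal P}^c_f(\G(O,W))$, then Lemma \ref{lem:markovfact} to identify each block $\widetilde q_D$ with $\prod_{d\in D}q(x_d\mid x_{T^{(d,D)}_\prec})$ --- whereas Tian proves it by a direct telescoping induction. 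Your route buys a stronger, factor-by-factor identification (which is in fact what Lemma \ref{lem:fix_marg_commute_kernel} later needs), at the cost of leaning on the uniqueness of the district factors that Lemma \ref{lem:markovfact} implicitly asserts. Two small remarks: the final conditioning-set enlargement you derive by hand is already equation (\ref{eq:markovb}) of Lemma \ref{lem:markovfact} applied with $A$ equal to the full vertex set, so that paragraph can be replaced by a citation; and if one only wants the displayed equality of \emph{full products} (as literally stated), the second line follows from the first plus the chain rule, since $\prod_D\prod_{a\in D}q(x_a\mid x_{\pre_{\prec}(a)})=\prod_{a\in O}q(x_a\mid x_{\pre_\prec(a)})=q(x_O\mid x_W)$ once the districts are recognized as a partition of $O$.
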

In other words, the margin over the observed variables
factorizes into separate kernels for each district of
the latent projection.
\begin{proof}
Simple extension of the proof for $W = \emptyset$ found in \citep{tian02on}.
\end{proof}

\begin{lema}{\ref{lem:fix_marg_commute_kernel}}
Let $\G(L \cup V, W)$ be a CDAG, and
assume 
$q_{L \cup V}(x_{L \cup V} \cmid x_W) \in {\cal P}^{\rm c}_{\rm f}(\G(L \cup V, W))$.
Assume $v \in V$ is fixable in $\G(V, W) =
\sigma_L(\G(L \cup V, W))$.
Then
\[
\sum_{x_L}\phi_v(q_{L \cup V}(x_{L \cup V} \cmid x_W); \G(L  \cup V, W)) =
\phi_v(q_{L \cup V}(x_{V} \cmid x_W); \sigma_L(\G(L \cup V, W))).
\]
That is, the following commutative diagram holds:
\begin{center}
\begin{tikzpicture}
\node (og) {$q_{L \cup V}(x_{L \cup V} \cmid x_W)$};
\node (fg) [below of=og, yshift=-1.2cm] {$q_{(L \cup V)\setminus\{v\}}
	(x_{(L \cup V)\setminus \{v\}} \cmid x_{W\cup\{v\}})$};

\node (pg) [right of=og, xshift=5.0cm] {$q_{L \cup V}(x_V \cmid x_W)$};
\node (fpg) [right of=fg, xshift=5.0cm] {$q_{V\setminus\{v\}}
	(x_{V\setminus \{v\}} \cmid x_{W\cup\{v\}})$};

\draw[->] (og) to node[left] {$\phi_v(.;\G(L\cup V, W))$} (fg);
\draw[->] (og) to node[above] {$\sum_{x_L}$} (pg);
\draw[->] (fg) to node[below] {$\sum_{x_L}$} (fpg);
\draw[->] (pg) to node[right] {$\phi_v(.;\G(V, W))$} (fpg);
\end{tikzpicture}
\end{center}

\end{lema}

\begin{proof}
$\phi_v \to \Sigma_{x_{L}}$ direction:

Since $q_{L \cup V}(x_{L \cup V} \cmid x_W) \in {\cal P}^{\rm c}_{\rm f}(\G(L \cup V, W))$,
we have
\[
q_{L \cup V}(x_{L \cup V} \cmid x_W) =
	\prod_{a \in L \cup V}
	q_{L \cup V}(x_a \cmid x_{\pa_\G(a)}).
\]
This implies by Lemma \ref{lem:fixing_in_dags} that
\[
\phi_v(q_{L \cup V}(x_{L \cup V} \cmid x_W); \G(L \cup V, W)) =
	\prod_{a \in (L \cup V) \setminus \{ v \}}
	q_{L \cup V}(x_a \cmid x_{\pa_\G(a)}),
\]
which implies 
$\phi_v(q_{L \cup V}(x_{L \cup V} \cmid x_W); \G(L \cup V, W)) \in
{\cal P}^{\rm c}_{\rm f}(\phi_v(\G(L \cup V, W)))$.

Then by Lemma \ref{lem:group-dist},
\begin{align}
\sum_{x_L}
\phi_v(q_{L \cup V}(x_{L \cup V} \cmid x_W); \G(L \cup V, W)) =
	\prod_{D \in {\cal D}(\phi_v(\G(V, W)))}
\left(
	\prod_{a \in D} q_{L \cup V}(x_a \cmid x_{\pre_{\prec,\G}(a)})
\right).
\label{eq:fix-sum-terms}
\end{align}

$\Sigma_{x_{L}} \to \phi_v$ direction:

Similarly, by Lemma \ref{lem:group-dist},
\[
\sum_{x_L}
q_{L \cup V}(x_{L \cup V} \cmid x_W) = q_{L \cup V}(x_V \cmid x_W) =
	\prod_{D \in {\cal D}(\G(V, W))}
\left(
	\prod_{a \in D} q_{L \cup V}(x_a \cmid x_{\pre_{\prec,\G}(a)})
\right)
\]

Now let $D^v$ be the element of ${\cal D}(\G(V,W))$ such that $v \in D^v$.  Then by Proposition \ref{prop:fix-dist-marg},
\begin{align}
\notag
\phi_v(q_{L \cup V}(x_V \cmid x_W); \G(V,W)) =&
\left(
	\prod_{a \in D^v \setminus \{ v \}} q_{L \cup V}(x_a \cmid x_{\pre_{\prec,\G}(a)})
\right) \cdot\\
&	\prod_{D \in {\cal D}(\G(V, W)) \setminus \{D^v\}}
\left(
	\prod_{a \in D} q_{L \cup V}(x_a \cmid x_{\pre_{\prec,\G}(a)})
\right)
\label{eq:sum-fix-terms}
\end{align}

Since, by Proposition \ref{prop:districts-after-fixing}, ${\cal D}(\phi_v(\G(V, W))) = ({\cal D}(\G(V, W)) \setminus \{D^v\})\, {\dot{\cup}}\, {\cal D}({\cal G}(V,W)_{D^v \setminus \{ v \}})$,
the right hand sides of (\ref{eq:fix-sum-terms}) and (\ref{eq:sum-fix-terms}) are equal.
\end{proof}

\begin{thma}{\ref{thm:dags_in_nested}}
Let $\G(V \cup L)$ be a DAG.  Then
\begin{align*}
p(x_{V \cup L}) \in {\cal P}^{\rm d}(\G(V \cup L))\quad \Rightarrow \quad
p(x_{V}) \in {\cal P}^{\rm n}(\G(V)).  
\end{align*}
\end{thma}
\begin{proof}
Assume $p(x_{V \cup L}) \in {\cal P}^{\rm d}(\G(V \cup L))$, and for a set
$R$ reachable in $\G(V)$ with $A \subseteq R$ and $B,C \subseteq V$ ($C$
possibly empty), suppose that $A$ is m-separated from $B$ given $C$ in
$\phi_{V \setminus R}(\G(V))^{|V\setminus R}$.

By an inductive application of Lemma \ref{lem:fix_marg_commute_graph},
$\phi_{V \setminus R}(\G(V))$
is a latent projection of
$\phi_{(L \cup V)\setminus R}(\G(L \cup V))$.
Therefore, by Lemma
\ref{lem:proj_preserves_sep}, $A$ is m-separated from $B$ given $C$ in
$\phi_{(L \cup V) \setminus R}(\G(L \cup V))^{|(L \cup V)\setminus R}$.
Our assumption, and Corollary \ref{cor:dag_in_nested} then imply
$X_A \ci X_B \mid X_C$ holds in the kernel $\phi_{(L \cup V)\setminus R}(p(x_{L \cup V}); \G(L \cup V))$.
By an inductive application of Lemma \ref{lem:fix_marg_commute_kernel},
\[
\sum_{x_L} \phi_{(L \cup V)\setminus R}(p(x_{L \cup V}); \G(L \cup V))
=
\phi_{V \setminus R}(p(x_V); \G(V))
\]
and thus $X_A \ci X_B \mid X_C$ holds in 
$\phi_{V\setminus R}(p(x_{V}); \G(V))$.
Our conclusion follows.
\end{proof}

\begin{lemma} \label{lem:reachable-id}
Let $\G(L \cup V)$ be a hidden variable causal DAG.  For any set $S$ reachable
from $\G(V)$, the interventional distributions $p(x_S \cmid \Do_{\G(L \cup V)}(x_{V \setminus S}))$ 
are identifiable from $p(x_V)$ by the kernel $\phi_{V \setminus S}(p(x_V);\G(V))$, which depends only on
$x_S$ and $x_{\pa_{\G(V)}(S) \setminus S}$.
\end{lemma}
\begin{proof}
Our conclusion follows by (\ref{eq:g-formula-fix}) and an inductive application of
Lemma \ref{lem:fix_marg_commute_kernel}.
That the kernel $\phi_{V \setminus S}(p(x_V);\G)$ only depends on $x_S$ and $x_{\pa(S) \setminus S}$
follows by the global nested Markov property, and Theorem \ref{thm:dags_in_nested}.
\end{proof}

\begin{lemma} \label{lem:canonical-do}
Let $\G(L \cup V)$ be a hidden variable causal DAG.
For any $Y \subseteq V\setminus A$, let
$A_Y = \an_{\phi^*_{A}(\G)}(Y) \cap A$.
Then
\[
p(x_Y \cmid \Do_{\G}(x_A)) = p(x_Y \cmid \Do_{\G}(x_{A_Y})).
\]
\end{lemma}
\begin{proof}
Follows by (\ref{eq:g-formula}) and the global Markov property for CDAGs.
\end{proof}

\begin{thma}{\ref{thm:1-line-id}}
Let $\G(L \cup V)$ be  a causal DAG with latent projection $\G(V)$.  For $A\dotcup Y\subseteq V$, 
let $Y^* = \an_{\G(V)_{V \setminus A}}(Y)$.  Then if ${\cal D}(\G(V)_{Y^*}) \subseteq {\cal I}(\G(V))$,
\begin{align}
p(x_Y \cmid \Do_{\G(L \cup V)}(x_A)) &= \sum_{x_{Y^* \setminus Y}} \prod_{D \in {\cal D}(\G(V)_{Y^*})} p(x_D \cmid \Do_{\G(L \cup V)}(x_{\pa_{\G(V)}(D)\setminus D})) \nonumber\\
&= \sum_{x_{Y^* \setminus Y}} \prod_{D \in {\cal D}(\G(V)_{Y^*})} \phi_{V \setminus D}(p(x_V);\G(V)).
\tag{\ref{eqn:1-line-id}}
\end{align}
If not, there exists $D \in {\cal D}(\G(V)_{Y^*})$ that is not intrinsic in $\G(V)$, and $p(x_Y \mid \Do_{\G(L \cup V)}(x_A))$ is not identifiable in $\G(L \cup V)$.
\end{thma}


\begin{proof} We first prove (\ref{eqn:1-line-id}).
Let $A^* = V \setminus Y^*\supseteq A$.
By Lemma \ref{lem:canonical-do} we have 
\[
p(x_{Y^*} \cmid \Do_{\G(L \cup V)}(x_A)) = p(x_{Y^*} \cmid \Do_{\G(L \cup V)}(x_{A^*})).
\]

Let $\G^*(L \cup (V \setminus A^*), A^*) = \phi_{A^*}(\G(L \cup V))$; note that since $\G(L\cup V)$ is a DAG, $A^*$ is fixable in
$\G(L \cup V)$, and $\G^*(L \cup (V \setminus A^*), A^*)$ is a CDAG.
By Corollary \ref{cor:mut_marg_commute_graph}, $\sigma_L(\phi^*_{A^*}(\G(L \cup V))) = \phi^*_{A^*}(\sigma_L(\G(L \cup V)))$,
where $\sigma_L$ is the latent projection operation, that is $\sigma_L(\G(L \cup V)) = \G(V)$.
Since $\G^*(Y^*,A^*) = \sigma_L(\phi^*_{A^*}(\G(L \cup V))) = \phi^*_{A^*}(\G(V))$. 
Then, by definition of induced subgraphs and $Y^*$,
$\G(V)_{Y^*} = (\phi^*_{A^*}(\G(V)))_{Y^*}$, and $\G(V)_{Y^*} = \G^*(Y^*,A^*)_{Y^*}$.
Consequently, ${\cal D}(\G(V)_{Y^*}) = {\cal D}(\G^*(Y^*,A^*))$.

For every $D \in {\cal D}(\G^*(Y^*,A^*))$, define $L_D \equiv L \cap \an_{\G(L \cup V)_{D \cup L}}(D)$, and let
$L^* = \bigcup_{D \in {\cal D}(\G^*(Y^*,A^*))} L_D$. Thus $L_D$ is the set of variables $h \in L$, for which there exists a vertex
$d \in D$ and a directed path $h \rightarrow \cdots \rightarrow d$ in $\G(L\cup V)$ on which, excepting $d$, all vertices are in $L$.

It follows from the construction that:
\begin{itemize}
\item[(a)] if $D,D^\prime \in {\cal D}(\G^*(Y^*,A^*))$, and $D \neq D^\prime$ 
 then $L_D \cap L_{D^\prime}\!=\! \emptyset$; 
 \item[(b)] for each $D \in \mathcal{D}(\G^*(Y^*,A^*))$ we have $\pa_{\G(L \cup V)}(D \cup L_D) \cap L^* = L_D$;
 \item[(c)]
 $Y^* \cup L^*$ is ancestral in $\G(L \cup V)_{(L \cup V)\setminus A}$, so if 
$v \in Y^* \cup L^*$,  $\pa_{\G(L \cup V)}(v)\cap L\subseteq L^*$.
\end{itemize}
We now have:
\begin{align}
\MoveEqLeft{p(x_{Y^*} \cmid \Do_{\G(L \cup V)}(x_{A^*}))}\nonumber\\
&=
\sum_{x_L} \prod_{v \in L \cup Y^*} p(x_v \cmid x_{\pa_{\G(L \cup V)}(v)})\nonumber\\
&=
\sum_{x_{L^*}} \prod_{v \in L^* \cup Y^*} p(x_v \cmid x_{\pa_{\G(L \cup V)}(v)})
\underbrace{\sum_{x_{L \setminus L^*}} \prod_{v \in L\setminus L^*} p(x_v \cmid x_{\pa_{\G(L \cup V)}(v)})}_{=1}
\nonumber\\
&=
\sum_{x_{L^*}}\prod_{D \in {\cal D}(\G^*(Y^*,A^*))} \prod_{v \in D \cup L_D} p(x_v \cmid x_{\pa_{\G(L \cup V)}(v)})\nonumber\\
&=
\prod_{D \in {\cal D}(\G^*(Y^*,A^*))}\left( \sum_{x_{L_D}} \prod_{v \in D \cup L_D} p(x_v \cmid x_{\pa_{\G(L \cup V)}(v)})\right). \label{eq:ystar-into-districts}%
\end{align}
Here, the first equality follows from (\ref{eq:g-formula-fix}), the second follows from (c), the third from (a), and the fourth from (b).
%
%
Now, for any given $D \in {\cal D}(\G^*(Y^*,A^*))$,
\begin{align}
\MoveEqLeft{\sum_{x_{L_D}} \prod_{v \in D \cup L_D} p(x_v \cmid x_{\pa_{\G(L \cup V)}(v)})}\nonumber\\
&= \sum_{x_{L_D}} \prod_{v \in D \cup L_D} p(x_v \cmid x_{\pa_{\G(L \cup V)}(v)})
\underbrace{\sum_{x_{L\setminus L_D}} \prod_{v \in L\setminus L_D} p(x_v \cmid x_{\pa_{\G(L \cup V)}(v)})}_{=1}\nonumber\\[-4pt]
&= \sum_{x_{L}} \prod_{v \in D \cup L} p(x_v \cmid x_{\pa_{\G(L \cup V)}(v)})\nonumber\\
&= \sum_{x_{L}} \phi_{V \setminus D}(p(x_{{L} \cup V}); \G({L} \cup V)).\label{eq:id-soundness-second-step}
\end{align}
Here the second line uses that $\pa_{\G(L \cup V)}(D \cup L_D) \cap (L\setminus L_D)\!=\!\emptyset$, 
which follows (b), (c) and the definition of $L_D$.
Since, by hypothesis, $D \in {\cal D}(\G(V)_{Y^*})={\cal D}(\G^*(Y^*,A^*))\subseteq {\cal I}(\G(V))$, it follows from Lemma \ref{lem:fix_marg_commute_kernel}
that 
\begin{equation}
\sum_{x_{L}} \phi_{V \setminus D}(p(x_{{L} \cup V}); \G({L} \cup V)) = 
\phi_{V \setminus D}(p(x_{V}); \G(V)).\label{eq:finally-we-apply-our-commuting-lemma}
\end{equation}
Hence by (\ref{eq:ystar-into-districts}), (\ref{eq:id-soundness-second-step}) and (\ref{eq:finally-we-apply-our-commuting-lemma}),
\[
p(x_{Y^*} \cmid \Do_{\G(L \cup V)}(x_{A^*})) = \prod_{D \in {\cal D}(\G^*(Y^*,A^*))} \phi_{V \setminus D}(p(x_{V}); \G(V)).
\]
%
The conclusion, (\ref{eqn:1-line-id}), then follows since 
\begin{align*}
p(x_Y \cmid \Do_{\G(L \cup V)}(x_A)) = \sum_{x_{Y^* \setminus Y}} p(x_{Y^*} \,|\, \Do_{\G(L \cup V)}(x_{A^*})).
\end{align*}

To establish the last claim, fix $D \in {\cal D}(\G(V)_{Y^*}) \setminus {\cal I}(\G(V))$, 
and let $D^* \equiv \langle D \rangle$ be the minimal intrinsic superset of $D$.
Assume, for contradiction, that $D^*$ does not intersect $A^*$.  Then $D^* \subseteq Y^*$.  
But since $D^*$ is intrinsic, it must
be a subset of some $D' \in {\cal D}(\G(V)_{Y^*})$.  But this is impossible since $D \subsetneq D^*$, and
$D \in {\cal D}(\G(V)_{Y^*})$.  Thus $D^*$ intersects $A^*$.

Let $R = \{ v \in D^* \mid \ch_{\phi_{V \setminus D^*}(\G(V))}(v) = \emptyset \}$.  If $R$ is not a subset of $D$,
$D^*$ could not be the minimal intrinsic superset of $D$, since any element in $R \setminus D$ is fixable
in $\phi_{V \setminus D^*}(\G(V))$. Finally, note that by construction, $D \subsetneq D^*$,
$D \cap A^* = \emptyset$, and
$R \subseteq Y^* \subseteq \an_{\phi^*_{A}(\G(V))}(Y)$.
This implies $D$ and $D^*$ satisfy the definition of a \emph{hedge} for $p(x_{Y} \cmid \Do_{\G(L \cup V)}(x_{A^*})) =
p(x_{Y} \cmid \Do_{\G(L \cup V)}(x_A))$ in $\G(V)$ \citep{shpitser06id}.

Let $Y_D$ be the smallest subset of $Y$ such that $D \subseteq \an_{\G(V)_{V \setminus A}}(Y_D)$.
By Theorem 4 in \citep{shpitser06id}, $p(x_{Y_D} \cmid \Do_{\G(L^\dag \cup V)}(x_{\pa_{\G(V)}(D) \setminus D}))$ is not identifiable
in a canonical hidden variable causal DAG $\G(L^\dag \cup V)$, where $L^\dag$ consists of a hidden variable for
every bidirected arc in $\G(V)$ (see, e.g.~\citet{richardson:2002} \S6 for a formal definition of $\G(L^\dag \cup V)$).
Counterexamples witnessing this non-identification can be easily extended to counterexamples
witnessing non-identification of $p(x_{Y} \cmid \Do_{\G(L \cup V)}(x_{A}))$ in $\G(L \cup V)$ as follows.
First, we restrict attention to
a submodel of the causal model represented by $\G(L^\dag \cup V)$ where
\[
p(x_{Y} \cmid \Do_{\G(L^\dag \cup V)}(x_{\pa_{\G(V)}(D) \setminus D})) = p(x_{Y} \cmid \Do_{\G(L^\dag \cup V)}(x_{A})).
\]
We then note that it follows from Theorem 2 in \citep{evans:complete} that the model associated with  $\G(L^\dag \cup V)$ is a submodel of that
associated with  $\G(L \cup V)$. 
This immediately implies $p(x_{Y} \cmid \Do_{\G(L \cup V)}(x_{\pa_{\G(V)}(D) \setminus D}))$ is also not identifiable in
$\G(L \cup V)$.
This completes the proof.%
\end{proof}

\begin{cora}{\ref{cor:admg-id}}
Let $\G^1(L^1 \cup V)$ and $\G^2(L^2 \cup V)$ be two causal DAGs, with the same latent projection, so 
$\G^1(V) = \G^2(V)$.  Then for any $A\dot{\cup}Y \subseteq V$:
\begin{itemize}
\item[{\rm (i)}] $p(Y \cmid \Do_{\G^1}(A))$ is identified if and only if $p(Y \cmid \Do_{\G^2}(A))$ is identified;
\item[{\rm (ii)}] if $p(Y \cmid \Do_{\G^1}(A))$ is identified, then $p(Y \cmid \Do_{\G^1}(A)) = 
p(Y \cmid \Do_{\G^2}(A))$.
\end{itemize}
\end{cora}
\begin{proof}
Follows directly by Theorem \ref{thm:1-line-id}, since $Y^*$, ${\cal D}(\G(V)_{Y^*})$,  ${\cal I}(\G(V))$ and
the terms on the RHS of (\ref{eqn:1-line-id}) are
defined solely in terms of the latent projection.
\end{proof}

\begin{algorithm}[p!]
\algorithmicrequire{\hspace{0.2cm}\parbox[t]{.7\linewidth}{
$\G(V)$ : an ADMG over a vertex set $V$,\\
$p(x_V)$ : a density over $x_V$,\\
$\prec$ : a total topological ordering on $V$.}}\\

\algorithmicensure{\hspace{0.2cm}\parbox[t]{.7\linewidth}{
A list of constraints on $p(x_V)$ implied by $\G(V)$.}}
\begin{algorithmic}[1]
\Procedure{Find-Constraints}{$\G(V), p(x_V), \prec$}
\State ${\bf L} \gets \{ \}$
\ForAll{$v \in V$}
	\State $\Pre_v \gets \pre_{\G(V), \prec}(v) \cup \{v\}$;
	\State {\bf let} $S \in {\cal D}(
	\G[\Pre_v]
	)$ be the unique district s.t.~$v \in S$;
	\If{$\Pre_v \setminus (\mb(v, \Pre_v)\cup\{v\}) \neq \emptyset$}
	\State ${\bf L} \gets {\bf L} \cup
		``
(X_{v} \ci X_{\Pre_v
		\setminus (\mb(v, \Pre_v)\cup\{v\})} \mid
		X_{\mb(v, \Pre_v)} )
\, [
p(x_{\Pre_v})]$'';

	\EndIf
	\State ${\bf L} \gets {\bf L} \cup
	\Call{Node-Constraints}{v, 
	\phi_{\Pre_v \!\setminus S}(
	\G[\Pre_v]
	),
	\phi_{\Pre_v \! \setminus S}(p(x_{\Pre_v});
		\G[\Pre_v]
		})$;
\EndFor
\State \textbf{return} {\bf L}.
\EndProcedure
\end{algorithmic}
\caption{The constraint-finding algorithm in \citep{tian02on} expressed
in the CADMG and kernel notation used in this manuscript.
$v$ is an element of $V$.}
\label{alg:tian}
\end{algorithm}

\begin{algorithm}[p!]
\algorithmicrequire{\hspace{0.2cm}\parbox[t]{.7\linewidth}{
$\G(S, W)$ : a CADMG,\\
$v$ a vertex in $S$ with no children in $\G$,\\
$q_{S}(x_{S} \cmid x_W)$ : a kernel associated with $\G$, \\
}}\\
\algorithmicensure{\hspace{0.2cm}\parbox[t]{.7\linewidth}{
A list of constraints on $q(x_{S} \cmid x_W)$.}}
\begin{algorithmic}[1]
\Procedure{Node-Constraints}{$v, \G(S, W), q_{S}$}
\State ${\bf L} \gets \{\}$;
\For{ every $\emptyset \subset D \subset S$
	closed under descendants in $\G$, s.t.
	$v \not\in D$}
	\State {\bf let} $D' \gets S \setminus D$; $q_{D'} \gets \phi_{D}(q_{S}; \G(S, W))$;
	\If{$
(\pa_\G(S) \setminus S) \setminus \pa_\G(D')
\neq \emptyset$}
	\State ${\bf L} \gets {\bf L}
		\cup
		``(X_{D'} \ci X_{
(\pa_\G(S) \setminus S) \setminus \pa_\G(D')
			}
\mid X_{
\pa_\G(D') \setminus D'
}
			) \, {[q_{D'}]}$'';
	\EndIf
	\State {\bf let} $E \in
		{\cal D}(
		\phi_{D}(\G(S,W))
		)$, be the unique district
		s.t.~$v \in E$;
	\If{$|{\cal D}(
	\phi_{D}(\G(S,W))
	)| > 1$
	{\bf and} $(D' \cup \pa_\G(D'))
		\setminus (\mb_\G(v, E) \cup \{ v \})
		\neq \emptyset$}
	\State ${\bf L} \gets {\bf L}
		\cup
		``
		(X_{v} \ci X_{(D' \cup \pa_\G(D'))
		\setminus (\mb_\G(v, E) \cup \{ v \})}
		\mid X_{\mb_\G(v, E)})
 \, {[q_{D'}]}
		$'';
	\EndIf
	\State ${\bf L} \gets {\bf L}$
		$\cup$
		\Call{Node-Constraints}{$v,
		\phi_{S \setminus E}(\G),
		\phi_{D' \setminus E}(q_{D'}; \phi_{D}(\G(S,W)))
		$};
	
\EndFor
\State {\bf return} {\bf L}
\EndProcedure
\end{algorithmic}
\caption{A subroutine of Algorithm \ref{alg:tian} which recursively finds constraints
associated with a particular vertex $v$.}
\label{alg:tian_sub}
\end{algorithm}

\section{Connections with Tian's Constraint Algorithm}
\label{app:sec:tian}

\citet{tian02on} gave an algorithm for deriving constraints from a latent variable model.
In this section we show that a translation of this algorithm into the kernel framework used in this paper may be used to derive a set of constraints that define the nested Markov model. However, as we illustrate below in \S \ref{sec:constraint-example} this set of constraints
is  longer and hence more redundant than the ordered local nested property. At the same time
there are also constraints arising from the global nested property that are not given by Tian's algorithm.
Below, we will call the algorithm in \citep{tian02on}
``Tian's algorithm,'' and our reformulation ``Algorithm \ref{alg:tian}.''

Whereas Tian's algorithm takes a latent variable DAG $\G(L \cup V)$ as input, Algorithm \ref{alg:tian} takes an
ADMG $\G(V)$ (without necessarily presupposing the existence of an underlying hidden variable DAG corresponding to this ADMG ${\cal G}(V)$);
both algorithms require as inputs a topological order $\prec$ on $V$, and the observed
marginal distribution $p(V)$.
Given these inputs, both Tian's algorithm, and Algorithm \ref{alg:tian} 
inductively construct a list of constraints that is initially empty. 

The first set of steps of Tian's algorithm lists constraints for
every $v \in V$ among the set $\Pre_v \equiv \{ v \} \cup \pre_{\G(V),\prec}(v)$.
The constraints enumerated in the step (A1) are that each $v$ is independent of
variables outside its Markov blanket in $\G(\Pre_v)$, conditional on
its Markov blanket.
These steps correspond exactly to those in
Algorithm \ref{alg:tian}, which enumerates the same list of constraints.

The second part (A2) of Tian's algorithm is recursive, taking
as input a subgraph $\G'(L' \cup S)$ obtained from the original hidden variable
graph $\G(L \cup V)$, as well a q-factor $Q[S]$ obtained from the original marginal
distribution $p(V)$.  Here $L'$ is a subset of $L$ consisting of hidden variables relevant for $S$, defined as
ancestors of $S$ in the subgraph $\G'(L \cup V)_{S \cup L}$.
Each step (A2) considers constraints 
in certain q-factors obtained from the q-factor $Q[S]$. 

(A2) iterates over all observable subsets $D$ of $S$ closed under
descendants,\footnote{\citet{tian02on} use $D$ for sets that are closed under descendants, \emph{not} 
to indicate a district.} and possibly adds a constraint on the margin of $Q[S]$ involving variables $D' \equiv S \setminus D$; i.e.\ 
$Q[D'] = Q[S \setminus D] = \sum_{x_D} Q[S]$.
This constraint 
is based on the observation
that some observed parents of $D$ in the graph $\G(V)$,
which \citet{tian02on} call ``effective parents'' may not be
parents of $D'$, which implies that the q-factor 
$q_{D'} = Q[D'] = \sum_{x_D} Q[S]$ is independent of these missing parents.
This step, outlined in the first paragraph of the description of (A2) in \citep{tian02on}, corresponds to lines 3-7 of Algorithm \ref{alg:tian_sub}.
Note that in our reformulation the marginalization operation above is instead represented by the fixing operator $\phi_{D}(q_S; \G(S, W))$.


Every call to (A2) of Tian's algorithm potentially adds another constraint associated with
the set $D'$.
This constraint is based on examining the latent projection $\G'(D')$ obtained from the DAG subgraph
$\G'(L' \cup S)$ of the original DAG $\G(L \cup V)$.  Specifically, if $\G'(D')$ has more than one district,
then $Q[D']/\sum_{x_{v}} Q[D']$ is only a function of $x_{\mb(v,E)}$, where $E$ is the district in
$\G'(D')$ containing $v$.
This step, outlined in the second paragraph of the description of (A2) in \citep{tian02on}, corresponds to lines 8-11 of Algorithm \ref{alg:tian_sub}.

Finally, Tian's algorithm is called recursively with $v$,
\begin{align*}
Q[E] =
\prod_{z \in E}
\frac{
Q[D'_z]
}{
\sum_z Q[D'_z]
},
\end{align*}
where $Q[D'_z] \equiv \sum_{D' \setminus (\{ z \} \cup \pre_{\G'(D'),\prec}(z))} Q[D']$,
and the corresponding subgraph of $\G(L \cup V)$.
The final recursive call is outlined in the third paragraph of the description of (A2) in \citep{tian02on}, and corresponds to line 12 of Algorithm \ref{alg:tian_sub}.
Note that the above formula expressing $Q[E]$ in terms of $Q[D']$ is represented by the fixing operator $\phi_{D'\setminus E}(q_{D'}; \phi_{D}(\G(S, W)))$.

Table \ref{tab:tian_vs_us} displays the mapping between objects used in steps (A1) and (A2) of Tian's algorithm, and our reformulation via
Algorithms \ref{alg:tian} and \ref{alg:tian_sub}.

\begin{table}
{
\begin{center}
\begin{tabular}{l|l|l}
\multicolumn{1}{c|}{objects in} & \multicolumn{1}{c|}{\multirow{2}{*}{Tian's (A2)}} & \multicolumn{1}{c}{\multirow{2}{*}{Algorithm 2}}\\[-3pt]
\multicolumn{1}{c|}{call for $S \in {\cal I}(\G)$} &  & \\
\midrule[\heavyrulewidth]
input subgraph & $\G'(L' \cup S)$ & $\begin{array}{l}\G(S, V \setminus S) = \G(S, W)\\[-4pt]
 \kern35pt = \phi_{V \setminus S}(\G(V))\end{array}$ \\
\midrule
input kernel & $Q[S]$ & $\begin{array}{l}q_S(x_S \,|\, x_{V \setminus S}) = q_S(x_S \,|\, x_W)\\
\kern30pt = \phi_{V \setminus S}(p(x_V); \G(V))
\end{array}$ \\
\midrule
$\begin{array}{l}\hbox{ancestral}\\[-4pt]
\hbox{margin kernel}
\end{array}$ & 
$\begin{array}{lr}
\!\!Q[D']\!\!\!\!&= Q[S \!\setminus\! D]\\
& = \sum\limits_{D} Q[S]
\end{array}$ &
												$\begin{array}{l}
												q_{D'}(x_{D'} \,|\, x_{V \setminus D'})\\
												\; = \phi_{S \setminus D'}(q_S(x_S \,|\, x_W); \G(S, W))\\
												\; = \phi_{V \setminus D'}(p(x_V); \G(V))
												\end{array}$
												\\
\midrule
$\begin{array}{l}\hbox{kernel for}\\[-4pt] 
\hbox{subsequent call }
\end{array}$
 & $\begin{array}{l}Q[E] \\ 
= \displaystyle{\prod\limits_{z \in E}
\frac{
Q[D'_z]
}{
\sum\limits_z Q[D'_z]
}}\end{array}$ & $\begin{array}{l}
	q_{E}(x_{E} \,|\, x_{V \setminus E})\\ 
	\; = \phi_{D' \setminus E}(q_{D'}; \phi_{D}(\G(S,W))) \\
    \; = \phi_{V \setminus E}(p(x_V); \G(V))
	\end{array}$
	\\
\midrule
$\begin{array}{l}\hbox{variables } Q[S] \hbox { or }\\[-4pt] 
 q_S(\cdot ) \hbox{ depends on}\end{array}$ & $\begin{array}{l}\text{Pa}^+(S)\\[-4pt]
  \equiv S \cup \pa_{\G(V)}(S)\end{array}$  & $S \cup \pa_{\phi_{V \setminus S}(\G(V))}(S)$ \\
\end{tabular}
\end{center}
}
\caption{
Differences between the notation used in Algorithms \ref{alg:tian} and \ref{alg:tian_sub}, and steps (A1) and (A2) of Tian's algorithm.
The input to Algorithm 1 is a latent projection $\G(V)$, and the corresponding marginal distribution $p(x_V)$.
The input to (A1) is a hidden variable DAG $\G(V \cup L)$, and the corresponding marginal distribution
$p(x_V) = \sum_{x_L} p(x_V, x_L)$.  
}
\label{tab:tian_vs_us}
\end{table}

While the graph input to the recursive part of Tian's algorithm is always some subgraph $\G'(L' \cup S)$
of the original hidden variable graph $\G(L \cup V)$, the graph input to Algorithm \ref{alg:tian_sub} is the
CADMG $\phi_{V \setminus S}(\G(V))$.  
Here, $L'$ is the subset of latent variables relevant to $S_i$, which are ancestors of $S_i$ in $\G_{S_i \cup L}$.
Lemma \ref{lem:graph_map} below shows that this reformulation
preserves the structure of Tian's algorithm.


%
%
%






\begin{lemma}
Let $\G'(L' \cup S)$ be a DAG obtained 
during some call to step (A2) of Tian's
algorithm from the original DAG $\G(L \cup V)$, where $V$ and $S$
are the sets of observable vertices in $\G$ and $\G'$,
respectively.  Then:

\begin{itemize}
\item[\rm (a)] $S$ is intrinsic in $\G(V)$.

\item[\rm (b)] $\G'(S)$, the latent projection of $\G'(L' \cup S)$ onto $S$, is equal to $\tilde{\G}_S$, where $\tilde{\G} \equiv \phi_{V \setminus S}(\G(V))$.
In other words, $\G'(S)$ is equal to the graph obtained from $\G(V)$ by fixing all vertices outside of $S$ and removing all fixed vertices from the resulting CADMG.

\item[\rm (c)] For any subset $D \subseteq S$ closed under descendants in
$\G'(S)$, $S \setminus D$ is ancestral in $\tilde{\G}_S$, where $\tilde{\G} \equiv \phi_{V \setminus S}(\G(V))$.

\item[{\rm (d)}] A district $E$ in $\G'(S \setminus D)$ is a district in $\phi_{V \setminus (S \setminus D)}(\G(V))$.

\item[{\rm (e)}] The set of ``effective parents'' of $S$ in $\G'(L \cup V)$ is equal to the set of parents of $S$ in $\phi_{V \setminus S}(\G(V))$.
\end{itemize}
\label{lem:graph_map}
\end{lemma}
\begin{proof}
We first note that (e) holds for every intrinsic set $S$ by definition of ``effective parents'' in \citep{tian02on} and definition of latent projections, and the fixing operator on ADMGs.
Thus, establishing (a) for an input set $S$ establishes (e) for $S$.

We prove the claim by induction on the recursive structure of Tian's algorithm.

We first establish that (a), (b), (c), and (d) hold in the base case, which consists of the initial recursive calls to step (A2) in step (A1).
Note that when Tian's algorithm calls step (A2) from step (A1), it is with sets $S_i \equiv S$ which are districts in 
$\G[T]$, where each $T$ is a subset of $V$ that is ancestral in $\G(V)$.  Thus, these sets $S$ are intrinsic by definition, which establishes (a).

The graph $\G'(L' \cup S_i)$ given as input to the initial recursive call of Tian's algorithm is equal to $\G(L \cup V)_{L' \cup S_i}$.  This graph is equal to $\tilde{\G}(L' \cup S_i, (V \cup L) \setminus (S_i \cup L'))_{L' \cup S_i}$ where $\tilde{\G}(L' \cup S_i, (V \cup L) \setminus (S_i \cup L')) \equiv \phi_{(V \cup L) \setminus (S_i \cup L')}(\G(L \cup V))$ by definition of graphical fixing in DAGs.

Since we established $S_i$ is intrinsic in $\G(V)$, $\tilde{\G}(S_i, V \setminus S_i) = \phi_{V \setminus S_i}(\G(V))$ by Lemma \ref{lem:fix_marg_commute_graph}.  This fact, coupled with the fact that the operations of taking latent projections and restricting the graph to random vertices commute in CADMGs establishes (b).
The link between graphs used in recursive calls of Algorithm \ref{alg:tian_sub}, and step (A2) of Tian's algorithm provided by (b) immediately implies (c) and (d).




Assume we are in step (A2) with a set $S$, where (a), (b), (c), and (d) hold by the inductive hypothesis.

Since (c) and (d) hold for $S$, (a) holds for $E$.
Since (a) holds for $E$, we can repeat the above argument for the base case to establish (b) for $E$.
This, in turn, establishes (c) and (d) for $E$.


This establishes the claim.
\end{proof}

%



Step (A2) of Tian's algorithm takes q-factors as inputs.  The analogous inputs for our reformulated Algorithm \ref{alg:tian_sub} are kernels $q_S(x_S \cmid x_W)$ obtained from the original margin $p(x_V)$ by the fixing operator.
Lemma \ref{lem:graph_map} above ensures that every such kernel $q_S(x_S \cmid x_W)$ has the same form as the corresponding q-factor $Q[S]$.  The original formulation of Tian's algorithm did not 
fully describe the order of fixing operations used to construct q-factors.  {Instead, as described above, any particular recursive call with a set $S$ as input first marginalized out a subset $D \subseteq S$, and then constructed the q-factor $Q[E]$ for a subset $E \subseteq S \setminus D$ from the q-factor $\sum_{S \setminus D} Q[S]$.}
Thus, in our translation of Tian's procedure into Algorithms \ref{alg:tian} and \ref{alg:tian_sub}, the order of fixing operations is only partly specified, to be consistent with the specified sequence of q-factor constructions and marginalizations. Specifically, for every set $S$ that serves as input to Algorithm \ref{alg:tian_sub}, the variables in $S \setminus D$, for some set closed under descendants in $\G(S, W)$, are fixed first before variables in $(S \setminus E) \setminus (S \setminus D)$ are fixed, and further variables in $V \setminus S$ are fixed before anything in $S$.

Lemma \ref{lem:graph_map} establishes that Tian's algorithm, and our reformulation via Algorithms \ref{alg:tian} and \ref{alg:tian_sub} are structurally isomorphic.  We now turn to the connection between our reformulated algorithm and the local nested Markov property.



Consider an intrinsic set $S$, and a path $\langle S_0, S_1, \ldots, S \rangle$ in the intrinsic power DAG $\intgr(\G(V))$ starting from the root node $S_0$ corresponding to $S$.  Then there exists a sequence of nested recursive calls to
Algorithm \ref{alg:tian_sub} which use the sets in the above path as inputs in the specified order.  Further, there exists at least one fixing sequence ${\bf w}$ on $V \setminus S$ with the order of operations consistent with these calls, in the sense that all vertices in $S_i \setminus S_{i+1}$ are fixed before vertices in $S_{i+1} \setminus S_{i+2}$, and vertices in $S_i \setminus S_{i+1}$ are fixed according to the restriction described above.


\begin{lemma}
Fix an arbitrary ADMG $\G(V)$.
Then for every intrinsic set $S_k$ in $\G(V)$, and every sequence of intrinsic sets $\langle S_0, \ldots, S_{k-1}, S_k \rangle$ forming a path to $S_k$ from the corresponding root $S_0$ in $\intgr(\G(V))$, there exists a sequence of successive recursive calls to Algorithm \ref{alg:tian_sub} with a sequence of vertex set inputs equal to $\langle S_0, \ldots, S_{k-1}, S_k \rangle$.
\label{lem:power-dag-path}
\end{lemma}

Note that there are many sequences of successive recursive calls to Algorithm \ref{alg:tian_sub} that do not correspond to sequences of intrinsic sets forming a path in the intrinsic power DAG $\intgr(\G(V))$.  In particular, only recursive calls that marginalize a singleton vertex take part in such sequences of intrinsic sets.

\begin{proof}
We first show that every intrinsic set $S \in {\cal I}(\G(V))$ corresponds to an input of some call of Algorithm \ref{alg:tian_sub}.
Assume for a contradiction that the algorithm never reaches $S$.

Since $S$ is bidirected-connected, it is contained in some element in ${\cal D}(\G(V))$.   Since
Algorithm \ref{alg:tian_sub} is called from Algorithm \ref{alg:tian} with every element in ${\cal D}(\G(V))$, there exists a smallest
$S' \supset S$ called by Algorithm \ref{alg:tian_sub}.
However, by hypothesis
there is no
strict ancestral subset $D'$ of $S'$ in $\G(S',W)$ which contains $S$, where $W = V \setminus S'$.  Since $S'$ was called from Algorithm \ref{alg:tian_sub},
$S'$ is a single district in $\G(S',W)$, and since no ancestral subset contains $S$ it is the case that every $d \in S'$ is an ancestor
of some element of $S$ in $\G(S',W)$.  But then the only fixable  vertices in $S'$ are also in $S$; this contradicts the assumption that $S$ is intrinsic.

{
Next, fix a sequence $\langle S_0, \ldots, S_{k-1}, S_k \rangle$ forming a path to $S_k$ in $\intgr(\G(V))$, with the $\prec$-maximal vertex $v$ in $S_k$.
Note that the set $S_0$ is a district in the ancestral set $\overline{\hbox{pre}}_i$ for some $i$, and thus is one of the inputs to Algorithm \ref{alg:tian_sub}.

For every pair of sets $S_i, S_{i+1}$ adjacent in the sequence, there exists an element $w$ fixable in $\G[S_i]$ (and thus in $\phi_{V \setminus S_i}(\G(V))$) such that $S_{i+1}$ is a district in $\G[S_i \setminus \{ w \}]$ (and thus in $\phi_{V \setminus (S_i \setminus \{ w \})}(\G(V))$).
The conclusion follows since $\{ w \}$ is closed under descendants in $\G[S_i]$ and thus also in $\G(D, V \setminus S)) = \phi_{V \setminus S_i}(\G(V))$.  Thus, since, by induction, $S_i$ serves as an input to Algorithm \ref{alg:tian_sub}, so does $S_{i+1}$.
}
\end{proof}

Given an intrinsic $S_k$ in $\G(V)$, we call a fixing sequence for $S_k$ \emph{consistent with} $\intgr(\G(V))$ if, given a sequence of intrinsic sets $\langle S_0, \ldots, S_{k-1}, S_k \rangle$ forming a path to $S$ from the appropriate root node in $\intgr(\G(V))$, every vertex in $S_i \setminus S_j$ is fixed before every vertex in $S_j$, if $i < j$, and further for each $S_i$, the vertex $w_i$ associated with the power DAG transition from $S_i$ to $S_{i+1}$ is fixed before any other element in $S_i$.

Lemma \ref{lem:power-dag-path} implies the following result.


\begin{lemma}
For any kernel $q_S$ mentioned in {\rm(\ref{eqn:localknew0})} and {\rm(\ref{eqn:localknew})} for an ADMG $\G(V)$, obtained by some fixing sequence ${\bf w}$ consistent with $\intgr(\G(V))$,
there exists an implementation of Algorithms \ref{alg:tian} and \ref{alg:tian_sub} with a recursive call that has $q_S$ as input, and $q_S$ is implemented by Algorithm \ref{alg:tian_sub} by fixing from $p(V)$ in the order specified by ${\bf w}$.
\label{lem:kernel_map}
\end{lemma}
\begin{proof}
This is an immediate corollary of Lemma \ref{lem:power-dag-path}, and an induction on the recursive structure of Algorithm \ref{alg:tian_sub}.
%
%
\end{proof}

%
%

%
%
%

Let ${\cal P}_t(\G, V, \prec)$ be the set of distributions obeying 
restrictions in the list returned by Algorithms \ref{alg:tian} and \ref{alg:tian_sub}.

\begin{proposition}\label{thm:nested_implies_tian}
Let $\G(V)$ be an ADMG over vertex set $V$.  Then
\[
{\cal P}^{\rm n}(\G(V))
\subseteq
{\cal P}_t(\G, V, \prec).
\]
\end{proposition}

\begin{proof}
It suffices to show that every constraint found by Algorithms \ref{alg:tian}
and \ref{alg:tian_sub}, if given a graph $\G(V)$ as one of the inputs,
is implied by some constraint given by the global nested Markov property
for $\G(V)$.

All constraints found in Algorithm \ref{alg:tian} on line 6 are ordinary
conditional independence constraints.  Moreover, they are easily seen to follow
from the m-separation criterion, which forms a part of the global
nested Markov property (since the sets of nodes $T$ in which these constraint
are found are all reachable in $\G(V)$).

Consider some $D'$ obtained during a recursive call of Algorithm \ref{alg:tian_sub}.
By Lemma \ref{lem:graph_map}, $D'$ is a set of random vertices in a set ancestral
in a CADMG corresponding to a set reachable in $\G(V)$.
Therefore $D'$ is itself reachable, so the global nested Markov property implies that the
kernel $q_{D'}(x_{D'} \cmid x_W) = \phi_{V \setminus D'}(p(x_V); \G(V))$,
where $W=V\setminus D'$,
is Markov with respect to $\phi_{V \setminus D'}(\G(V))$.

If $(\pa_\G(S) \setminus S) \setminus
(\pa_\G(D') \setminus D')$ is non-empty,
$D'$ is m-separated from
$(\pa_\G(S) \setminus S) \setminus (\pa_\G(D') \setminus D')$
by
$\pa_\G(D') \setminus D'$
in
the graph $\G(D', W)^{|W}$ obtained from
$\G(D', W) = \phi_{V \setminus D'}(\G)$
in the usual way.
This implies that $D'$ is m-separated from the smaller set
$(\pa_\G(S) \setminus S) \setminus
\pa_\G(D')$ by
$\pa_\G(D') \setminus D'$
in $\G(D', W)^{|W}$, which is precisely the constraint on line 6.

%
Similarly, if the preconditions on line 9 hold,
$v$ is m-separated from a non-empty set
$\mb_\G(v, D') \setminus \mb_\G(v, E)$ given
$\mb_\G(v, E)$ in
$\G(D', W)^{|W}$.  This directly implies the constraint on line 10.
\end{proof}


\begin{proposition}\label{thm:tian_implies_nested}
Let $\G(V)$ be an ADMG with a vertex set
$V$.  Then
\[
{\cal P}_t(\G, V, \prec)
\subseteq
{\cal P}^{\rm n}(\G(V)).
\]
\end{proposition}
\begin{proof}
Throughout the proof we will ignore trivial independences of the form $X_A \ci X_{\emptyset} \,|\, X_C$.

Consider the list of restrictions arising from (\ref{eqn:localknew0}) and (\ref{eqn:localknew}) under an ordering $\prec$, where each kernel $q_S$ is constructed according to some sequence ${\bf w}$ that corresponds to a path from a root node to $S$ in $\intgr(\G(V))$.  We will show that this list is implied by the list of restrictions output by Algorithms \ref{alg:tian} and \ref{alg:tian_sub}.
That this list of restrictions exists follows by Lemma \ref{lem:kernel_map}.

That Algorithm \ref{alg:tian} implies all restrictions in (\ref{eqn:localknew0}) follows immediately by line 7.

By Lemma \ref{lem:kernel_map}, every kernel with a restriction in (\ref{eqn:localknew}) serves as an input to some call of Algorithm \ref{alg:tian_sub}.  For such a kernel $q_S$, we obtain
the following (possibly trivial) independence restrictions:
\begin{align}
\label{eqn:tian-1}
X_{D'} &\ci X_{(\pa(S) \setminus S) \setminus \pa(D')}\mid X_{\pa(D') \setminus D'} \quad {[q_{D'}]},\\
\label{eqn:tian-2}
X_{v} &\ci X_{(D' \cup \pa(D')) \setminus (\mb(v, E) \cup \{ v \})} \mid X_{\mb(v, E)} \quad {[q_{D'}]}.
\end{align}

It is sufficient to show that these restrictions imply:
\begin{align}
X_v \ci X_{(\mb(v, S) \setminus (S \setminus D')) \setminus \mb(v, E)} \mid X_{\mb(v, E)} \;\; [
\phi_{\langle \{ w \}, S \setminus ({E} \cup \{w\}) \rangle}(q_S, \G[S])
],
\label{eqn:local-consequence}
\end{align}
which is equivalent to (\ref{eqn:localknew}) by equating the following left hand side sets in the above equation with those in (\ref{eqn:localknew}) on the right hand side: $S = D$, $E = C$,
$\mb_{\G}(v, E) = \fm_\G(C) \setminus \{ v \}$,
$S \setminus D' = \{ w \}$, and
$\mb_{\G}(v, S) \setminus (S \setminus D') = \fm_\G(D) \setminus \{ w \}$.


By the weak union graphoid axiom and (\ref{eqn:tian-1}), since $v \in D'$, we conclude that:
\begin{align*}
X_{D'} &\ci X_{(\pa(S) \setminus S) \setminus \pa(D')}\mid X_{\pa(D') \setminus D'} \quad {[q_{D'}]}\\
& \implies\quad X_v \ci X_{(\pa(S) \setminus S) \setminus \pa(D')}\mid X_{(\pa(D') \cup D'){\setminus\{v\}}} \quad {[q_{D'}]}.
\end{align*}

Combining (\ref{eqn:tian-2}) with this independence
using the contraction graphoid axiom ($(Q \ci Y \mid Z) \land (Q \ci T \mid Y{\cup}Z) \Rightarrow (Q \ci Y{\cup}T \mid Z)$), we have:
\begin{align*}
X_v \ci X_{((\pa(S) \setminus S) \setminus \pa(D')) \cup ((\pa(D') \cup D') \setminus (\mb(v, E) {\cup \{ v \}}))} \mid X_{\mb(v, E)} \quad [q_{D'}]
\end{align*}
setting $Q = \{ v \}$, $Z = \mb_\G(v, E)$, $Y = 
(\pa(D') \cup D') \setminus (\mb_\G(v, E){\cup \{ v \}})$, $T = (\pa_\G(S) \setminus S) \setminus \pa_\G(D')$.

Note that since $D' = S \setminus \{ w \}$, $\mb(v, S) = \left( \left( \pa_{\G}(D') \cup D' \right) \setminus \{ v \}\right) \cup \left( \{ w \} \cup \pa_{\G}(w) \right)$,
and $\left( \pa_{\G}(S) \setminus S \right) \setminus \pa_{\G}(D') = \pa_{\G}(w) \setminus \left( S \cup \pa_{\G}(D') \right)$.

Thus, the above independence is equivalent to
\begin{align*}
X_v &\ci X_{(\mb(v, S) \setminus (S \setminus D')) \setminus \mb(v, E)} \mid X_{\mb(v, E)} \quad [q_{D'}].
\end{align*}



That this independence also holds in
\[
q_{E} = \phi_{\langle \{ w \}, S \setminus ({E} \cup \{w\}) \rangle}(q_S; \G[S]) = \phi_{S \setminus (E \cup \{ w \})}(q_{D'}; \G[D'])
\]
follows by the inductive application of the axiom of modularity (Proposition \ref{prop:ind-preserve-2}).
This establishes (\ref{eqn:local-consequence}).

\end{proof}



%
%
%
%
%
%
%
%
%

\begin{thma}{\ref{thm:tian_equals_nested}}
For an ADMG $\G(V)$, let ${\cal P}_{\rm t}(\G, V, \prec)$ be the set of densities
$p(x_V)$ in which the list of constraints found by Algorithm \ref{alg:tian} holds.
Then
${\cal P}_{\rm t}(\G, V, \prec) = {\cal P}^{\rm n}(\G(V))$.
\end{thma}
\begin{proof}
This is an immediate corollary of Propositions
\ref{thm:nested_implies_tian}, \ref{thm:tian_implies_nested}, and
Theorem \ref{thm:global_local}.
\end{proof}


\begin{thma}{\ref{r-fact}}
If $p(x_V) \in {\cal P}^{\rm n}(\G(V))$, then $p(x_V)$ r-factorizes with
respect to $\G$ and $\{ \phi_{V \setminus C}(p(x_V);
\G) \mid C \in {\cal I}(\G) \}$.
\end{thma}

\begin{proof}
This follows directly from the definition of r-factorization, the definition of
${\cal P}^{\rm n}(\G)$ and Theorem \ref{invariance}.
\end{proof}

\subsection{Comparison of Tian's Algorithm and Local Nested Markov Property}\label{sec:constraint-example}

Here we give an example, for the graph in Figure \ref{fig:counter1} of the
constraints found by Tian's Algorithm with the final vertex 6, and show how 
deeply nested the calls are.
For each constraint we give the independence that is found, together with the 
kernel that it appears in.

\eject

    $\{1,2,3,4,5,6\}$: no constraint.
    
    Consider ancestral subgraphs:
    \begin{enumerate}
        \item $\{1,2,3,5,6\}$: no constraint.

    new c-component of $6$ is $\{1,3,5,6\}$, so 
    consider ancestral subgraphs:
    \begin{enumerate}
    \item $\{3,5,6\}$: $6 \ci 2,3,5 \, [
    \phi_{\langle 4, 2, 1 \rangle}(q_V; \G)
    ]$;
    
    
    \item $\{1,5,6\}$: $1,5,6 \ci 2 \, [
    \phi_{\langle 4, 2, 3 \rangle}(q_V; \G)
    ]$;
    
    \item $\{1,3,6\}$: $1, 6 \ci 2,3 \, [
    \phi_{\langle 4, 2, 5 \rangle}(q_V; \G)
    ]$;
    
    \item $\{5,6\}$: $6 \ci 5 \, [
    \phi_{\langle 4, 2, 3, 1 \rangle}(q_V; \G)
    ]$;
    
    \item $\{3,6\}$: $6 \ci 2,3 \, [
    \phi_{\langle 4, 2, 5, 1 \rangle}(q_V; \G)
    ]$.
    
    
    
    \end{enumerate}
    
    \item $\{1,2,3,4,6\}$: $6 \ci 3 \mid 1,2,4\, [\phi_{ \langle 5 \rangle }(q_V; \G)]$.
    
    new c-component of 6 is $\{1,2,4,6\}$, so consider ancestral subgraphs:
    \begin{enumerate}
        \item $\{1,2,6\}$: $6 \ci 2 \mid 1 \, [
        \phi_{\langle 5, 3, 4 \rangle}(q_V; \G)
        ]$;
        
        
        \item $\{4,6\}$: $6 \ci 4 \, [
        \phi_{\langle 5, 3, 2, 1 \rangle}(q_V; \G)
        ]$.
        
        
    \end{enumerate}
    
    
    \item $\{1,2,5,6\}$: $1,5,6 \ci 2 \, [
    \phi_{\langle 4, 3 \rangle}(q_V; \G)
    ]$.
    
    new c-component of 6 is $\{1,5,6\}$, so consider ancestral subgraphs:
    \begin{enumerate}
        \item $\{5,6\}$: $6 \ci 5 \, [
        \phi_{\langle 4, 3, 2, 1 \rangle}(q_V; \G)
        ]$.
        
        
    \end{enumerate}
    
    
    \item $\{1,2,3,6\}$: $6 \ci 2,3 \mid 1 \, [
    \phi_{\langle 5, 4 \rangle}(q_V; \G)
    ]$.
    

    
    \item $\{4,5,6\}$: $6 \ci 4,5\, [
    \phi_{\langle 3, 2, 1 \rangle}(q_V; \G)
    ]$.
    
    
    \item $\{1,2,6\}$: $6 \ci 2 \mid 1 \, [
    \phi_{\langle 5, 4, 3 \rangle}(q_V; \G)
    ]$.
    

    \item $\{5,6\}$: $6 \ci 5 \, [
    \phi_{\langle 4, 3, 2, 1 \rangle}(q_V; \G)
    ]$.
    

        \item $\{4,6\}$: $6 \ci 4\, [
        \phi_{\langle 5, 3, 2, 1 \rangle}(q_V; \G)
        ]$.
    
    
    
    \end{enumerate}

Note that, although many of these constraints are redundant under the 
model, this is not immediately apparent without our Theorem \ref{thm:invariant}
that shows the order of fixing is unimportant. 
In particular, we highlight that Tian's algorithm gives the constraints 1(d), 
3(a) and 7 all of which give a marginal constraint between $X_5$ and $X_6$ under
kernels that are not \emph{a priori} the same.  In 
contrast, our algorithm gives many fewer syntactically equivalent independences in 
different kernels.

We also note that Tian's algorithm under the ordering where 3 is placed after 
6 does not, for example, give the constraint $6 \ci 2,3,5$ directly, but 
rather it has to be obtained by applying our graphoid axioms to the independences
provided.

It follows that Tian's algorithm corresponds neither to a global property, 
since it does not give all possible independences, nor a local property, 
as it repeats many equivalent constraints in different subgraphs.  

Indeed, we can note that when considering an edge in the power DAG, it will only 
possibly lead to a constraint if the intrinsic set we move to and its parents
lose more than one vertex.  The subgraph of the power DAG in Figure \ref{fig:powerdag1}
that corresponds to having removed the edges that do not imply a constraint is 
shown in Figure \ref{fig:powerdag1a}; it only has nine edges, corresponding 
to the constraints in Table \ref{tab:counter1constraints}.

\begin{table}
\begin{center}
\begin{tabular}{l|l|l}
\multicolumn{1}{c|}{edge in $\mathfrak{I}(\G)$}&\multicolumn{1}{c|}{constraint}&\multicolumn{1}{c}{kernel}\\
\midrule
123456 $\to$ 1246 & $X_6 \ci X_3 \mid X_{1,2,4}$ & $\phi_{\langle 5 \rangle}(q_V; \G)$\\
12456 $\to$ 156 & $X_6 \ci X_2 \mid X_{1,5}$ &
$\phi_{\langle 3, 4 \rangle}(q_V; \G)$
\\
1356 $\to$ 156 & $X_6 \ci X_2 \mid X_{1,5}$ &
$\phi_{\langle 4, 3 \rangle}(q_V; \G)$\\
1246 $\to$ 16 & $X_6 \ci X_2 \mid X_{1}$ &
$\phi_{\langle 5, 3, 4 \rangle}(q_V; \G)$\\
1356 $\to$ 16 & $X_6 \ci X_{2,3} \mid X_{1}$ &
$\phi_{\langle 4, 2, 5 \rangle}(q_V; \G)$\\
1356 $\to$ 6 & $X_6 \ci X_{2,3,5}$ &
$\phi_{\langle 4, 2, 1 \rangle}(q_V; \G)$\\
1456 $\to$ 6 & $X_6 \ci X_{4,5}$ &
$\phi_{\langle 3, 2, 1 \rangle}(q_V; \G)$\\
146 $\to$ 6 & $X_6 \ci X_{4}$ &
$\phi_{\langle 5, 3, 2, 1 \rangle}(q_V; \G)$\\
156 $\to$ 6 & $X_6 \ci X_{5}$ &
$\phi_{\langle 4, 3, 2, 1 \rangle}(q_V; \G)$\\
\end{tabular}
\caption{The constraints implied by the power DAG in Figure \ref{fig:powerdag1a}.}
\label{tab:counter1constraints}
\end{center}
\end{table}

Although there is some syntactic redundancy between the constraints 
corresponding to the edges $12456 \to 156$ and $1356 \to 156$, this only 
holds because we already know that the marginal distribution over the vertices 
prior to 6 is Markov with respect to the corresponding induced subgraph; in 
particular, because $X_5 \ci X_2 \mid X_1$ in
$\phi_{\langle 6, 4, 3 \rangle}(q_V; \G)$.


\begin{figure}
\begin{center}
\begin{tikzpicture}[>=stealth, node distance=1.5cm,
pre/.style={->,>=stealth,very thick,line width = 1.4pt}]
    \begin{scope}
    \tikzstyle{rv} = [minimum size=5mm, inner sep=.5mm]
\node[rv] (123456) {$123456$};
\node[rv, below of=123456] (12456) {$12456$};
\node[rv, below of=12456] (1456) {$1456$};
\node[rv, left of=1456, xshift=-1cm] (1356) {$1356$};
\node[rv, right of=1456, xshift=1cm] (1246) {$1246$};
\node[rv, below of=1456, xshift=-1.25cm] (156) {$156$};
\node[rv, below of=1456, xshift=1.25cm] (146) {$146$};
\node[rv, below of=156, xshift=1.25cm] (16) {$16$};
\node[rv, below of=16, xshift=0cm] (6) {$6$};
\draw[pre] (123456) -- (1246);
\draw[pre] (12456) -- (156);
\draw[pre] (1356) -- (156);
\draw[pre] (1356) to[bend right] (16);
\draw[pre] (1246) to[bend left] (16);
\draw[pre] (146) to[bend left] (6);
\draw[pre, color=green] (1456) to[bend left=20] (6);
\draw[pre] (156) -- (6);
\draw[pre, color=orange] (1356) to[bend right=20] (6);
  \end{scope}
  \end{tikzpicture}
\end{center}
\caption{The subgraph of the power DAG for the vertex 6 in $\G$ in Figure 
\ref{fig:counter1}, where the only edges included are those that could 
logically lead to a constraint.}
\label{fig:powerdag1a}
\end{figure}

%
%
%

{
\small
\bibliographystyle{chicago}
\bibliography{references}
}

\makeatletter\@input{xx_main.tex}\makeatother